\newcommand{\B}{\{0,1\}}
\newcommand\true{\mbox{\sc True}}
\newcommand\false{\mbox{\sc False}}
\newcommand{\poly}{\mathsf{poly}}
\newcommand{\polyltone}{\mathsf{poly}_{<1}}
\newcommand{\Cut}{\mathsf{Cut}}
\newcommand{\Cov}{\mathsf{Cov}}
\newcommand{\Var}{\mathsf{Var}}
\newtheorem{lemma}{Lemma}
\newtheorem{theorem}{Theorem}
\newtheorem{claim}{Claim}
\newtheorem{fact}{Fact}
\newtheorem{remark}{Remark}
\newtheorem{corollary}{Corollary}
\newtheorem{definition}{Definition}
\newcommand{\defeq}{\stackrel{\textup{def}}{=}}
\newcommand{\eps}{\varepsilon}
\renewcommand{\epsilon}{\varepsilon}
\newcommand{\nfrac}{\nicefrac}
\newcommand{\E}{\mathop{\mathbf{E}}}
\newcommand{\activedeg}{\mathsf{actdeg}}
\newcommand{\activedist}{\mathsf{actdist}}
\newcommand{\cnt}{\mathsf{count}}
\newcommand{\Act}{\mathsf{Active}}
\newcommand{\varest}{\mathsf{Uvar}}
\newcommand{\meanest}{\mathsf{Lmean}}
\newcommand{\val}{\mathsf{val}}
\newcommand{\sdp}{\mathsf{SDP}}
\newcommand{\maxcut}{\mbox{\sc Max-Cut}\xspace}
\newcommand\maxbisection{\mbox{\sc Max-Bisection}\xspace}
\newcommand{\calP}{{\mathcal P}}
\newcommand{\calE}{{\mathcal E}}
\newcommand{\calL}{{\mathcal L}}
\newcommand{\R}{{\mathbf{R}}}
\newcommand{\flag}{\mathsf{flag}}
\renewcommand{\v}[1]{\mathbf{#1}}
\newcommand{\algowmaxcut}{{\sc Alg-Sim-MaxCUT}}
\newcommand{\dsay}[1]{}
\begin{document}

\title{\Large Near-optimal approximation algorithm for simultaneous $\maxcut$}
\author{Amey Bhangale
\thanks{Department of Computer Science, Weizmann Institute of Science.
Research supported in part by NSF grant CCF-1253886. This work was
done when the author was a graduate student at Rutgers University, New Brunswick, NJ, USA.}
\and
Subhash Khot\thanks{Department of Computer Science, Courant Institute of Mathematical Sciences, New York University. Supported by the NSF
Award CCF-1422159, the Simons Collaboration on Algorithms and Geometry and the Simons Investigator Award.}
\and
Swastik Kopparty \thanks{Department of Mathematics \& Department of
  Computer Science, Rutgers University.
  Research supported in part by a Sloan Fellowship, NSF grants CCF-1253886 and CCF-1540634, and BSF grant 2014359.}
  \and
Sushant Sachdeva\thanks{Department of Computer Science, University of Toronto. This work was
done when the author was a research scientist at Google, Mountain View,
CA, USA}
\and
Devanathan Thiruvenkatachari\thanks{Department of Computer Science, Courant Institute of Mathematical Sciences, New York University. Supported by same sources as Khot.}
}
\maketitle

\begin{abstract}
In the simultaneous $\maxcut$ problem, we are given $k$ weighted graphs on the same set of $n$ vertices, and the goal is to find a cut of the vertex set so that the minimum, over the $k$ graphs, of the cut value is as large as possible. Previous work \cite{BKS15} gave a polynomial time algorithm which achieved an approximation factor of $1/2 - o(1)$ for this problem (and an approximation factor of $1/2 + \epsilon_k$ in the unweighted case, where $\epsilon_k \rightarrow 0$ as $k \rightarrow \infty$).

In this work, we give a polynomial time approximation algorithm for simultaneous $\maxcut$ with an approximation factor of $0.8780$ (for all constant $k$). The natural SDP formulation for simultaneous $\maxcut$ was shown to have an integrality gap of $1/2+\epsilon_k$ in \cite{BKS15}. In achieving the better approximation guarantee, we use a stronger Sum-of-Squares hierarchy SDP relaxation and  a rounding algorithm based on Raghavendra-Tan \cite{RT12}, in addition to techniques from \cite{BKS15}.

\end{abstract}

\section{Introduction}

In this paper, we give near-optimal approximation algorithms
for the simultaneous \maxcut problem. Here we are given a collection
of weighted graphs $G_1, G_2, \ldots, G_k$ on the same
vertex set $V$ of size $n$. Our goal is to find a partition of the vertex set
$V$ into two parts, such that in {\em every} graph,
the total weight of edges going between the two parts is large.
The $k=1$ case is the classical \maxcut problem, and the approximability
of this problem has been
extensively studied~\cite{FL92,GW95,H01,KKMO07,MOO05,OW08}.
This paper studies the approximability of this problem for constant $k$.

We fix some convenient notation. Let the weighted graphs $G_1, \ldots, G_k$
be given by weight functions $\calE_1, \ldots, \calE_k$, which assign
to each pair in ${V \choose 2}$ a weight in $[0,1]$. We assume
that for each $i \in [k]$, the total weight of all edges under $\calE_i$ equals $1$. Let $f: V \to \{0,1\}$
be a function, which we view as a partition of the vertex set. We define $\val(f, \calE_i)$ to be the total weight (under $\calE_i$) of the edges
cut by the partition $f$. Given this setup, we can formally state the 
notions of approximation that we consider.
\begin{itemize}
\item {\bf $\alpha$-minimum approximation:} Let $c$ be the maximum, over all 
partitions $f^* : V \to \{0,1\}$, of the quantity $\min_{i \in [k]} \val(f^*, \calE_i)$. The goal is to output an $f: V \to \{0,1\}$ such that
$\min_{i \in [k]} \val(f, \calE_i) \geq \alpha \cdot c$.
\item {\bf $\alpha$-Pareto approximation:} Let $c_1, c_2, \ldots, c_k$
be given such that there exists $f^* : V \to \{0,1\}$ with
$\val(f^* , \calE_i) \geq c_i$ for each $i \in [k]$. The goal is to output
an $f: V \to \{0,1\}$ such that $\val(f, \calE_i) \geq \alpha \cdot c_i$ for all $i \in [k]$.
\end{itemize}

For $k = 1$, there is a celebrated polynomial time $\alpha_{GW} = 0.8786\ldots$ factor (Pareto) approximation algorithm
by Goemans and Williamson~\cite{GW95}. This approximation is in both the minimum and Pareto senses. 
Furthermore, it is Unique-Games hard to achieve a better approximation factor than this~\cite{KKMO07}, and
the entire polynomial time ``approximation curve" is also known.

For larger (but constant) $k$, far less is understood. Clearly, the hardness results from the $k=1$ case carry over, and 
thus it is Unique-Games hard to approximate this to a factor better than $\alpha_{GW}$.~\cite{ABG06} gave a polynomial time $0.439$-Pareto  approximation algorithm
for this problem for the case $k = 2$. Subsequently,~\cite{BKS15} gave a polynomial time $(1/2-\epsilon)$-Pareto approximation algorithm
for this problem. For the case of unweighted graphs\footnote{We call an instance of simultaneous \maxcut \ {\em unweighted} if for any $i$, all the nonzero weight edges under $\calE_i$ 
have the same weight.}, \cite{BKS15} showed that there is a polynomial time $(1/2 + \Omega(1/k^2))$-minimum approximation
algorithm. Furthermore, \cite{BKS15} gave a matching integrality gap of $(1/2 + O(1/k^2))$ for a natural SDP relaxation
of the minimum approximation problem.

Our main result is a polynomial time $0.8780$-factor Pareto approximation algorithm for simultaneous \maxcut for arbitrary constant $k$.

\medskip
\begin{theorem}
\label{theorem:main}
 For all constant $k$ and $c>0$, given weighted graphs $(G_i(V,\calE_i))_{i=1}^k$ with $|V|=n$ and where all non-zero edge weights are lower bounded by $exp(-n^c)$, there is a $\poly(n)$ time algorithm which computes a $0.8780$-factor Pareto approximation (and hence min approximation) to the 
simultaneous \maxcut problem with $k$ instances.
\end{theorem}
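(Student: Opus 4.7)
The plan is to combine a high-degree Sum-of-Squares (SoS) relaxation with a rounding in the spirit of Raghavendra--Tan~\cite{RT12}, together with the heavy-edge enumeration technique of~\cite{BKS15}. Fix $\epsilon>0$ such that $\alpha_{GW}-O(\epsilon)\geq 0.8780$, and describe the algorithm for the Pareto setting with targets $c_1,\dots,c_k$.

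First, I would form an SoS pseudo-distribution of constant degree $d=d(k,\epsilon)$ over $\{0,1\}$-valued partitions $f:V\to\{0,1\}$, including the constraints $\widetilde{\E}[\val(f,\calE_i)]\geq c_i$ for each $i\in[k]$; the weight lower bound $\exp(-n^c)$ ensures the SDP can be solved to the accuracy required in $\poly(n)$ time. In parallel, I would single out the set $H\subseteq V$ of \emph{heavy} vertices, namely those incident in some $\calE_i$ to an edge of weight at least a threshold $\delta=\delta(k,\epsilon)$. Since each $\calE_i$ has total weight $1$, $|H|\leq O(k/\delta)$, which is constant, so one can enumerate all $2^{|H|}$ Boolean assignments to $H$ and add each as an SoS constraint; the ``correct'' assignment is obtained in one branch of the enumeration. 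Crucially, after fixing $H$, every remaining edge (one endpoint in $V\setminus H$) has weight at most $\delta$ in every instance.

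For each heavy-vertex guess, I would invoke the Raghavendra--Tan global correlation rounding: sample a set $T\subseteq V$ of size $O(k/\epsilon^3)$ from the pseudo-distribution and condition on the sampled values. A multi-instance version of the RT averaging lemma (summing the mutual-information decrement across the $k$ graphs) ensures that after conditioning, the average absolute pseudo-covariance $|\widetilde{\Cov}(f_u,f_v)|$ taken over a random edge of $\calE_i$ is at most $\epsilon$, simultaneously for every $i\in[k]$. Applying Goemans--Williamson hyperplane rounding to the conditional vectors and running the usual edge-by-edge GW analysis then yields
\[
\E[\val(f,\calE_i)]\;\geq\; (\alpha_{GW}-O(\epsilon))\cdot c_i \quad\text{for every } i\in[k],
\]
where the expectation is over both the conditioning and the random hyperplane.

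The principal obstacle is converting these per-instance expectation bounds into a single cut that is simultaneously good for all $k$ instances, since the rounding randomness is shared across instances and a naive union bound fails. This is where the heavy-edge enumeration pays off: with every remaining edge of weight at most $\delta$, a concentration argument on hyperplane rounding (following the light-edge analysis of~\cite{BKS15}) shows that $\val(f,\calE_i)$ deviates from its conditional expectation by at most $\epsilon\cdot c_i$ with probability at least $1-1/k^2$, provided $\delta$ is chosen small enough in terms of $k,\epsilon$ and the $c_i$. A union bound over $i\in[k]$ then gives positive probability that $\val(f,\calE_i)\geq (\alpha_{GW}-O(\epsilon))\cdot c_i$ simultaneously for every $i$. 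Repeating the randomized rounding $\poly(k)$ times over all heavy-vertex branches and returning the best cut (as measured by $\min_i \val(f,\calE_i)/c_i$, and verified against the SoS value) achieves the claimed $0.8780$-factor Pareto approximation.
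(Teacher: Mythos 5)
Your plan correctly identifies the three building blocks (SoS relaxation, RT-style conditioning, concentration plus union bound) that the paper uses, and the idea of enumerating over a constant set of ``special'' vertices is in the right spirit. However, there are several concrete gaps that would cause the argument to fail.

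The set $H$ of vertices incident to a \emph{heavy edge} does not capture what actually causes non-concentration. A vertex $v$ can have weighted degree close to $1$ in some instance $\calE_i$ while every edge incident to it has weight far below $\delta$; such a $v$ is absent from $H$, yet $\val(f,\calE_i)$ is essentially determined by $f(v)$ alone, so no concentration argument can work regardless of how small $\delta$ is. This is why the paper's preprocessing repeatedly pulls in vertices of high \emph{active weighted degree} rather than heavy-edge endpoints, and why instances for which this fails even after $t$ rounds are declared ``high variance'' and are handled by a wholly separate perturbation argument in post-processing (Lemmas~\ref{lemma:earlyvariables} and~\ref{lemma:maxcut_perturbationeffect}); they cannot be rescued by RT conditioning plus Chebyshev. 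Furthermore, even for instances you can make well-behaved, the RT-style variance bound (Lemma~\ref{lemma:variance_final}) is of the form $\Var(Y_\ell) \le \polyltone(\delta_{\mathrm{RT}})\cdot \activedeg_{S^\star}(\ell)^2$, i.e.\ in units of the active degree, not in units of $c_i$. Your claim that the cut value ``deviates by at most $\epsilon c_i$'' requires $\Var \ll (\epsilon c_i)^2$, and converting the active-degree-scaled bound into that requires a lower bound $\E[Y_\ell]=\Omega(\epsilon)\cdot \activedeg_{S^\star}(\ell)$ — which is exactly the extra SDP constraint~(\ref{eq:activedegreecut}) that your relaxation is missing. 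Proving that this constraint is even feasible (Lemma~\ref{lemma:sdp_feasible}) in turn needs the low-variance property supplied by the preprocessing, so the pieces are tightly linked.

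Separately, plain Goemans--Williamson hyperplane rounding on the conditional vectors $\v{v_i}$ does not have the property that underpins the RT analysis (Lemma~\ref{lemma:RT_vec_to_mi}/\ref{lemma:RT_finalMI}, and Lemma~\ref{lemma:low_mi_rounded} here): namely that near-orthogonality of the $\v{w_i}$ forces near-independence of the rounded bits $y_i$. GW rounding produces unbiased $y_i$, so two vertices with strongly biased pseudo-marginals (hence $\v{w_i}$ nearly zero and nearly orthogonal) become heavily correlated through the shared $\v{v_\emptyset}$ component, and $I(y_i;y_j)$ can be large. One must instead round by projecting out $\v{v_\emptyset}$ and re-injecting the bias via a function $f_R$ of $\mu_i$ as in Figure~\ref{fig:rounding}; with the RT choice $f_R(x)=x$ this only yields a $0.85$ factor, and the paper needs a computer-searched $f_R$ and a computer-assisted proof of Lemma~\ref{lemma:approx_guarantee} to reach $0.878001$. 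Your claim of $\alpha_{GW}-O(\epsilon)$ via GW rounding therefore does not follow from the RT analysis you invoke.
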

\medskip

\begin{remark}
\label{remark:edge weights}
We assume that the non-zero edge weights are lower bounded by $exp(-|V|^c)$ for some constant $c>0$. We are interested in an algorithm which runs in time polynomial in $|V|$ and hence it is natural to assume the edge weights are lower bounded by $exp(-|V|^c)$ as otherwise the bit complexity of the input will be super polynomial in $|V|$.
\end{remark}

\begin{remark}
Our approximation ratio matches the Goemans-Williamson constant $\alpha_{GW} = 0.8786\ldots$ up to three decimal places. It might be possible to improve the approximation ratio through small modifyications our rounding procedure. However, we believe that getting the exact $\alpha_{GW}$-approximation (if it exists) might require new techniques. See Remark~\ref{remark:roundingf} for more details.
\end{remark}

We give a brief overview of ideas involved in our algorithm next.
The main ingredients of the algorithm are: a sum-of-squares hierarchy SDP relaxation, a generalization of the ~\cite{RT12}, \cite{ABG12} approach to rounding
such relaxations, and some ideas from~\cite{BKS15}.

\subsection{Overview of the algorithm}

We begin by considering the unweighted case; later we will discuss how to remove this restriction. One crucial observation about the unweighted case is that if there are enough edges in every graph (as a function of $k$), then a random cut simultaneously cuts a constant fraction of edges from each graph with high probability. Thus, we can always assume that each target value is $c_i = \Omega_k(1)$, which is a constant for a constant $k$.

There is a natural SDP relaxation for the simultaneous \maxcut problem, generalizing the Goemans-Williamson SDP for the
$k = 1$ case. If we solve this SDP and round the resulting vector solution using the Goemans-Williamson hyperplane rounding procedure,
this gives us a distribution of partitions of the vertex set $V$, such that for {\em each} $i \in [k]$, the total
weight of edges cut in instance $i$ is at least $\alpha_{GW}$ times the corresponding SDP cut value. However, unlike 
in the $k=1$ case, this
does not guarantee the existence of a single partition of $V$ which is achieves a large cut value for all the $k$ instances simultaneously!
This distinction between distributions of solutions which are good in expectation for each instance and 
single solutions that are simultaneously good for all instances is at the heart of the difficulty in
designing simultaneous approximation algorithms.

One of the basic ingredients underlying mathematical programming relaxation hierarchies for combinatorial 
optimization problems is the idea of expanding the search space, from the discrete space of pure assignments
to the continuous space of distributions over assignments. For simultaneous approximation of \maxcut beyond a 
factor $1/2$, this idea alone is not enough. An example from~\cite{BKS15} shows that there are cases of simultaneous \maxcut on $k$-instances,
for which there is a distribution of partitions of $V$ cutting $(1 - \frac{1}{k})$-fraction of
edges {\em in expectation} for each instance, but for which any single partition of $V$, there is an instance 
$i \in [k]$, such that at most $1/2$ of the edges in instance $i$ are cut by the partition.
This is where the sum-of-squares SDP hierarchy comes in -- even though it is also modeled on the
idea of expanding the search space to distributions of assignments --  it allows us to {\em condition} on partial
assignments and impose a constraint that the SDP cut value is large in expectation for each instance and for every possible
conditioning on a small number of variables. This is what allows us to overcome the aforementioned obstacle.

Having formulated the SDP relaxation, we now discuss the rounding procedure.
The motivating observation is this: if the rounding procedure is such that for each
instance the expected cut value is large, and further the cut value is concentrated around its expectation
with high probability, then by a union bound, the rounding procedure will produce a cut
that is simultaneously good for all instances. The rounding procedure we will use will
be closely related to the Goemans-Williamson rounding (but different -- it was found by computer
search given various technical conditions required by the rest of the algorithm).
Our algorithm now tries to improve the concentration
of the cut-value produced by the rounding procedure, via a beautiful information-theoretic approach of Raghavendra-Tan~\cite{RT12}. 
If the cut-value for a certain instance turns out to be not concentrated under the rounding procedure, then it must be because
of high correlation between many pairs of edges of that instance (more precisely, correlation between the events that the edge
is cut). This in turn means that conditioning on the variables in a random edge should significantly decrease the
amount of entropy of the rounded cut. Iterating this several times, and using the fact that the initial entropy is not too large,
we conclude that conditioning on a small number of variables leads to good concentration for the rounding procedure.
The key point is that the sum-of-squares SDP relaxation we use gives us access to a vector solution for the
conditioned SDP, with the promise that the SDP cut-value (and hence the expected integral cut-value) is still large.
By the concentration property and a union bound, we get a simultaneously good cut.
This completes the description of the algorithm in the unweighted case.

To handle the general weighted case, we essentially need to overcome few technical obstacles.
Following~\cite{BKS15}, we add a preprocessing and postprocessing phase.
The preprocessing phase identifies ``wild" instances, i.e. those instances with an abnormally large number
of high (weighted-)degree vertices (which would increase the variance of the cut value of that instance
under random rounding).
Then the SDP based algorithm described above is run only on the ``tame" instances.

With conditioning on constantly many variables, we can only manage to bring the variance down to arbitrarily small constant.  Hence, in order to use second moment method to get concentration, we would need a good lower bound on the expected value of a cut given by our rounding procedure. If the graphs are weighted then it is not necessarily true that the simultaneous cut value is large for all instances.  One important property of the tame instances we used is that they have a good simultaneous $\maxcut$ value. We crucially use this property while formulating the SDP for tame instances. 

Finally in the postprocessing phase, we find suitable assignment to the high degree vertices
of the wild instances to ensure that those instance have a large cut value (without spoiling
the large cut value of the tame instances that the SDP guaranteed) -- this uses a new and much simpler 
perturbation argument compared to \cite{BKS15}.

This concludes the high-level description of the algorithm.

\subsection{Note about the rounding procedure}
\label{section:noterounding}
We mentioned earlier that our SDP solution after conditioning on a small number of variables is rounded by a rounding algorithm similar to the Goemans-Williamson rounding algorithm, but is different. We discuss this rounding procedure here, and compare it to the previous results that used similar rounding procedures.

For convenience, we switch the notation from $0/1$ to $+1,-1$, such that any function $f:V \rightarrow \{-1,+1\}$ defines a cut in a natural way. Define the bias of a $\{+1,-1\}$ random variable $x$ as $\E[x]$. The SDP solution induces a consistent local distribution on every set of variables of size at most some constant $r$, and we define the {\em SDP-bias} of a variable as the bias with respect to this local distribution. For a given rounding procedure, we define the {\em rounding-bias} of a variable as the bias with respect to the rounding procedure. Note that in the original hyperplane rounding of Goemans-Williamson, the rounding-bias of each vertex is 0. 

In the rounding procedure for the \maxbisection from \cite{RT12}, the rounding-bias for each variable induced by the rounding procedure is the same as the SDP-bias.
Their algorithm gave a 0.85 approximation for \maxbisection, and using the same bias function for the rounding along with the analysis of our algorithm, we can get 0.85 approximation for simultaneous \maxcut as well (See Section~\ref{section:round_analysis} for more details). The approximation factor given by \cite{RT12} was subsequently improved in \cite{ABG12} to 0.8776, where they used new techniques to relax the restriction on the choice of the bias function. Nevertheless, the rounding procedure was still quite constrained by the need to maintain the balance of the cut, as required by the \maxbisection problem.

In our setting, we do not need equal sized partition of the vertex set, we have more freedom in our rounding procedure with respect to the rounding-bias. It turns out that we only have to ensure that when the bias of a variable is high, the side of the cut it falls on is almost fixed (that this condition suffices heavily depends on features of our algorithm and its analysis). This helps us achieve an improved approximation factor of 0.8780. The rounding function we come up with was arrived at by computer search (along with some trial-and-error). 

The approximation ratio for our rounding procedure is proved by a computer assisted prover, using techniques similar to those of \cite{Sjo09} and \cite{ABG12}.

\subsection{Other related work}
The simultaneous \maxcut problem is a special case of the simultaneous approximation problem for general constraint
satisfaction problems. This general problem was studied in~\cite{BKS15}, where it was shown that there is
a polynomial time constant factor Pareto approximation algorithm for every simultaneous CSP (with approximation factor
independent of $k$). The algorithm there was based on understanding the structure of 
CSP instances whose value is highly concentrated under a random assignment to the variables, in addition
to linear-programming. It was also observed that there are CSPs for which the best polynomial time approximation factor for the simultaneous version
(with $k > 1$) is {\em different} from the best polynomial time approximation factor achievable in the standard $k =1$ case (assuming $P \neq NP$).
This makes  the study of simultaneous approximation factors very interesting.

The simultaneous MAXSAT problem was studied in~\cite{GRW11}, where a $1/2$-Pareto approximation algorithm was given.
For bounded width MAXSAT, the approximation factor was improved to $(3/4 - \epsilon)$ in~\cite{BKS15}.

It remains an open and very interesting problem to determine for which CSPs the simultaneous approximation problem for $k > 1$ is harder than the classical $k=1$ case.

\section{Preliminaries}

\subsection{Simultaneous $\maxcut$}
Let $V$ be a vertex set with $|V|=n$. We use the set $[n]$ for the vertex set $V$ for convenience. We are given $k$ graphs $G_1, \ldots, G_k$ on the vertex set $V$. Let $\calE_\ell : [n]\times [n] \rightarrow \R^{\geq 0}$ denote the edge weights of graph $G_\ell$ where the edge weights are normalized such that total weight of edges in each instance is $1$. As mentioned in Remark~\ref{remark:edge weights}, we assume that all edge weights are either $0$ or lower bounded by $2^{-n^c}$ for some $c>0$. We'll use $\calE_\ell$ to denote the edge set of graph $G_\ell$ and also the distribution of the edges based on the weights. For each instance $\ell$, we are given a target cut value $c_\ell$ that we would like to achieve (and we know is possible).

A partition $(U, \overline{U})$ of $V$ is said to be an $\alpha$-approximation if for each instance $G_\ell$,
$$\Cut_\ell(U, \overline{U}) \ge \alpha \cdot c_\ell.$$

\subsection{Information Theory}
In this section, we define and state some facts about entropy and mutual information between random variables.
\begin{definition}[Entropy]
Let X be a random variable taking values in $[q]$ then, entropy of $X$ is defined as:
$$H(X) := \sum_{i\in [q]}\Pr[X=i]\log \frac{1}{\Pr[X=i]}.$$
\end{definition}

\begin{definition}[Conditional Entropy]
Let X, Y be jointly distributed random variable taking values in $[q]$ then, the conditional entropy of $X$ conditioned on $Y$ is defined as:
$$H(X|Y) = E_{i\in [q]} H(X|Y=i).$$
\end{definition}

The following observations can be made about entropy of a collection of random variables. \\
Entropy of a collection of random variables cannot exceed the sum of their entropies.
\begin{fact}
\label{claim:entropy_sum}
$H(X_1, X_1,\ldots, X_n) \leq \sum_{i=1}^n H(X_i).$
\end{fact}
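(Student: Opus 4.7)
The plan is to prove subadditivity by induction on $n$, using the chain rule for entropy together with the fact that conditioning cannot increase entropy. The base case $n=1$ is trivial, and for the base step $n=2$ I would first establish the two-variable identity $H(X_1, X_2) = H(X_1) + H(X_2 \mid X_1)$ by direct calculation from the definitions, expanding $\log \frac{1}{\Pr[X_1=a, X_2=b]}$ as $\log \frac{1}{\Pr[X_1=a]} + \log \frac{1}{\Pr[X_2=b \mid X_1=a]}$ and summing. This gives the chain rule.

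Next I would prove the key monotonicity statement $H(X_2 \mid X_1) \leq H(X_2)$. The cleanest route is Jensen's inequality applied to the concave function $t \mapsto t \log(1/t)$: writing $\Pr[X_2 = b] = \sum_a \Pr[X_1=a]\Pr[X_2=b \mid X_1=a]$ and comparing $H(X_2)$ term by term with the average over $a$ of $H(X_2 \mid X_1 = a)$ yields $H(X_2) \geq H(X_2 \mid X_1)$. Combining with the chain rule gives $H(X_1, X_2) \leq H(X_1) + H(X_2)$.

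For the inductive step, assuming the claim for $n-1$, I would apply the two-variable bound with $X := (X_1, \ldots, X_{n-1})$ and $Y := X_n$ to obtain
\[
H(X_1, \ldots, X_n) \leq H(X_1, \ldots, X_{n-1}) + H(X_n),
\]
and then invoke the inductive hypothesis on the first term. This yields the desired bound $\sum_{i=1}^n H(X_i)$.

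I do not expect any real obstacle here; this is a standard inequality. The only mild care needed is handling zero-probability events in the definition (using the convention $0 \log \frac{1}{0} = 0$) and being explicit about the conditional entropy $H(X_2 \mid X_1) = \sum_a \Pr[X_1 = a] H(X_2 \mid X_1 = a)$ when applying Jensen's inequality, since the excerpt's Definition of conditional entropy uses a slightly ambiguous $E_{i \in [q]}$ notation that should be interpreted as an expectation over the marginal distribution of the conditioning variable.
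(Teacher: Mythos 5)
Your proof is correct; it is the standard textbook argument (chain rule plus ``conditioning reduces entropy'' via concavity of $t \mapsto t\log(1/t)$, then induction). The paper states this as Fact~\ref{claim:entropy_sum} without proof, treating it as a well-known information-theoretic identity, so there is no paper proof to compare against; your write-up fills that in correctly. One minor observation you are right to flag: the paper's Definition of conditional entropy writes $E_{i\in[q]} H(X\mid Y=i)$ without explicitly indicating the weighting by $\Pr[Y=i]$, and your reading of it as the expectation over the marginal of $Y$ is the intended one and is what makes the Jensen step go through.
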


Entropy never decreases on adding more random variables to the collection.
\begin{fact}
\label{claim:entropy_setminus}
$H(X_1, X_2 | Y) \geq H(X_1 | Y).$
\end{fact}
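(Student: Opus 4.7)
The plan is to prove this standard information-theoretic inequality directly from the chain rule for conditional entropy. First I would write
\[
H(X_1, X_2 \mid Y) = H(X_1 \mid Y) + H(X_2 \mid X_1, Y),
\]
which is the usual chain rule applied after conditioning on $Y$ (formally, it follows by averaging the unconditional chain rule over the values of $Y$, using the definition of conditional entropy given just above in the excerpt). Then I would invoke the non-negativity of entropy, namely $H(X_2 \mid X_1, Y) \geq 0$, which is immediate from the definition since $H(X_2 \mid X_1 = x_1, Y = y) = \sum_i p_i \log(1/p_i) \geq 0$ for any probability distribution $\{p_i\}$ on $[q]$. Combining these two observations yields the claim.

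The only mild obstacle is that the chain rule has not been stated as a named fact in the preliminaries, so I would either state and prove it in one line or derive the inequality slightly more directly. An alternative plan is to argue from first principles: for each fixed value $y$ of $Y$, one has $H(X_1, X_2 \mid Y = y) \geq H(X_1 \mid Y = y)$ because the marginal distribution of $X_1$ conditioned on $Y = y$ has entropy at most that of the joint $(X_1, X_2)$ conditioned on $Y = y$ (grouping outcomes of a distribution can only decrease entropy, by the log-sum inequality or by the concavity argument $\sum_{x_2} p(x_1, x_2 \mid y) \log(1/p(x_1 \mid y)) \leq \sum_{x_1, x_2} p(x_1, x_2 \mid y) \log(1/p(x_1, x_2 \mid y))$). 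Averaging over $y$ with the weights $\Pr[Y = y]$ then gives the desired inequality. Either route is essentially a one-line proof; I would go with the chain rule derivation since it is the cleanest and matches the style of the subsequent material that uses these facts.
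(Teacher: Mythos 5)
Your proof is correct. The paper itself does not prove this statement; it is listed in the preliminaries (along with Facts~\ref{claim:entropy_sum} and~\ref{claim:entropy_cond}) as a standard information-theoretic fact with no argument given. Your chain-rule derivation $H(X_1,X_2\mid Y)=H(X_1\mid Y)+H(X_2\mid X_1,Y)\ge H(X_1\mid Y)$, using non-negativity of conditional entropy, is the standard proof and is exactly what one would supply here; the alternative pointwise argument (condition on each value $y$ and average) is equally fine and is really the same proof unwound. No gap.
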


Conditioning can only decrease the entropy.
\begin{fact}
\label{claim:entropy_cond}
$H(X|Y) - H(X|Y, Z) \geq 0.$
\end{fact}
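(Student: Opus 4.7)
The plan is to reduce the inequality to the nonnegativity of the conditional mutual information $I(X;Z\mid Y) := H(X\mid Y) - H(X\mid Y,Z)$, and then derive that nonnegativity from the nonnegativity of Kullback-Leibler divergence via Jensen's inequality. First I would unfold the two conditional entropies using the definition given just above: $H(X\mid Y) = \E_y\, H(X\mid Y{=}y)$ while $H(X\mid Y,Z) = \E_{(y,z)}\, H(X\mid Y{=}y, Z{=}z) = \E_y\, \E_{z\mid y}\, H(X\mid Y{=}y, Z{=}z)$. Subtracting, the quantity of interest becomes $\E_y\bigl[ H(X\mid Y{=}y) - \E_{z\mid y}\, H(X\mid Y{=}y, Z{=}z) \bigr]$, and by the standard identity $H(X) - \E_z\, H(X\mid Z{=}z) = I(X;Z)$ applied to the joint distribution of $(X,Z)$ conditioned on $Y=y$, the inner bracket equals $I(X;Z)$ under that conditional law.

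It therefore suffices to establish $I(X;Z) \geq 0$ for any pair of jointly distributed discrete random variables. Writing out $I(X;Z) = \sum_{x,z} p(x,z) \log \frac{p(x,z)}{p(x)\,p(z)}$ identifies it as the Kullback-Leibler divergence $D(p_{XZ}\,\|\,p_X\,p_Z)$, so the remaining task is to show $D(p\,\|\,q) \geq 0$ for arbitrary probability distributions $p,q$ on the same alphabet. For this I would invoke Jensen's inequality on the concave function $\log$: $-D(p\,\|\,q) = \sum_x p(x) \log \frac{q(x)}{p(x)} \leq \log \sum_x p(x) \cdot \frac{q(x)}{p(x)} = \log \sum_x q(x) = \log 1 = 0$. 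Averaging the resulting bound $I(X;Z\mid Y{=}y) \geq 0$ over $y$ then yields $H(X\mid Y) - H(X\mid Y, Z) \geq 0$, as required.

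There is essentially no obstacle here; this is a textbook calculation. The only care needed is bookkeeping around the support of the joint distribution: any term with $p(x,z) = 0$ is taken to contribute $0$ via the convention $0 \log 0 = 0$, and whenever $p(x,z) > 0$ the marginals $p(x)$ and $p(z)$ are automatically strictly positive, so every log-ratio appearing in the calculation is well-defined. With these standard conventions in place the argument goes through verbatim for the finite alphabet $[q]$ of the fact's statement.
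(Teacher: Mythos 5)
Your proof is correct and complete: you reduce $H(X\mid Y) - H(X\mid Y,Z)$ to $\E_y\bigl[I(X;Z \mid Y{=}y)\bigr]$, identify each inner term as a KL divergence, and deduce nonnegativity from Jensen applied to $\log$, with the appropriate $0\log 0 = 0$ convention to handle support issues. The paper itself states this as a Fact without proof (it is treated as a standard piece of information-theoretic background alongside Facts~\ref{claim:entropy_sum} and~\ref{claim:entropy_setminus}), so there is no paper proof to compare against; your argument is exactly the textbook derivation one would supply.
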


\begin{definition}[Mutual Information]
Let X, Y be jointly distributed random variable taking values in $[q]$ then, the mutual information between $X$ and $Y$ is defined as:
$$ {I(X;Y) := \sum_{i,j\in [q]}\Pr[X=i, Y=j]\log \frac{\Pr[X=i, Y=j]}{\Pr[X=i]\Pr[Y=j]}}.$$
\end{definition}

\begin{theorem}(Data Processing Inequality)
\label{theorem:DPI}
If $X, Y, W, Z$ are random variables such that $X$ is fully-determined by $W$ and $Y$ is fully-determined by $Z$, then
$$I(X,Y) \leq I(W,Z).$$
\end{theorem}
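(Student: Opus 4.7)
The statement is the classical data processing inequality in a mildly generalized form that allows a deterministic post-processing on \emph{both} sides. The plan is to reduce it to two applications of the Markov-chain form of the inequality: if $A \to B \to C$ is a Markov chain (meaning $A$ and $C$ are conditionally independent given $B$), then $I(A;C) \le I(A;B)$. I would first recall (or quickly reprove) this Markov-chain form via the chain rule for mutual information, expanding $I(A;B,C)$ in two ways,
\[
I(A;B,C) \;=\; I(A;B) + I(A;C\mid B) \;=\; I(A;C) + I(A;B\mid C),
\]
and observing that $I(A;C\mid B)=0$ by the Markov property while $I(A;B\mid C)\ge 0$ in general. This is a short exercise using Fact~\ref{claim:entropy_cond} together with the definition of mutual information.

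With this tool in hand, the theorem follows by using the two hypotheses in turn. Since $X$ is fully determined by $W$, conditioning on $W$ makes $X$ a constant and hence conditionally independent of $Y$, so $Y \to W \to X$ is a Markov chain; the Markov-chain DPI then gives $I(X;Y) \le I(W;Y)$. Symmetrically, since $Y$ is fully determined by $Z$, the triple $W \to Z \to Y$ forms a Markov chain, giving $I(W;Y) \le I(W;Z)$. Chaining these two inequalities yields $I(X;Y) \le I(W;Z)$, which is the desired conclusion.

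I do not anticipate any real obstacle here: this is a textbook argument, and the only minor care needed is notational, namely that the statement writes $I(X,Y)$ with a comma but plainly means the mutual information $I(X;Y)$ defined earlier in the preliminaries. The theorem is being recorded primarily as a black-box tool for later sections, where it will be applied with $W,Z$ taken to be SDP-derived distributional objects and $X,Y$ taken to be their rounded (hence deterministic) images, so the ``both-sided'' formulation is exactly the version that will be convenient to invoke.
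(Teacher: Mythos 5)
Your proof is correct and is the standard textbook argument; the paper itself does not supply a proof of Theorem~\ref{theorem:DPI}, treating it as a known fact from information theory. Your two-step reduction via the Markov chains $Y \to W \to X$ and $W \to Z \to Y$, each handled by the chain-rule decomposition of mutual information, is exactly the right way to get the ``both-sided'' version stated here, and your remark about the $I(X,Y)$ vs.\ $I(X;Y)$ notation is an accurate reading of the paper's intent.
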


\section{Algorithm for simultaneous weighted $\maxcut$}
In this section, we give our approximation algorithm for simultaneous weighted $\maxcut$ and the analysis.

\subsection{Notation}
We use the same notation as in \cite{BKS15}, which we reproduce here. Let $\calE = {V \choose 2}$ be the set of all possible edges.
Given an edge $e$ and a vertex $v$, we say $v \in e$ if $v$ appears in the edge $e$. For an edge $e$, let $e_1, e_2$ denote the endpoints of $e$ (arbitrary order).
Let $f: V \to \{0,1\}$ be an assignment. For an edge $e \in
\calE,$ define $e(f)$ to be 1 if the edge $e$ is cut by
the assignment $f$, and define $e(f) = 0$ otherwise. Note that an assignment cuts an edge if it assigns different values to the end points.
Then, we have the following expression for the cut value of the assignment:
$$\val(f,\calE) \defeq \sum_{e \in \calE} \calE(e) \cdot e(f).$$

A partial assignment $h:S\rightarrow\{0,1\}$ is an assignment to $S$ where $S \subseteq V$. We say an edge is {\it active} with respect to $S$ if at least one of the end vertices is not in $S$. We denote by $\Act(S)$ the set of all edges which are active with respect to $S$. For two edges $e,e' \in \calE,$ we say
$e \sim_{S} e'$ if they share a vertex that is contained in $V
\setminus S$. Note that if $e \sim_{S} e',$ then $e, e'$ are
both in $\Act(S)$, and also $e\sim_{S} e,$ $\forall e \in \calE$.
Let $\activedist_S(\ell)$ denote the distribution over $\Act(S)$, obtained by renormalizing $\calE_\ell$ to have total weight $1$ over $\Act(S)$.

Define the active degree given $S$ of a variable $v \in V\setminus S$ for instance $\ell$ by:
$$\activedeg_S(v, \ell) \defeq \sum_{ e \in \Act(S), e \owns v} \calE_\ell(e).$$
We then define the active degree of the whole instance $\ell$ given $S$:
$$\activedeg_S(\ell) \defeq \sum_{v \in V\setminus S} \activedeg_{S}(v,
\ell).$$
Note that we count weight of an active edge in $\activedeg_S(\ell)$ at most twice.  For a partial assignment $h: S \to \{0,1\}$, we define
$$\val(h,\calE_\ell) \defeq  \sum_{\substack{e \in \calE \\  e \notin \Act(S)}} \calE_\ell(e) \cdot e(h)$$
which is the total weight of non-active edges cut by the partial assignment $h$.
Thus, for an assignment $g: V\setminus S \to \{0,1\}$, to the
remaining set of variables, we have the equality:
$$\val(h\cup g,\calE_\ell) - \val(h,\calE_\ell) = \sum_{e \in \Act(S)}\calE_\ell(e) \cdot e(h\cup g).$$

\subsection{Algorithm}

In Figure~\ref{fig:weighted-maxcut}, we give the algorithm for Simultaneous \maxcut.
The input to the algorithm consists of an integer $k \ge 1,$ $\eps \in
(0,\nfrac{1}{5}],$ $k$ instances of \maxcut, specified by weight functions
$\calE_1,\ldots,\calE_k,$ and $k$ target objective values
$c_1,\ldots,c_k.$

\begin{figure*}[ht!]
\begin{tabularx}{\textwidth}{|X|}
\hline
\vspace{0mm}
{\bf Input}: $k$ instances of \maxcut , with weights defined by $\calE_1,\ldots,\calE_k$ on the
set of variables $V,$  target objective values $c_1,\ldots,c_k,$ and
$\eps \in (0,\nicefrac{1}{5}].$ \\
{\bf Output}: An assignment to $V.$ \\
{\bf Parameters:}  $\delta_0 = \frac{1}{10k}$, $\epsilon_0 = \frac{\eps}{2}$, $t = \frac{2k}{\gamma}\cdot\log\left( \frac{21}{\gamma}\right), \tau = \eps$, $\gamma=\frac{\tau^2\eps_0^2\delta_0}{4}$.\\
{\bf Pre-processing:}
\begin{enumerate}[itemsep=0mm, label=\arabic*.]
\item Initialize $S \leftarrow \emptyset$.
\item For each instance $\ell \in [k]$, initialize $\cnt_\ell
  \leftarrow 0$ and $\flag_\ell \leftarrow \true.$
\item Repeat the following until for every $\ell \in [k]$, either $\flag_\ell = \false$ or
$\cnt_\ell  = t$:
\label{item:alg:max-cut:loop}
\begin{enumerate}
\item For each $\ell \in [k]$, compute
$\varest_\ell = \sum_{e \sim_{S} e'} \calE_\ell(e) \calE_\ell(e').$
\item For each $\ell \in [k]$ compute
$\textstyle \meanest_\ell \defeq \tau \sum_{e \in \Act(S)} \calE_\ell(e).$
\item For each $\ell \in [k],$ if $\varest_\ell \geq \delta_0
  \epsilon_0^2 \cdot \meanest_\ell^2 $, then set $\flag_\ell = \true$,
  else set $\flag_\ell = \false$.
\item Choose any $\ell \in [k]$, such that
$\cnt_\ell < t$ AND $\flag_\ell = \true$ (if any):
\label{item:alg:max-cut:high-var}
\begin{enumerate}
\item
\label{item:alg:max-cut:high-deg}
Find $v \in V$ such that $\activedeg_S(v, \ell) \geq \gamma
\cdot \activedeg_S(\ell).$
\item Set $S \leftarrow S \cup \{v\}.$ We say that $v$ was brought
  into $S$ because of instance $\ell$.
\item Set $\cnt_\ell \leftarrow \cnt_\ell + 1$.
\end{enumerate}
\end{enumerate}
\item After exiting the loop:
\begin{itemize}
\item Let $\calL$ denote the set of all $\ell \in [k]$ for which $\flag_\ell$ is set to $\false$ (these will be called ``low-variance" instances).
\item Let $\mathcal H$ denote the set of all $\ell \in [k]$ for which  $\cnt_\ell = t$ (these will be called ``high-variance" instances).
\end{itemize}
\end{enumerate}
{\bf Main algorithm:}
\begin{enumerate}[resume]
\item
For each possible partial fixing $h:S\rightarrow \{0,1\}$ do the following
\begin{enumerate}
\item
Solve the SDP given in Figure~\ref{figure:newsdpsimmaxcut-lasserre} (Refer Section~\ref{sec:lasserre-formulation}).
\item
Follow the procedure in Figure~\ref{fig:indep} to make the solution locally independent. (Refer Section~\ref{sec:indep-local-sol})
\item
\label{item:alg:max-cut:rounding}
Round the solution based on the rounding procedure described in Figure~\ref{fig:rounding} to get a partial assignment $g:V\setminus S\rightarrow\{0,1\}$. (Refer Section~\ref{sec:rounding})
\item
\label{item:alg:max-cut:post-process}
{\bf Post-processing step: }
For every assignment $h':S\rightarrow \{0,1\}$, compute $\min\limits_\ell\frac{\val(h'\cup g, \calE_\ell)}{c_\ell}$ and return the assignment $h'\cup g$ that maximizes this.
\end{enumerate}
\end{enumerate}
\\
\hline
\end{tabularx}
\caption{Algorithm {\algowmaxcut} for approximating weighted simultaneous
  \maxcut}
  \label{fig:weighted-maxcut}
\end{figure*}

\subsection{Analysis of the Algorithm}

The algorithm broadly proceeds in 3 sections, the pre-processing step, the $\sdp$ step and the post processing step. The pre-processing step consists of identifying a small subset $S \subseteq V$ carefully. We then attempt all assignments to vertices in $S$ by brute force iteratively and use $\sdp$ with the partial assignment followed by a rounding to assign vertices in $V\setminus S$. The post-processing step involves perturbing the assignments to the vertices in $S$, the need for which is explained in detail in Section \ref{sec:postprocess}.

In what follows, we stick to the following notation.
Let $S^\star$ denote the final set $S$ that we get at the end of
Step~\ref{item:alg:max-cut:loop} of {\algowmaxcut}. Let $f^\star : V \rightarrow \{0,1\}$ be the assignment that achieves $\val(f^\star, \calE_\ell)\geq c_\ell$ for all $l\in [k]$  and $h^\star$ be the restriction of $f^\star$ to the set $S^\star$.

\subsubsection{Pre-processing: Low and High variance instances}
\label{sec:preprocess}

\dsay{We need a different notation for the smoothness parameter, $\alpha$ might be confusing given that we use it in the $\sdp$ for an assignment in $v_{\{S, \alpha\}}$.}
\begin{definition}
[$\tau$-smooth distribution]
A distribution $D$ on $\{0,1\}$ is called $\tau$-smooth if
$$\Pr_{x\sim D}[x=1]\geq \tau , \hspace{10pt}\Pr_{x\sim D}[x=0] \geq \tau.$$
\end{definition}

Let $h: S \to \{0,1\}$
be an arbitrary partial assignment to the vertices in $S$.  Let $g: V \setminus S \to \{0,1\}$ be the random assignment such that each of the marginals $g(v)$ is $\tau$-smooth. For an instance $\ell$, define the random variable
$$Y_\ell \defeq \val(h\cup g,\calE_\ell) - \val(h,\calE_\ell) = \sum_{e \in \Act(S)} \calE_\ell(e)\cdot e(h\cup g).$$
$Y_\ell$ measures the total active edge weight cut by the assignment in the instance $\ell$.

Consider the two quantities defined in Step~\ref{item:alg:max-cut:loop} of the algorithm. They depend only on $S$ (and importantly,
not on $h$), which will be useful in controlling the expectation and
variance of $Y_\ell$.  The first quantity is an upper bound on $\Var[Y_\ell]$:
$$\varest_\ell \defeq \sum_{e \sim_{S} e'} \calE_\ell(e) \calE_\ell(e').$$
The second quantity is a lower bound on $\E[Y_\ell]$:
$$\meanest_\ell \defeq \tau \cdot \sum_{e \in \Act(S)} \calE_\ell(e).$$
\begin{lemma}
\label{lemma:max-cut:uvar_lmean_relation}
Let $S \subseteq V$ be a subset of vertices and $h: S \to \{0,1\}$ be an arbitrary partial assignment to $S.$ Let $Y_\ell, \varest_\ell, \meanest_\ell$ be as above.
\begin{enumerate}
\item
\label{item:max-cut:uvar_lmean_relation:low_variance}
If $\varest_\ell \le \delta_0 \epsilon_0^2
  \cdot \meanest_\ell^2,$ then $\Pr[Y_\ell < (1-\epsilon_0) \E[Y_\ell] ] < \delta_0$.
\item If $\varest_\ell \ge \delta_0\epsilon_0^2 \cdot \meanest_\ell^2 $, then there exists $v \in V\setminus S$ such that
$$\activedeg_S(v, \ell) \geq \frac{1}{4}\tau^2\epsilon_0^2\delta_0 \cdot \activedeg_{S}(\ell).$$
\end{enumerate}
\end{lemma}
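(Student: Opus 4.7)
The plan is to prove the two parts separately, with part (1) a textbook Chebyshev argument once we bound $\E[Y_\ell]$ from below by $\meanest_\ell$ and $\Var[Y_\ell]$ from above by $\varest_\ell$, and part (2) a simple averaging/pigeonhole on the sum-of-squares identity for $\varest_\ell$. Throughout I take the $\tau$-smooth marginals of $g$ to be jointly independent (which I believe is the intended reading of the setup).

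For part (1), first I would show $\E[Y_\ell] \geq \meanest_\ell$ by a per-edge computation. If $e \in \Act(S)$ has exactly one endpoint $v$ in $V \setminus S$, then $e(h \cup g)$ is the indicator of $g(v) \neq h(e \setminus \{v\})$, which by $\tau$-smoothness has probability at least $\tau$. If both endpoints of $e$ lie in $V\setminus S$, then by independence $\Pr[g(e_1) \neq g(e_2)] = p_1(1-p_2)+(1-p_1)p_2 \geq 2\tau(1-\tau) \geq \tau$, using $\tau \leq 1/5 \leq 1/2$. Summing $\calE_\ell(e)\cdot\Pr[e(h\cup g)=1] \geq \tau\calE_\ell(e)$ over $e \in \Act(S)$ yields the bound. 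For the variance, expand
\[
\Var[Y_\ell] \;=\; \sum_{e,e'\in\Act(S)} \calE_\ell(e)\calE_\ell(e')\,\Cov\bigl(e(h\cup g),\, e'(h\cup g)\bigr).
\]
If $e \not\sim_S e'$, the two indicators depend on disjoint subsets of the independent variables $\{g(v)\}_{v\in V\setminus S}$ (any remaining endpoints are fixed by $h$), so their covariance vanishes. Bounding $|\Cov|\leq 1$ on the remaining $e\sim_S e'$ terms gives $\Var[Y_\ell]\leq \varest_\ell$. Chebyshev together with the hypothesis then delivers
\[
\Pr\bigl[Y_\ell<(1-\epsilon_0)\E[Y_\ell]\bigr] \;\leq\; \frac{\Var[Y_\ell]}{\epsilon_0^2\,\E[Y_\ell]^2} \;\leq\; \frac{\varest_\ell}{\epsilon_0^2\,\meanest_\ell^2} \;\leq\; \delta_0.
\]

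For part (2), the idea is to rewrite $\varest_\ell$ as a sum of squares over active vertices. Interchanging summation,
\[
\sum_{v\in V\setminus S}\activedeg_S(v,\ell)^2 \;=\; \sum_{e,e'}\calE_\ell(e)\calE_\ell(e')\cdot\bigl|e\cap e'\cap(V\setminus S)\bigr| \;\geq\; \varest_\ell,
\]
since distinct edges share at most one vertex, and the diagonal $e=e'$ with an active endpoint contributes at least $\calE_\ell(e)^2$ on both sides. Using $\sum_v a_v^2 \leq (\max_v a_v)\sum_v a_v$ gives
\[
\varest_\ell \;\leq\; \max_{v\in V\setminus S}\activedeg_S(v,\ell)\cdot\activedeg_S(\ell).
\]
Since each active edge is counted at most twice in $\activedeg_S(\ell)$, we have $\meanest_\ell = \tau\sum_{e\in\Act(S)}\calE_\ell(e) \geq \tfrac{\tau}{2}\activedeg_S(\ell)$. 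Feeding the hypothesis $\varest_\ell \geq \delta_0\epsilon_0^2\meanest_\ell^2$ into the two inequalities above,
\[
\max_{v}\activedeg_S(v,\ell) \;\geq\; \frac{\delta_0\epsilon_0^2\meanest_\ell^2}{\activedeg_S(\ell)} \;\geq\; \frac{\delta_0\epsilon_0^2\tau^2}{4}\,\activedeg_S(\ell),
\]
giving the required vertex.

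I do not expect a real obstacle here: both parts are short moment-method computations, and the constants $\tfrac12$ and $\tfrac14$ in the final bound track exactly the worst-case double counting of edges with both endpoints in $V\setminus S$ (in $\activedeg_S(\ell)$) and of $\meanest_\ell$ against $\activedeg_S(\ell)$. The only place requiring care is the variance expansion, which silently uses that the marginals $\{g(v)\}$ are independent — if one only had pairwise independence, the argument still goes through since the covariance of any two edge-indicators depends only on at most four $g$-values.
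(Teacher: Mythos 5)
Your proof is correct, and the overall strategy matches the paper's: Chebyshev for part (1) after bounding $\E[Y_\ell] \geq \meanest_\ell$ and $\Var[Y_\ell] \leq \varest_\ell$ (the paper only asserts this step), and an averaging argument for part (2). Your part (2) is a small variant of the paper's averaging: the paper regroups $\varest_\ell$ by edges, picks the edge $e_0$ maximizing $\sum_{e\sim_S e_0}\calE_\ell(e)$, and then takes the heavier of its at most two free endpoints, paying one factor of $2$ there and another via $\activedeg_S(\ell)\le 2\sum_{e\in\Act(S)}\calE_\ell(e)$; you instead pass directly to the vertex-indexed sum $\sum_{v}\activedeg_S(v,\ell)^2 \ge \varest_\ell$ and apply $\sum_v a_v^2 \le (\max_v a_v)\sum_v a_v$, with both factors of $2$ absorbed into $\meanest_\ell \ge \tfrac{\tau}{2}\activedeg_S(\ell)$. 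Both are one-line pigeonhole arguments and land on the same constant $\tfrac14$, so this is more a cosmetic repackaging than a different method. One small inaccuracy in your closing aside: pairwise independence of $\{g(v)\}$ would \emph{not} suffice for the covariance of two edge-indicators with four distinct free endpoints to vanish --- that needs $4$-wise independence --- but since the intended model is full independence (as the feasibility proof in the paper also uses), this does not affect your argument.
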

We defer the formal proof to the appendix.  The first part is a simple
application of the Chebyshev inequality. For the second part, we use
the assumption that $\varest_\ell$ is large, to deduce that there exists an edge $e$ such that the total weight of edges adjacent to the vertex/vertices in $e$ that belong to $V\setminus S$,  \emph{i.e.}, $\sum_{e_2 \sim_S
  e} \calE(e_2),$ is large. It then follows that at least one variable
$v \in e$ must have large active degree given $S.$

The above lemma (Lemma~\ref{lemma:max-cut:uvar_lmean_relation}) ensures that
Step~\ref{item:alg:max-cut:high-deg} in the algorithm always succeeds
in finding a variable $v$.  Next, we note that
Step~\ref{item:alg:max-cut:loop} always terminates. Indeed, whenever
we find an instance $\ell \in [k]$ in
Step~\ref{item:alg:max-cut:high-var} such that $\cnt_\ell < t$ and
$\flag_\ell = \true,$ we increment $\cnt_\ell.$ This can happen only
$tk$ times before the condition $\cnt_\ell < t$ fails for all $\ell
\in [k].$ Thus the loop must terminate within $tk$ iterations.

To analyze the
approximation guarantee of the algorithm, we classify instances
according to how many vertices were brought into $S^\star$ because of
them.
\begin{definition}[\small Low and High variance instances]
At the completion of Step~\ref{item:alg:max-cut:high-var} in Algorithm
{\algowmaxcut}, if $\ell \in [k]$ satisfies $\cnt_\ell = t$, we call
instance $\ell$ a {\em high variance} instance.  Otherwise we call
instance $\ell$ a {\em low variance} instance.
\end{definition}

The next two sections describes the SDPs that we formulate and solve for just the low variance instances. The claim that step~\ref{item:alg:max-cut:post-process} of the algorithm shown in Figure~\ref{fig:weighted-maxcut} handles the high variance instances is discussed and proved in Section~\ref{sec:postprocess}.

\subsubsection{Warmup: Basic $\sdp$ formulation for simultaneous $\maxcut$.}
\begin{figure*}[ht!]
\begin{tabularx}{\textwidth}{|X|}
\hline
{
\begin{align}
\sum_{e = \{i,j\} \in \calE_\ell} \calE_\ell(e)(\|\v{v_{\{(i,j), (0,1)\}}}\|_2^2 + \|\v{v_{\{(i,j), (1,0)\}}}\|_2^2)  \geq (1-3\epsilon)c_\ell  & \hspace{20pt} \forall \ell\in [k],\label{eq:objective}\\
\langle \v{v_{\{i,0\}}} , \v{v_{\{i,1\}}}\rangle = 0  &\hspace{20pt}  \forall i \in [n],\nonumber\\
\| \v{v_{\{(i,j), (b_1, b_2)\}}}\|^2 = \langle \v{v_{\{i,b_1\}}} , \v{v_{\{j,b_2\}}}\rangle & \hspace{20pt} \forall i,j\in [n] \nonumber\\
& \hspace{20pt}\text{ and } b_1, b_2\in \{0,1\}\nonumber\\
 \|\v{v_{\{T, \alpha\}}}\|^2 = \langle \v{v_{\{T, \alpha\}}}, \v{v_\emptyset}\rangle &\hspace{20pt} \forall T\subset V, |T|\leq 2, \alpha\in \{0,1\}^{|T|}\nonumber\\
 \v{v_{\{i, b\}}} = \v{v_\emptyset} &\hspace{20pt} \forall i \in S^\star, b = h(i)\nonumber\\
 \|\v{v_\emptyset}\|^2 = 1 &\hspace{20pt} \nonumber\\
 \sum_{e = \{i,j\} \in \Act(S^\star)} \hspace{-20pt}\calE_\ell(e)(\|\v{v_{\{(i,j), (0,1)\}}}\|_2^2 + \|\v{v_{\{(i,j), (1,0)\}}}\|_2^2) \geq \nfrac{\epsilon}{3}. \activedeg_{S^\star}(\ell) &\hspace{20pt} \forall \ell \in \mathcal{L}\label{eq:activedegreecut}
\end{align}
}
\tabularnewline \hline
\end{tabularx}
\caption{$\sdp^\star(h:S^\star\rightarrow\{0,1\})$ for simultaneous $\maxcut$ with partial fixing}
\label{figure:newsdpsimmaxcut}
\end{figure*}

Our algorithm involves formulating a {\em Lasserre Hierarchy} SDP relaxation of the
residual $\maxcut$ problem after giving a partial assignment $h: S^\star \to \{0,1\}$. In this section, as a warmup to its analysis, we present and study the {\em basic} version of that SDP.

We write the $\sdp^\star$ for simultaneous $\maxcut$ problem, after the partial fixing given by pre-processing step, as in Figure \ref{figure:newsdpsimmaxcut}. Let $\calL $ denote the set of indices of the low variance instances. We have vectors $\v{v_{T,\alpha}}$ for all $T$ and $\alpha$ where $T$ is a subset of $V$ of size at most 2, and $\alpha$ is an assignment to the vertices in $T$.

If we consider the $\sdp^\star$ without the constraint~(\ref{eq:activedegreecut}), it is easy to see that this is a relaxation. Given a partition $(U,\bar{U})$ of $V$ that achieves a simultaneous optimum, we can set vectors $\v{v_{T,\alpha}} = \v{v_\emptyset}$ if the pair $(T,\alpha)$ is consistent with $1_U$ (i.e. $1_U$ assigns $\alpha$ to $T$) and $\v{v_{T,\alpha}} = 0$ otherwise. $\v{v_\emptyset}$ can be viewed as a vector that denotes 1.

A part of our analysis require that for every low variance instance, the expected weighted fraction of active edges that we cut is at least a constant fraction of its active degree. An optimal SDP solution without constraint ~(\ref{eq:activedegreecut}) may not guarantee this condition (for the rounding procedure we choose). Hence, we force the SDP solution to satisfy this property by adding constraint ~(\ref{eq:activedegreecut}). We need to relax constraint ~(\ref{eq:objective}) to make sure that there is a solution that satisfies all the constraints.

We now prove that $\sdp^\star$, in its present form, has feasible solutions.
\begin{lemma}
\label{lemma:sdp_feasible}
 $\sdp^\star(h^\star)$ shown in Figure~\ref{figure:newsdpsimmaxcut} has a feasible solution.
\end{lemma}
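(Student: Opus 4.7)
The plan is to exhibit a feasible solution via the standard moment-matrix construction from an explicit mixture distribution over extensions of $h^\star$ to $V$. Specifically, I would take $D := (1-\mu)\,D_1 + \mu\,D_2$, where $D_1$ is the point-mass on $f^\star$, $D_2$ is the uniform distribution on $\{0,1\}^{V\setminus S^\star}$ (extended by $h^\star$ on $S^\star$), and $\mu$ is a mixing parameter to be fixed. The vectors are defined with respect to an orthonormal basis $\{e_x\}_{x \in \mathrm{supp}(D)}$ by
$$\v{v_{\{T,\alpha\}}} \,:=\, \sum_{x\,:\,x|_T=\alpha} \sqrt{\Pr_D[x]}\, e_x \quad (|T|\le 2), \qquad \v{v_\emptyset} \,:=\, \sum_x \sqrt{\Pr_D[x]}\, e_x,$$
so that $\langle \v{v_{\{T_1,\alpha_1\}}},\v{v_{\{T_2,\alpha_2\}}}\rangle = \Pr_D[x|_{T_1}=\alpha_1 \text{ and } x|_{T_2}=\alpha_2]$.

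Given this formulation, the ``structural'' constraints of Figure~\ref{figure:newsdpsimmaxcut} reduce to elementary facts about $D$: $\|\v{v_\emptyset}\|^2 = 1$ by total probability; $\langle \v{v_{\{i,0\}}},\v{v_{\{i,1\}}}\rangle = 0$ because the events $\{x_i=0\}$ and $\{x_i=1\}$ are disjoint; the marginal-consistency identities $\|\v{v_{\{(i,j),(b_1,b_2)\}}}\|^2 = \langle \v{v_{\{i,b_1\}}},\v{v_{\{j,b_2\}}}\rangle$ and $\|\v{v_{\{T,\alpha\}}}\|^2 = \langle \v{v_{\{T,\alpha\}}},\v{v_\emptyset}\rangle$ are tautologies under this interpretation; and $\v{v_{\{i,h(i)\}}} = \v{v_\emptyset}$ for $i\in S^\star$ holds because $D$ is supported entirely on extensions of $h^\star$, so the two defining sums are identical term-by-term.

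The real work is to choose $\mu$ so that the two objective-style constraints are simultaneously satisfied. For constraint~(\ref{eq:objective}), $\E_{f\sim D}[\val(f,\calE_\ell)] \ge (1-\mu)\val(f^\star,\calE_\ell) \ge (1-\mu)c_\ell$, which meets the required $(1-3\eps)c_\ell$ bound whenever $\mu \le 3\eps$. For constraint~(\ref{eq:activedegreecut}), every edge in $\Act(S^\star)$ is cut by $D_2$ with probability exactly $1/2$, so the LHS is at least $\mu\cdot\tfrac12\sum_{e\in\Act(S^\star)}\calE_\ell(e) \ge (\mu/4)\,\activedeg_{S^\star}(\ell)$, where the last inequality uses that each active edge contributes to $\activedeg_{S^\star}(\ell)$ at most twice; this exceeds the required $(\eps/3)\activedeg_{S^\star}(\ell)$ as long as $\mu \ge 4\eps/3$. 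The key observation is that $[4\eps/3,\,3\eps]$ is nonempty, so any $\mu$ in this interval (e.g., $\mu = 2\eps$) simultaneously meets both substantive constraints. The main obstacle is precisely this balancing of the objective and the active-degree lower bound against one another; notice that the low/high-variance distinction plays no role in feasibility, since (\ref{eq:activedegreecut}) is vacuous when $\activedeg_{S^\star}(\ell)=0$ and the variance property is used only later in the rounding analysis.
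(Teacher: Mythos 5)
Your construction is correct as a proof of Lemma~\ref{lemma:sdp_feasible} in isolation: the moment-matrix vectors built from the mixture $D = (1-\mu)\delta_{f^\star} + \mu D_2$ satisfy all moment/consistency constraints of the degree-$2$ SDP automatically, the partial-fixing constraint holds since $D$ is supported on extensions of $h^\star$, and your arithmetic for balancing $(1-\mu)c_\ell \ge (1-3\eps)c_\ell$ against $(\mu/4)\activedeg_{S^\star}(\ell) \ge (\eps/3)\activedeg_{S^\star}(\ell)$ correctly exhibits a nonempty window $[\,4\eps/3,\,3\eps\,]$ for $\mu$.

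However, this takes a genuinely different route from the paper, and the difference has a downstream cost you should be aware of. The paper instead produces a single \emph{integral} assignment $r$: it perturbs $f^\star$ by independently flipping each coordinate of $V\setminus S^\star$ with probability $\eps$, shows $\E[\val(r,\calE_\ell)]\ge(1-2\eps)c_\ell$ and $\E[Y_\ell]\ge(\eps/2)\activedeg_{S^\star}(\ell)$, and then invokes the low-variance concentration bound (Lemma~\ref{lemma:max-cut:uvar_lmean_relation}) plus a union bound to conclude that with positive probability \emph{one} assignment $r$ satisfies all constraints. This integrality is not incidental: the proof of Lemma~\ref{lemma:lasserre_feasible} explicitly lifts the integral solution from Lemma~\ref{lemma:sdp_feasible} to the $r$-round Lasserre SDP by setting $\v{v}_{S,\alpha}=\v{v}_\emptyset$ or $0$ according to whether $r$ assigns $\alpha$ to $S$. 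Your mixture $D$ does \emph{not} lift in this way: the Lasserre relaxation of Figure~\ref{figure:newsdpsimmaxcut-lasserre} has conditional objective constraints (\ref{eq:lasserreobjective}) and (\ref{eq:preserve_active}) requiring the conditional cut value to stay above $(1-3\eps)c_\ell$ for \emph{every} conditioning $(S,\alpha)$ of positive probability, and if you condition $D$ on an $\alpha$ inconsistent with $f^\star$ you are left with a conditioned $D_2$, whose expected cut can be far below $(1-3\eps)c_\ell$. So your proof is a clean and self-contained argument for the basic SDP --- and you are right that the low/high-variance distinction is not strictly needed there --- but it would force a new proof of Lemma~\ref{lemma:lasserre_feasible}. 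The paper's probabilistic argument is slightly heavier machinery at this stage, but buys integrality, which is exactly what the next lemma needs.
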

\begin{proof}
To show that $\sdp^\star$ has a feasible solution, it suffices to show that there exists an integral solution that satisfies the constraints.

Fix an optimal assignment $f^\star : V \rightarrow \{0,1\}$ to the simultaneous instance. $f^\star$ satisfies $\forall \ell \in [k]$, $\val(f^\star, \calE_\ell) \geq c_\ell$. Consider the following random assignment: For all $v\in V\setminus S^\star$

\[r(v) =
\begin{cases}
      f^\star(v) & \mbox{with probability }   (1-\epsilon) \\
      \overline{f^\star(v)}  & \mbox{otherwise}\\
   \end{cases}
\]
where $\overline{f^\star(v)}$ is  $f^\star(v)$ flipped. For $v\in S^\star$, set $r(v) = f^\star(v)$. Now, for any $\ell\in \mathcal{L}$, let $Y_\ell$ denote the random variable
$$ Y_\ell = \sum_{e\in \Act({S^\star})} \calE_\ell(e) \cdot e(r).$$
We have $\E[e(r)] \geq \epsilon$, hence $\E[Y_\ell] \geq \nfrac{\epsilon}{2}\cdot\activedeg_{S^\star}(\ell)$. Also,

\begin{align*}
\E_r[\val(r, \calE_\ell) ]
& \geq  \sum_{e\notin \Act({S^\star})} \calE_\ell(e)\cdot\E[ e(r) ]+ \sum_{\substack{e\in \Act({S^\star}),\\ e(f^\star) = 1}} \calE_\ell(e) \cdot\E[e(r)]\\
& = \sum_{e\notin \Act({S^\star})} \calE_\ell(e)\cdot e(f^\star) +  \sum_{\substack{e\in \Act({S^\star}),\\ e(f^\star) = 1}} \calE_\ell(e) \cdot\min((1-\epsilon)^2 + \epsilon^2, 1-\epsilon)\\
& \geq (1-2\epsilon) \sum_{e: e(f^\star)=1} \calE_\ell(e) \\
& = (1-2\epsilon)\val(f^\star,\calE_\ell) \\
&\geq (1-2\epsilon) c_\ell.
\end{align*}

Thus, we have,
\begin{enumerate}
\item  $\E[Y_\ell] \geq \nfrac{\epsilon}{2}\cdot\activedeg_{S^\star}(\ell)$.
\item  $\E\limits_r[\val(r, \calE_\ell) ] \geq (1-2\epsilon) c_\ell.$
\end{enumerate}

Recall that the $\sdp^\star$ involves only the low variance instances. Also, the assignment $r$ is $\epsilon$-smooth on the set $V\setminus S^\star$. Therefore, we have concentration guarantees as given by point~\ref{item:max-cut:uvar_lmean_relation:low_variance} of Lemma~\ref{lemma:max-cut:uvar_lmean_relation}.

$$\Pr[Y_\ell \le (1-\epsilon_0)\E[Y_\ell]] \le \delta_0$$
$$\Pr[\val(r, \calE_\ell) \le (1-\epsilon_0)\E[\val(r, \calE_\ell)] ]\le \delta_0.$$
Hence, with probability at least $1-2\delta_0$, we have $Y_\ell \ge (1-\nfrac{\epsilon}{2})
\cdot\nfrac{\epsilon}{2}\cdot\activedeg_{S^\star}(\ell) \ge \nfrac{\epsilon}{3} \cdot\activedeg_{S^\star}(\ell)$ and $\val(r, \calE_\ell) \ge (1-\nfrac{\epsilon}{2})(1-2\epsilon)c_\ell \ge (1-3\epsilon)c_\ell$.

Now we do union bound over all low variance instances, we get with a probability at least $1-2\cdot \delta_0 \cdot k = \nfrac{4}{5}$, all the $\sdp$ constraints are satisfied by integral solution $r$. Thus, there exists an {\em integral} solution which satisfies all $\sdp^\star(h^\star)$ constraints and hence is feasible.
\end{proof}

\subsubsection{Lasserre Hierarchy $\sdp$ formulation.}
\label{sec:lasserre-formulation}

We now describe the $r^{th}$-level Lasserre SDP for the SDP in Figure~\ref{figure:newsdpsimmaxcut}.

\begin{figure*}[ht!]
\begin{tabularx}{\textwidth}{|X|}
\hline
{
\begin{align}
\sum_{e = \{i,j\} \in \calE_\ell} \left( \right. \calE_\ell(e)(\|\v{v_{\{S\cup\{i,j\}, \alpha \circ (0,1)\}}}\|_2^2 \hspace{20pt} & \hspace{20pt} \forall S\subseteq V, |S| \leq r-2, \alpha \in \{0,1\}^{|S|},\nonumber\\[-10pt]
+ \|\v{v_{\{S\cup\{i,j\}, \alpha \circ (1,0)\}}}\|_2^2)\left. \right)\hspace{20pt} & \hspace{40pt} \forall\ell\in [k]\nonumber\\[3pt]
\geq (1-3\epsilon)c_\ell \|\v{v_{\{S,\alpha\}}}\|^2\label{eq:lasserreobjective}\\[7pt]
 \sum_{e = \{i,j\} \in \Act(S^\star)}\left( \right. \calE_\ell(e)(\|\v{v_{\{S\cup\{i,j\}, \alpha \circ (0,1)\}}}\|_2^2 \hspace{20pt}&\hspace{20pt} \forall S\subseteq V, |S| \leq r-2, \alpha \in \{0,1\}^{|S|}, \nonumber \\[-10pt]
 + \|\v{v_{\{S \cup \{i,j\}, \alpha \circ (1,0)\}}}\|_2^2) \left. \right) \hspace{20pt}& \hspace{35pt}\forall \ell \in \mathcal{L }\nonumber\\[3pt]
\geq \nfrac{\epsilon}{3}. \activedeg_{S^\star}(\ell) \|\v{v_{\{S,\alpha\}}}\|^2  \label{eq:preserve_active}\\[7pt]
 \langle  \v{v_{\{S,\alpha\}}}, \v{v_{\{T,\beta\}}} \rangle = \|\v{v_{\{S\cup T,\alpha \circ \beta\}}}\|_2^2&\hspace{20pt} \forall S, T \subseteq V, |S\cup T| \leq r, \nonumber\\
 &\hspace{20pt} \alpha\in \{0,1\}^{|S|}, \beta \in \{0,1\}^{|T|}, \label{eq:consistency1}\\[7pt]
  \langle  \v{v_{S,\alpha}}, \v{v_{T,\beta}} \rangle = 0 &\hspace{20pt} \forall S, T \subseteq V, |S\cup T| \leq r, \alpha\in \{0,1\}^{|S|}, \beta \in \{0,1\}^{|T|},\nonumber\\[7pt]
 &\hspace{20pt} \text{s.t. } \alpha_{|S\cap T}\neq \beta_{|S\cap T}\label{eq:consistency2}\\[7pt]
 \|\v{v_{\{T, \alpha\}}}\|^2 = \langle \v{v_{\{T, \alpha\}}}, \v{v_\emptyset}\rangle &\hspace{20pt} \forall T\subseteq V, |T|\leq r, \alpha\in \{0,1\}^{|T|}\nonumber\\[7pt]
 \langle \v{v_{\{S, \alpha\}}}, \v{v_{\{i, b\}}} \rangle= \langle \v{v_{\{S, \alpha\}}}, \v{v_\emptyset}\rangle &\hspace{20pt} \forall S\subseteq V, |S| \leq r-1, \alpha \in \{0,1\}^{|S|}\nonumber\\[7pt]&\hspace{20pt} \forall i \in S^\star, b = h(i)\label{eq:partialfixing}\\
  \|\v{v_\emptyset}\|^2 = 1 &\hspace{20pt} \nonumber
\end{align}
}
\tabularnewline \hline
\end{tabularx}
\caption{$r$-round Lasserre lift of $\sdp^\star(h:S^\star\rightarrow\{0,1\})$ for simultaneous $\maxcut$ with partial fixing}
\label{figure:newsdpsimmaxcut-lasserre}
\end{figure*}

The $\sdp$ formulation has vectors $\v{v_{\{T,\alpha\}}}$ for all $T\subseteq V$ such that $|T| \leq r$ and $\alpha \in \{0,1\}^{|T|}$. In terms of local distribution, the $\sdp$ solution consists of {\em consistent local distribution} on every set $T$ of size at most $r$ (denoted by $\mu_T$). The random variable corresponding to set $T$ is denoted by $X_T$ distributed over $\{0,1\}^{|T|}$. The vector solution and the local distribution are related as follows: Suppose $T$ and $U$ are subsets of $V$ such that  $|T\cup U|\leq r$ and the assignments $\alpha \in \{0,1\}^{|T|}$ and $\beta \in \{0,1\}^{|U|}$ are consistent on $T\cap U$ then

$$ \langle \v{v_{T,\alpha}}, \v{v_{U,\beta}}  \rangle = \Pr_{\mu_{T\cup U}} (X_T = \alpha , X_U = \beta).$$

To ensure the consistency among local distributions, we have to add the constraints~\ref{eq:consistency1} and ~\ref{eq:consistency2} to the SDP in Figure~\ref{figure:newsdpsimmaxcut-lasserre}. Here if $\alpha \in \{0,1\}^{|S|}$ is an assignment to the vertices in $S$, and if $S'\subset S$, $\alpha_{| S'} \in \{0,1\}^{|S'|}$ denotes the assignment $\alpha$ restricted to the vertices in $S'$. Also, if $\alpha$ and $\beta$ are assignments to sets $S$ and $T$ agreeing on $S\cap T$, then we denote $\alpha \circ \beta$ an assignment to $S\cup T$. We also add the set of constraints (Equation~\ref{eq:partialfixing} in Figure~\ref{figure:newsdpsimmaxcut-lasserre}) to capture the partial assignment $h:S^\star\rightarrow\{0,1\}$ given by pre-processing.

With these definitions and constraints, the objective is to ensure that for all $\ell \in [k]$,
\begin{align*}
&\begin{aligned}[t]
 {\sum_{e=\{i,j\} \in \calE_\ell} \calE_\ell(e)\Pr}&  {\left[ X_{\{i,j\}} = (0,1) \vee X_{\{i,j\}} = (1,0) \right]}\\
& {\geq (1-3\epsilon)c_\ell}
\end{aligned}
\end{align*}

A simple way to capture this would be to write the objective of the SDP solution similar to the basic SDP formulation, as follows.

\begin{align*}
&\begin{aligned}[t]
 {\sum_{e = \{i,j\} \in \calE_\ell} \calE_\ell(e)}&  {\left(\|\v{v_{\{(i,j), (0,1)\}}}\|_2^2 + \|\v{v_{\{(i,j), (1,0)\}}}\|_2^2\right) }\\
& {\geq (1-3\epsilon)c_\ell}
\end{aligned}
\end{align*}

\begin{lemma}
\label{lemma:lasserre_feasible}
$r$-round Lasserre SDP shown in Figure~\ref{figure:newsdpsimmaxcut-lasserre} has a feasible solution.
\end{lemma}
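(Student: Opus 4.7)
The plan is to reduce feasibility of the $r$-round Lasserre lift to feasibility of the basic $\sdp^\star$, which was already established in Lemma~\ref{lemma:sdp_feasible}. The key observation is that any single integral assignment $f:V\to\{0,1\}$ satisfying the basic constraints naturally lifts to an $r$-round Lasserre solution, because the RHS of the scaled constraints (\ref{eq:lasserreobjective}) and (\ref{eq:preserve_active}) carries the factor $\|\v{v_{\{S,\alpha\}}}\|^2$, which in a lifted integral solution is either $0$ (making the constraint trivial) or $1$ (reducing the constraint to the basic one).

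Concretely, I would proceed as follows. First, invoke Lemma~\ref{lemma:sdp_feasible} to obtain an integral assignment $f:V\to\{0,1\}$ that extends $h$ on $S^\star$ and satisfies, for every low-variance instance $\ell\in\mathcal{L}$,
\[
\val(f,\calE_\ell)\ \geq\ (1-3\epsilon)\,c_\ell
\quad\text{and}\quad
\sum_{e\in\Act(S^\star)}\calE_\ell(e)\cdot e(f)\ \geq\ \tfrac{\epsilon}{3}\cdot\activedeg_{S^\star}(\ell).
\]
Fix an arbitrary unit vector $\v{v_\emptyset}$, and for every $T\subseteq V$ with $|T|\leq r$ and every $\alpha\in\{0,1\}^{|T|}$ define
\[
\v{v_{\{T,\alpha\}}}\ =\
\begin{cases}
\v{v_\emptyset}, & \text{if }f|_T=\alpha,\\
0, & \text{otherwise.}
\end{cases}
\]

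Second, I would verify the non-objective constraints one by one; each reduces to a trivial case analysis. For consistency (\ref{eq:consistency1}) and (\ref{eq:consistency2}), note that $\langle\v{v_{\{S,\alpha\}}},\v{v_{\{T,\beta\}}}\rangle$ equals $1$ exactly when both $f|_S=\alpha$ and $f|_T=\beta$, which in turn is equivalent to $f|_{S\cup T}=\alpha\circ\beta$ (and is impossible when $\alpha|_{S\cap T}\neq\beta|_{S\cap T}$). The norm-versus-inner-product relations with $\v{v_\emptyset}$ are immediate, and the partial-fixing constraint (\ref{eq:partialfixing}) holds because $f|_{S^\star}=h$ forces $\v{v_{\{i,h(i)\}}}=\v{v_\emptyset}$ for every $i\in S^\star$.

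Third, for the scaled constraints (\ref{eq:lasserreobjective}) and (\ref{eq:preserve_active}), consider any pair $(S,\alpha)$ with $|S|\leq r-2$. If $f|_S\neq\alpha$, then $\|\v{v_{\{S,\alpha\}}}\|^2=0$, and by the consistency of the lift every term $\|\v{v_{\{S\cup\{i,j\},\alpha\circ(b_1,b_2)\}}}\|^2$ on the LHS is likewise $0$, so the constraint reads $0\geq 0$. If $f|_S=\alpha$, then $\|\v{v_{\{S,\alpha\}}}\|^2=1$ and the LHS of (\ref{eq:lasserreobjective}) collapses to $\sum_{e=\{i,j\}\in\calE_\ell}\calE_\ell(e)\cdot e(f)=\val(f,\calE_\ell)\geq(1-3\epsilon)c_\ell$, matching the RHS; constraint (\ref{eq:preserve_active}) collapses analogously to the active-degree bound satisfied by $f$. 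This exhausts all constraints and yields feasibility.

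The argument is almost entirely bookkeeping — there is no real obstacle once one spots that a single integral assignment gives a valid Lasserre lift because the RHS scales with $\|\v{v_{\{S,\alpha\}}}\|^2$. The only substantive input is Lemma~\ref{lemma:sdp_feasible}, which does the probabilistic work of producing the integral assignment $f$.
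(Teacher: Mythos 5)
Your proposal is correct and takes essentially the same approach as the paper: both lift the integral solution from Lemma~\ref{lemma:sdp_feasible} to the Lasserre hierarchy by setting $\v{v_{\{T,\alpha\}}} = \v{v_\emptyset}$ when $f|_T = \alpha$ and $0$ otherwise. The paper's proof is terser, leaving the constraint-by-constraint verification implicit, but the underlying argument is identical to yours.
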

\begin{proof}
Note that the feasible solution provided for the basic SDP in  Lemma~\ref{lemma:sdp_feasible} is integral. Therefore, we can directly conclude that the Lasserre lift of the SDP is feasible, as the same solution can be extended to the Lasserre SDP.

Assign $\v {v}_{S,\alpha}$ to $\v{v}_\emptyset$ if in the integral solution, the vertices in the set $S$ were assigned to $\alpha$ in that order, otherwise assign  $\v {v}_{S,\alpha}$ to $0$.
\end{proof}

In order to make the solution locally independent, we will need to condition based on the local distribution (Refer Section~\ref{sec:indep-local-sol}). Therefore, we need to re-write the objective so that it is satisfied (w.r.t the conditioned local distribution) even after conditioning on at most $r$ variables, as shown in Equation~\ref{eq:lasserreobjective} in the SDP formulation.

Also, similar to the previous case, we need to ensure that the solution post-conditioning still cuts at least a constant fraction of the active edges, which is ensured by adding the set of constraints specified in Equation~\ref{eq:preserve_active} in the SDP.

We observe that solving the SDP using ellipsoid method can result in a small additive error, and if $\activedeg_{S^\star}(\ell)$ is small compared to this additive error, the error would be significant. This will not cause any issues and we elaborate on this more. We can solve the SDP using ellipsoid method with an error of $\epsilon$ in time polynomial in $n$ and $\log(1/\epsilon)$. Therefore, we can take $\epsilon$ to be $\exp(-\poly(n))$ and still solve the SDP in time polynomial in $n$. We assumed that the non-zero edge weights are at least $\exp(-n^{c})$ for some constant $c>0$. Therefore, if the active degree is non-zero, it is at least $\exp(-n^{c})$. If we take $\epsilon  = \exp(-n^{c'})$ for $c'>>c$, we can solve the SDP in time polynomial in $n$ and get a vector solution which satisfies all the constraints upto additive error $\epsilon$ which is upto multiplicative factor of $(1+o(1))$. This will not have a major effect on our analysis and hence we assume from here onward that the vector solution that we get satisfies the all the constraints exactly.

\subsubsection{Obtaining independent local solution}
\label{sec:indep-local-sol}
The notion of independent solution (which is formalized below in Definition~\ref{def:local_independence}) that we need is different from \cite{RT12}. Following procedure in Figure~\ref{fig:indep} is used to achieve the kind of independence we need.

\begin{definition}
\label{def:local_independence}
A Lasserre solution is $\delta$-independent if it satisfies the following condition.
$$ \forall \ell \in \calL,  \E_{\substack{a, b\sim \activedist_{S^\star}(\ell) } }\left[\sum_{i,j \in \{1,2\}}I (X_{a_i};  X_{b_j})\right] \leq \delta.$$
\end{definition}

\begin{figure*}[ht!]
\begin{tabularx}{\textwidth}{|X|}
\hline
\vspace{5pt}
{\bf Input}: $r+2$ round Lasserre solution of a given simultaneous $\maxcut$ instance, $\delta \geq \frac{32k}{r}$\\
{\bf Output}: $\frac{\delta}{2}$-{\em independent} 2-round Lasserre solution.
\begin{enumerate}
\item For all $\ell_1, \ldots, \ell_{r/2} \in \calL$, and for all edges $e^i \in \activedist_{S^\star}(\ell_i)$ for all $i\in [r/2]$.
\begin{itemize}
\item Let $S = \cup_{i\in [r/2]} \{e^i_1, e^i_2\}$ be the endpoints of all the edges from $(1)$.
\item For every $\alpha \in \{0,1\}^{|S|}$ such that $\Pr[X_S = \alpha] >0$ in the local disctibution:
\begin{itemize}
\item Condition the $\sdp$ solution on the event $X_S = \alpha$.
\item Output if conditioned solution if it is $\frac{\delta}{2}$-independent.
\end{itemize}
\end{itemize}
\end{enumerate}
\tabularnewline \hline
\end{tabularx}
\caption{Making locally independent solution}
\label{fig:indep}
\end{figure*}

\begin{lemma}
\label{lemma:main_indep}
For all $\delta>0$, there exists $t\leq 2k/\delta$ and edges $e^1, e^2, \ldots, e^t\in \calE$ such that
\begin{align}
\label{eq:indep}
&\forall \ell\in \calL,\\
&\begin{aligned}[t] \E_{a,b \sim \activedist_{S^\star}(\ell)}  [I (X_{a_1},& X_{a_2};  X_{b_1}, X_{b_2} |\\
 &X_{e^1_1},  X_{e^1_2}, \ldots,  X_{e^t_1}, X_{e^t_2})]
\leq \delta\end{aligned}\nonumber
\end{align}
\end{lemma}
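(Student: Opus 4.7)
The plan is to prove this by an iterative conditioning argument in the style of Raghavendra--Tan, using the joint entropy of edge-endpoint variables as a potential function. First I would define
\[ \Phi(C) \defeq \sum_{\ell \in \calL} \E_{a \sim \activedist_{S^\star}(\ell)} H\bigl(X_{a_1}, X_{a_2} \mid X_C\bigr), \]
where $C$ denotes the set of vertices whose variables have been conditioned on so far. Because each pair $(X_{a_1}, X_{a_2})$ takes values in $\{0,1\}^2$, we start with $\Phi(\emptyset) \leq 2|\calL| \leq 2k$, and by Fact~\ref{claim:entropy_cond} every summand (and hence $\Phi$) is non-negative and monotonically non-increasing as $C$ grows.

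Starting from $C = \emptyset$ and $t = 0$, I would iterate: while some $\ell \in \calL$ violates~\eqref{eq:indep}, i.e.
\[ \E_{a,b \sim \activedist_{S^\star}(\ell)}\bigl[I(X_{a_1}, X_{a_2}; X_{b_1}, X_{b_2} \mid X_C)\bigr] > \delta, \]
expand the mutual information as $I(A;B \mid X_C) = H(A \mid X_C) - H(A \mid B, X_C)$, so the violation rewrites as
\[ \E_{a, b}\Bigl[ H(X_{a_1}, X_{a_2} \mid X_C) - H(X_{a_1}, X_{a_2} \mid X_{b_1}, X_{b_2}, X_C) \Bigr] > \delta. \]
Averaging over $b$ then produces a specific edge $b^\star \in \activedist_{S^\star}(\ell)$ that witnesses the drop on its own. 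I would set $e^{t+1} \defeq b^\star$, append its endpoints to $C$, and increment $t$. The $\ell$-th summand of $\Phi$ drops by more than $\delta$; every other summand can only decrease; so the total potential $\Phi$ strictly decreases by more than $\delta$ at each step.

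Since $\Phi$ begins at most $2k$ and never goes below zero, the loop must terminate after at most $t \leq 2k/\delta$ iterations, and at termination the stopping condition~\eqref{eq:indep} holds for every $\ell \in \calL$, yielding the lemma. The one piece of bookkeeping---not really an obstacle---is making sure that every entropy and conditional mutual information above is legitimately computed against a Lasserre local distribution: $|C|$ grows to at most $2t \leq 4k/\delta$ vertices, and the queried random variables $X_{a_1}, X_{a_2}, X_{b_1}, X_{b_2}$ contribute at most four more, so provided the SDP level is at least $2t + 4$ (which is ensured by the algorithm's parameter $t = \tfrac{2k}{\gamma}\log(21/\gamma)$ and the $(r+2)$-round Lasserre lift used in Figure~\ref{fig:indep}), the required consistent joint distributions are all available.
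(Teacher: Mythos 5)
Your proof is correct and follows essentially the same approach as the paper: both use the potential $\sum_{\ell \in \calL} \E_{a} H(X_{a_1}, X_{a_2})$ (yours with the conditioning set $C$ made explicit), both bound it above by $2k$ and below by $0$, both use the chain-rule identity $I(A;B\mid X_C) = H(A\mid X_C) - H(A\mid B,X_C)$ to extract a witnessing edge $b^\star$ that drops the $\ell$-th summand by more than $\delta$, and both invoke Fact~\ref{claim:entropy_cond} to ensure the other summands cannot increase. Your version is slightly more explicit than the paper's (the paper reasons about "the current conditioned solution" rather than writing $\Phi(C)$), and your closing remark about the Lasserre level slightly conflates the lemma's $t$ (number of conditioning rounds, $\leq 2k/\delta$) with the preprocessing parameter $t$ of the algorithm, but the substance of the argument matches the paper's.
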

\begin{proof}

Consider the following potential function,
$$\phi = \sum_{\ell \in \mathcal{L}} \E_{a\in \activedist_{S^\star}(\ell)} H(X_{a_1}, X_{a_2}).$$
As entropy of a bit is at most $1$, clearly $\phi \leq 2k$. We have the following identity for each $\ell\in \calL$ which follows from conditional entropy and linearity of expectation

\begin{align*}
&\hspace{-10pt}\E_{a, b\in  \activedist_{S^\star}(\ell)} [ H(X_{a_1}, X_{a_2} | X_{b_1}, X_{b_2})] \\
&\hspace{10pt} \begin{aligned}[t] = \E_{a\in \activedist_{S^\star}(\ell)} &[ H(X_{a_1}, X_{a_2})] -  \\
\MoveEqLeft \E_{a,b\in  \activedist_{S^\star}(\ell)} I( X_{a_1}, X_{a_2} ; X_{b_1}, X_{b_2} )\end{aligned}
\end{align*}
This identity suggests that if for some $\ell\in \calL$,  $ \E_{a,b\in  \activedist_{S^\star}(\ell)} I( X_{a_1}, X_{a_2} ; X_{b_1}, X_{b_2} ) >\delta$ then there exists a conditioning which reduces the potential function by at least $\delta$. Thus, either the current conditioned solution satisfies (\ref{eq:indep}) in which case we are done or there exists an edge  $b$ such that if we condition the SDP solution based on the value of its endpoints $(b_1, b_2)$ according to the local distribution then the potential function decreases by at least $\delta$. So, if we fail to achieve (\ref{eq:indep}) then $\phi$ decreases by at least $\delta$. As entropy is always non-negative and conditioning never increases entropy (Fact~\ref{claim:entropy_cond}),  this process cannot go beyond $2k/\delta$ conditioning. Thus, before at most $2k/\delta$ conditioning, we are guaranteed to achieve (\ref{eq:indep}).

\end{proof}

The following fact follows from the data processing inequality (Theorem~\ref{theorem:DPI}).
\begin{fact}
\label{fact:mi_ub}
If $X_1, X_2, Y_1$ and $Y_2$ are random variables then for $i,j\in \{1,2\}$, we have
$$ I(X_i; Y_j) \leq I(X_1, X_2; Y_1, Y_2).$$
\end{fact}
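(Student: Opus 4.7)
The plan is to derive Fact~\ref{fact:mi_ub} as an immediate corollary of the Data Processing Inequality (Theorem~\ref{theorem:DPI}) already recorded in the preliminaries. The key observation is that each single coordinate $X_i$ is a deterministic function of the pair $(X_1, X_2)$, namely the projection onto the $i$-th coordinate, and similarly $Y_j$ is a deterministic function of $(Y_1, Y_2)$. These are precisely the hypotheses under which Theorem~\ref{theorem:DPI} asserts that coarsening either side of a mutual information can only decrease its value.

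Concretely, in the notation of Theorem~\ref{theorem:DPI} I would set $W := (X_1, X_2)$ and $Z := (Y_1, Y_2)$, and instantiate the ``coarser'' variables by $X := X_i$ and $Y := Y_j$. The coordinate projections $(x_1,x_2) \mapsto x_i$ and $(y_1,y_2) \mapsto y_j$ are deterministic, so $X$ is fully determined by $W$ and $Y$ is fully determined by $Z$. Applying Theorem~\ref{theorem:DPI} directly gives
\[
 I(X_i; Y_j) \;=\; I(X; Y) \;\leq\; I(W; Z) \;=\; I(X_1, X_2; Y_1, Y_2),
\]
for every choice of $i, j \in \{1, 2\}$, which is the claim.

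There is essentially no obstacle here: the fact is just monotonicity of mutual information under coarsening of each marginal, and the form of Theorem~\ref{theorem:DPI} stated in the preliminaries is already phrased asymmetrically in terms of two independent deterministic maps, one on each side, which matches our needs exactly. The only minor subtlety worth noting is that both reductions (on the $X$-side and on the $Y$-side) are performed simultaneously via a single invocation of the DPI; one does not need to iterate or chain the inequality.
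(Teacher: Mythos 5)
Your proposal is correct and matches the paper's approach exactly: the paper states that Fact~\ref{fact:mi_ub} follows from the Data Processing Inequality (Theorem~\ref{theorem:DPI}), and your instantiation with $W = (X_1, X_2)$, $Z = (Y_1, Y_2)$ and coordinate projections is precisely the intended reduction.
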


The following corollary follows from Lemma~\ref{lemma:main_indep} and Fact~\ref{fact:mi_ub}.
\begin{corollary}
\label{corr:mi_reduction}
For all $\delta>0$, there exists $t\leq \frac{2k}{\delta}$, and edges $e^1, e^2, \ldots, e^t\in \calE$,  such that
\begin{align*}
\forall \ell\in \calL,\hspace{40pt}& \\
 {\E_{ a, b\sim \activedist_{S^\star}(\ell)} \Bigg[\sum_{i,j \in \{1,2\}}}& {I (X_{a_i};  X_{b_j} | }\\
& {X_{e^1_1},  X_{e^1_2}, \ldots,  X_{e^{t-1}_1}, X_{e^{t-1}_2})\Bigg] \leq 4 \delta}
\end{align*}
\end{corollary}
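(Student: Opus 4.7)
The plan is to deduce the corollary from Lemma~\ref{lemma:main_indep} by a direct application of the (conditional) data processing inequality recorded in Fact~\ref{fact:mi_ub}. The only real content is to pass from the joint mutual information $I(X_{a_1}, X_{a_2}; X_{b_1}, X_{b_2} \mid \cdot)$, which the lemma controls, to the sum over the four pairwise mutual informations $I(X_{a_i}; X_{b_j} \mid \cdot)$ that appears in the corollary.

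First, I would invoke Lemma~\ref{lemma:main_indep} with the same parameter $\delta$ to obtain a value $t \leq 2k/\delta$ and edges $e^{1}, \ldots, e^{t}$ such that, writing $Z$ for the tuple of endpoint variables $(X_{e^{1}_{1}}, X_{e^{1}_{2}}, \ldots, X_{e^{t}_{1}}, X_{e^{t}_{2}})$ that we condition on, we have for every $\ell \in \calL$
$$\E_{a, b \sim \activedist_{S^{\star}}(\ell)}\bigl[I(X_{a_{1}}, X_{a_{2}}; X_{b_{1}}, X_{b_{2}} \mid Z)\bigr] \leq \delta.$$

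Second, I would fix an arbitrary pair $(i, j) \in \{1,2\}^{2}$ and edges $a, b$, and observe that $X_{a_{i}}$ is a deterministic function of $(X_{a_{1}}, X_{a_{2}})$ and $X_{b_{j}}$ is a deterministic function of $(X_{b_{1}}, X_{b_{2}})$. Applying the data processing inequality (Theorem~\ref{theorem:DPI}) to the joint distribution obtained after conditioning on each fixed value of $Z$, and then averaging over $Z$, gives the conditional version of Fact~\ref{fact:mi_ub}:
$$I(X_{a_{i}}; X_{b_{j}} \mid Z) \leq I(X_{a_{1}}, X_{a_{2}}; X_{b_{1}}, X_{b_{2}} \mid Z).$$
The only subtlety here is checking that Fact~\ref{fact:mi_ub} remains valid in the conditional setting, but this is immediate from its pointwise application and linearity of expectation.

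Third, I would sum this bound over the four pairs $(i, j) \in \{1,2\}^{2}$ and take expectation over $a, b \sim \activedist_{S^{\star}}(\ell)$, obtaining
$$\E_{a, b} \Biggl[\sum_{i, j \in \{1,2\}} I(X_{a_{i}}; X_{b_{j}} \mid Z)\Biggr] \leq 4 \cdot \E_{a, b}\bigl[I(X_{a_{1}}, X_{a_{2}}; X_{b_{1}}, X_{b_{2}} \mid Z)\bigr] \leq 4\delta,$$
which is exactly the conclusion of the corollary (up to the relabeling of the conditioning edges as $e^{1}, \ldots, e^{t-1}$ in the corollary's notation, which only changes the displayed indexing convention). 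There is essentially no obstacle: once Lemma~\ref{lemma:main_indep} is in hand, the corollary is just a four-fold application of DPI plus linearity of expectation.
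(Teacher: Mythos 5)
Your proposal is correct and matches the paper's intended argument: the paper states only that the corollary ``follows from Lemma~\ref{lemma:main_indep} and Fact~\ref{fact:mi_ub},'' and your proof supplies exactly those two ingredients—invoke the lemma to get the conditioning edges, apply the (conditional version of the) data-processing inequality from Fact~\ref{fact:mi_ub} to each of the four pairs, and sum. Your observation that the paper's Fact~\ref{fact:mi_ub} is stated unconditionally but extends to the conditional setting by pointwise application and averaging is a fair and correct point of care, as is your note that the $e^1,\ldots,e^{t-1}$ versus $e^1,\ldots,e^t$ indexing mismatch between the lemma and the corollary is a harmless notational artifact.
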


\begin{lemma}
\label{lemma:conditional_sol_prop}
There exists a fixing of at most $\frac{32k}{\delta}$ {\em variables} such that the conditioned solution is $\delta/2$ independent as well as  satisfies all constraints from $\sdp^\star(h^\star)$.
In particular, the algorithm in Figure~\ref{fig:indep} returns such a $\delta/2$ independent solution. Also, the running time is bounded by $n^{O(r)}$.
\end{lemma}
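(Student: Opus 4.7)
The plan is to derive the existence statement from Corollary~\ref{corr:mi_reduction} and then show that the exhaustive search in Figure~\ref{fig:indep} finds such a conditioning. First I would invoke Corollary~\ref{corr:mi_reduction} with parameter $\delta' = \delta/8$, which yields $t \le 2k/\delta' = 16k/\delta$ edges $e^1,\dots,e^t$ such that after conditioning on their endpoints $S_0 \defeq \bigcup_i \{e^i_1,e^i_2\}$ (of size $\le 32k/\delta$), the mutual-information sum from Definition~\ref{def:local_independence} is at most $4\delta' = \delta/2$ for every $\ell \in \calL$. Because this bound is the expectation over $\alpha\sim X_{S_0}$ of the corresponding sum in the $(X_{S_0}=\alpha)$-conditioned solution, there must exist at least one $\alpha^\star$ with $\Pr[X_{S_0}=\alpha^\star]>0$ such that the conditioned solution is $\delta/2$-independent. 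This establishes the existence part of the lemma.

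Next I would argue that Figure~\ref{fig:indep} recovers such a pair $(S_0,\alpha^\star)$. Since $\delta \ge 32k/r$, we have $r/2 \ge 16k/\delta \ge t$, so the outer loop enumerates, among other candidates, the specific ordered tuple $(e^1,\dots,e^t)$ (padding to length $r/2$ by repeating edges does not increase $|S_0|$); the inner loop then enumerates $\alpha^\star$ among all assignments of positive local probability. The algorithm outputs the first conditioning it finds that is $\delta/2$-independent, and by the existence argument such a conditioning is encountered.

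For preservation of the $\sdp^\star(h^\star)$ constraints, I would use the standard Lasserre-conditioning recipe: setting $\v{v}'_{T,\beta} \defeq \v{v}_{T\cup S_0,\,\beta\circ\alpha^\star}/\|\v{v}_{S_0,\alpha^\star}\|$ turns the $(r+2)$-round input solution into an $(r+2-|S_0|)$-round Lasserre solution for the residual instance after the partial fixing $h^\star \cup \alpha^\star$. Every constraint of Figure~\ref{figure:newsdpsimmaxcut-lasserre} is stated uniformly over $|S|\le r-2$, so applying each with $S \supseteq S_0$ and dividing by $\|\v{v}_{S_0,\alpha^\star}\|^2$ gives the same constraint on $\v{v}'$ at the corresponding lower level. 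Since $|S_0|\le 32k/\delta \le r$ and $\sdp^\star(h^\star)$ only imposes constraints on sets of size $\le 2$, the conditioned solution satisfies all of them (we have at least $2$ rounds of slack as $r\ge 32k/\delta + 2$ can be enforced up to constants).

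The remaining item is the running time. The outer loop enumerates ordered tuples of $r/2$ edges, contributing $(kn^2)^{r/2}$ choices; the inner loop enumerates at most $2^{r}$ assignments; and computing the conditioned local distributions and the $\delta/2$-independence test reduces to inverting and querying a fixed Gram matrix of vectors, which is polynomial in $n$. Since $k$ is a constant, the total running time is $n^{O(r)}$, as claimed. The main subtlety in executing this plan is the bookkeeping that converts the \emph{expected} mutual-information bound of the corollary into the existence of a single high-probability outcome $\alpha^\star$ and verifies that all of the original SDP constraints survive the conditioning; both are routine once the Lasserre conditioning identity is set up.
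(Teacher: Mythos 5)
Your plan mirrors the paper's proof almost exactly: both rely on Corollary~\ref{corr:mi_reduction} (instantiated with a rescaled $\delta$), both argue that the conditioned solution satisfies the Lasserre constraints because constraints~(\ref{eq:lasserreobjective}) and~(\ref{eq:preserve_active}) are quantified over all $|S|\le r-2$, and both treat the running time as an enumeration bound. The paper's own proof is terse (essentially ``follows from the corollary and Fact~\ref{fact:mi_ub}''), so your attempt to make the existence argument explicit is welcome.

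There is, however, a gap in the step where you pass from the conditional mutual-information bound to a single outcome $\alpha^\star$. Corollary~\ref{corr:mi_reduction} with $\delta'=\delta/8$ gives, \emph{for each} $\ell\in\calL$ separately, an upper bound of $\delta/2$ on $\E_{\alpha\sim X_{S_0}}\bigl[M_\ell(\alpha)\bigr]$ where $M_\ell(\alpha)$ is the Definition~\ref{def:local_independence} sum for the $(X_{S_0}=\alpha)$-conditioned solution. From these $|\calL|\le k$ separate expectation bounds you cannot conclude that there is a single $\alpha^\star$ with $M_\ell(\alpha^\star)\le\delta/2$ simultaneously for every $\ell$: applying Markov to each $\ell$ gives $\Pr_\alpha[M_\ell(\alpha)>\delta/2]\le1$, so the union bound over $\ell$ is vacuous. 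To make the existence step go through you would need to invoke the corollary with $\delta'=\Theta(\delta/k)$, which inflates the number of conditioned variables from $32k/\delta$ to $\Theta(k^2/\delta)$, or else bound $\sum_\ell M_\ell(\alpha)$ directly via the potential from Lemma~\ref{lemma:main_indep}. To be fair, the paper's own proof does not spell out this step either, so this is an issue inherited from the source rather than introduced by you; but since you made the step explicit, the argument as written does not establish the claimed $32k/\delta$ bound. (None of this affects the algorithm's correctness at the level of the final theorem, since $r$ is taken to be $\poly(k,1/\epsilon)$ anyway; it only affects the constant in the lemma's variable count.) The remainder of your proposal — the Lasserre conditioning identity for constraint preservation and the $n^{O(r)}$ running-time accounting — is fine and matches the paper's intent.
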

\begin{proof}
$\delta/2$ independence follows from Corollary~\ref{corr:mi_reduction} for $t = \frac{16t}{\delta}$ and Fact~\ref{fact:mi_ub}. Also, we can verify if a given $\sdp$ solution is $\delta/2$-independent or not in time polynomial in $n$. We now prove the later part.

As the conditioning maintains the marginal distribution of variables and because of the the Inequality (\ref{eq:lasserreobjective}) and (\ref{eq:preserve_active}), the constraints about the $\sdp$ cut value as well as the fraction of  active edges that are cut remain valid in the conditioned solution.  Hence,  from Lemma~\ref{lemma:sdp_feasible} $\sdp^\star(h^\star)$ remains feasible.
\end{proof}

\subsubsection{Rounding Procedure}
\label{sec:rounding}
In this section, we describe the rounding procedure for variables in $V\backslash S^\star$. The input to this procedure is 2 round Lasserre solution which is $\delta$-independent. We use a slight variation of GW rounding procedure to round the $\sdp$ vector solution. In particular, we want to maintain the bias of heavily biased random variable in our rounding procedure.\\

$\sdp$ gives the vector solution $\v{v_{i,0}}, \v{v_{i,1}}$ for all $i\in [n]$. Let $\mu_i = 2\E[X_i] - 1$, the expectation is according to the local distribution. Define $\v{v_i} = \v{v_{i,1}} - \v{v_{i,0}}$. These $\v{v_i}$ are the unit vectors (as $\|\v{v_i}\|^2 =  \|\v{v_{i,1}} - \v{v_{i,0}}\|^2 = \|\v{v_{i,1}}\|^2 + \|\v{v_{i,0}}\|^2 - 2\langle v_{i0}, v_{i1}\rangle = \Pr[X_i=0] + \Pr[X_i=1] - 0 = 1$). Let $\v{w_i}$ be component of $\v{v_i}$ orthogonal to $\v{v_\emptyset}$ ($\v{v_i} = \mu_i \v{v_\emptyset} + \v{w_i}$), $\| \v{w_i} \|_2 = \sqrt{1-\mu_i^2}$. Let $\overline{\v w}_i$ be the normalized unit vector of $\v{w_i}$. The rounding procedure is applied on vectors $\overline{\v w}_i$ along with the ``bias'' of each variable $\langle \v{v_i}, \v{v_\emptyset}\rangle$. The rounding procedure is shown in Figure~\ref{fig:rounding}.
\begin{figure*}[ht!]
\begin{tabularx}{\textwidth}{|X|}
\hline
\vspace{0pt}
{\bf Input}: $\delta$-independent $2$ round Lasserre solution, biases $\mu_i \in [-1,+1]$  and a function $f_R:[-1,1]\rightarrow[-1,1]$ which is bounded by above and below with some constant degree polynomials\\
{\bf Output}: A partition of $V$.
\begin{enumerate}
\item Pick a random Gaussian vector $\v{g}$ orthogonal to $\v{v_\emptyset}$ with each co-ordinate distributed as $\mathcal{N}(0,1)$.

\item For each $i\in [n]$
\begin{itemize}
\item Calculate $\xi_i = \langle \v g, \overline{\v w}_i \rangle$.

\item Let $r_i \leftarrow f_R(\mu_i)$

\item  Set $y_i = 1$ if $\xi_i \leq \Phi^{-1}(r_i/2 + 1/2) $, otherwise set $y_i = -1$. (Here, $\Phi$ is the Gaussian CDF)

\end{itemize}
\end{enumerate}
\tabularnewline \hline
\end{tabularx}
\caption{Rounding procedure}
\label{fig:rounding}
\end{figure*}

\subsubsection{Analysis of the rounding procedure}
\label{section:round_analysis}
 We use the notation $\polyltone(x)$ to denote a ``polynomial'' in $x$ with exponents as real numbers in $(0,1)$, such that $\polyltone(x)\rightarrow 0 $ as $x\rightarrow 0$.

Note that if we simply use the rounding function $f_R(x) = x$ as used in \cite{RT12} the we get for each instance,  in expectation the cut produced by the rounding procedure is at least $0.85$ times the SDP value (and hence eventually $0.85$ approximation for simultaneous $\maxcut$). Here, we leverage the fact that the constraints on what rounding functions are good for us are mild compared to \cite{RT12} as explained in Section~\ref{section:noterounding}.
\begin{lemma}
\label{lemma:approx_guarantee}
For a fixed low variance instance, the rounding procedure described in Figure~\ref{fig:rounding} gives an approximation ratio $0.878001(1-3\epsilon)$ in expectation for the following $f_R$,
\begin{align*}
f_R(x) &= 0.79\cdot x +0.07\cdot x^3 + 0.14\cdot x^7
\end{align*}
\end{lemma}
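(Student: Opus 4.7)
By linearity of expectation, the expected weight of edges of instance $\ell$ cut by the rounding decomposes edge-by-edge, and so does the SDP objective: the Lasserre consistency constraints give $\|\v{v_{\{(i,j),(0,1)\}}}\|^2 + \|\v{v_{\{(i,j),(1,0)\}}}\|^2 = \Pr[X_i=0,X_j=1] + \Pr[X_i=1,X_j=0]$ in the local distribution $\mu_{\{i,j\}}$. Hence it suffices to prove the per-edge inequality
\[
\Pr[y_i \neq y_j] \ \ge \ 0.878001 \cdot \Pr[X_i\neq X_j];
\]
summing over edges with weights $\calE_\ell(e)$ and invoking constraint~(\ref{eq:lasserreobjective}) then yields the full factor $0.878001(1-3\eps)$.

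I would next reduce this per-edge inequality to a three-parameter problem. Set $\rho := \langle \overline{\v w}_i, \overline{\v w}_j\rangle$. Using $\v{v_i} = \mu_i\v{v_\emptyset} + \v{w_i}$ one obtains $\langle\v{v_i},\v{v_j}\rangle = \mu_i\mu_j + \rho\sqrt{(1-\mu_i^2)(1-\mu_j^2)}$, and the Lasserre identities give the denominator
\[
\Pr[X_i\neq X_j] \ = \ \tfrac{1}{2}\bigl(1 - \mu_i\mu_j - \rho\sqrt{(1-\mu_i^2)(1-\mu_j^2)}\bigr).
\]
For the numerator, $(\xi_i,\xi_j)$ is a centered bivariate Gaussian with correlation $\rho$, and with $t_i := \Phi^{-1}(\tfrac{1+f_R(\mu_i)}{2})$ the cut probability equals $\Pr[\xi_i \le t_i, \xi_j > t_j] + \Pr[\xi_i > t_i, \xi_j \le t_j]$, a standard bivariate-Gaussian orthant integral computable by Sheppard-type formulas. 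Writing $F(\mu_i,\mu_j,\rho)$ for the ratio of these two quantities, the lemma reduces to showing $F \ge 0.878001$ over the feasible region, namely the semi-algebraic subset of $[-1,1]^3$ on which the four induced probabilities $p_{ab}$ lie in $[0,1]$.

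The inequality $F \ge 0.878001$ would be established by a rigorous computer-assisted verification in the spirit of \cite{Sjo09, ABG12}: partition the feasible cube into a sufficiently fine grid of cells, evaluate $F$ at each grid vertex using interval-arithmetic bivariate Gaussian CDF routines, and extend to full cells via explicit Lipschitz bounds on the numerator and denominator separately. The boundary region $|\mu_i|, |\mu_j| \to 1$ needs special care, since the denominator $\sqrt{(1-\mu_i^2)(1-\mu_j^2)}$ vanishes there; but the polynomial $f_R(x)=0.79x + 0.07x^3 + 0.14x^7$ is chosen by computer search so that $f_R(\pm 1)=\pm 1$, which drives $t_i \to \pm\infty$ at a matched rate and forces numerator and denominator to vanish together, keeping $F$ bounded below along the boundary.

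The principal obstacle lies precisely in the last step: we are asked for a uniform lower bound only $0.0006$ below the Goemans--Williamson constant $\alpha_{GW}\approx 0.8786$, over a three-dimensional semi-algebraic region with a vanishing denominator on the boundary. This necessitates (a) rigorous numerical control of bivariate Gaussian tail probabilities, (b) tight handling of the $|\mu_i|\to 1$ singularity via the matched behaviour of $f_R$, and (c) a polynomial $f_R$ engineered so that the worst-case triple in $(\mu_i,\mu_j,\rho)$-space still meets the target. The degree-$7$ correction term in $f_R$ — admissible here but not in \cite{RT12}, where $f_R(x)=x$ is forced by the \maxbisection balance requirement discussed in Section~\ref{section:noterounding} — is precisely what pushes the worst-case ratio from the $\approx 0.85$ bound of \cite{RT12} up to $0.878001$.
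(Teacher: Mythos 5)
Your reduction — per-edge ratio $\Pr[y_i\neq y_j]/\Pr[X_i\neq X_j]$, expressed through $\mu_i,\mu_j,\rho$ as a bivariate Gaussian orthant probability over SDP inner product, verified by interval-arithmetic branch-and-bound over the feasible configuration cube — is exactly the approach the paper takes, which likewise defers the heart of the argument to a computer-assisted prover (adapted from~\cite{ABG12}, subdividing the cube recursively rather than over a uniform grid, a minor implementation difference). Your extra discussion of the $|\mu_i|\to 1$ boundary and the role of $f_R(\pm1)=\pm1$ is sound intuition the paper leaves implicit, and the $(1-3\eps)$ bookkeeping via constraint~(\ref{eq:lasserreobjective}) matches.
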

\begin{proof}
The proof of this lemma is numerical. We arrive at a informal approximate value for the bound using Matlab code (0.878001) and verify it using computer assisted techniques. The multiplicative loss of $(1-3\epsilon)$ is because of using $\sdp^\star$. We elaborate on the exact constant 0.878001 that we get next. The probability $p_{ij}$ that a given edge $(i,j)$ is cut by the rounding procedure is a function of $\mu_i$ and $\mu_j$, whereas its SDP contribution is  a quantity $q_{ij} := 1-\langle \v v_i, \v v_j\rangle/2$.  Thus to show a lower bound on approximation ratio it is sufficient to prove the same lower bound on $p_{i,j}/q_{ij}$ for all possible valid configurations of vectors. The program works in a recursive fashion, by continuously splitting the cube (all possible valid configuration) into sub-cubes. In each sub-cube, the program checks if either across all points in the region, the lower bound on $\alpha$ exceeds the approximation ratio we try to prove or if the upper bound on $\alpha$ is lower than the approximation ratio we try to prove. It proceeds with further division into smaller sub-cubes until one of the above is satisfied. If the latter is true at any point, the code returns a failure, and it returns a success if the entire region can be proved to come under the former case. The prover was adapted from \cite{ABG12} and modified to suit our rounding procedure. For more details on the workings of the prover, refer \cite{ABG12}.
\end{proof}

\begin{remark}
\label{remark:roundingf}
It seems possible to improve the constant 0.878001 by using a different $f_R$ which is continuous and satisfies $f_R(1) = 1$ and $f_R(-1) = -1$
However we suspect that a serious new idea would be needed to get a $\alpha_{GW}$-approximation algorithm.
\end{remark}

We need the following lemma from \cite{RT12}.

\begin{lemma}[\cite{RT12}]
\label{lemma:RT_vec_to_mi}
Let $\v{v_i}$ and $\v{v_j}$ be the unit vectors, $\v{w_i}$ and $\v{w_j}$ be the components
 of $\v{v_i}$ and $\v{v_j}$ that are orthogonal to $\v{v_\emptyset}$. Then $|\langle \v{w_i}, \v{w_j}\rangle| \leq 2I(X_i; X_j)$.
\end{lemma}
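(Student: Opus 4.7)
The plan is two-step: first translate the geometric quantity $\langle \v{w_i}, \v{w_j}\rangle$ into a statistic of the level-2 Lasserre distribution on $(X_i, X_j)$, then bound that statistic in terms of $I(X_i; X_j)$.

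First I would use the orthogonal decomposition $\v{v_k} = \mu_k \v{v_\emptyset} + \v{w_k}$ to write $\langle \v{w_i}, \v{w_j}\rangle = \langle \v{v_i}, \v{v_j}\rangle - \mu_i \mu_j$. Expanding $\v{v_k} = \v{v_{k,1}} - \v{v_{k,0}}$ and invoking the Lasserre consistency constraints $\langle \v{v_{i,a}}, \v{v_{j,b}}\rangle = \|\v{v_{\{i,j\},(a,b)}}\|^2 = \Pr[X_i=a, X_j=b]$ (along with $\|\v{v_{k,a}}\|^2 = \Pr[X_k=a]$), one gets $\langle \v{v_i}, \v{v_j}\rangle = \E[Y_iY_j]$ where $Y_k := 2X_k - 1 \in \{-1,+1\}$, and similarly $\mu_k = \E[Y_k]$. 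Hence $\langle \v{w_i}, \v{w_j}\rangle = \mathrm{Cov}(Y_i, Y_j) = \sum_{a,b \in \{0,1\}} (-1)^{a+b}\bigl(p_{ab} - p_{i,a}p_{j,b}\bigr)$, where $p_{ab}$ denotes the local joint probability of $(X_i = a, X_j = b)$.

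For the information-theoretic bound, the triangle inequality gives $|\mathrm{Cov}(Y_i, Y_j)| \leq \sum_{a,b}|p_{ab} - p_{i,a}p_{j,b}| = \|p_{X_iX_j} - p_{X_i} \otimes p_{X_j}\|_1$, i.e.\ twice the total variation distance between the joint and the product of the marginals. The standard tool here is Pinsker's inequality, which controls this $L^1$-distance by $\sqrt{2\, D(p_{X_iX_j} \| p_{X_i} \otimes p_{X_j})} = \sqrt{2 I(X_i; X_j)}$ in nats. To upgrade this to the claimed linear-in-$I$ bound, I would exploit the two-outcome structure: because the marginals are preserved, the four deviations $p_{ab} - p_{i,a}p_{j,b}$ all share a common absolute value $|\epsilon|$ with alternating signs, so the joint lies in a one-parameter family and $I(X_i; X_j)$ reduces to an explicit one-parameter function of $|\epsilon|$. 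Comparing this explicit expression to $4|\epsilon| = |\mathrm{Cov}(Y_i, Y_j)|$ on the regime where $2I(X_i; X_j) < 1$ (outside which the bound is trivial by Cauchy--Schwarz since $|\langle \v{w_i}, \v{w_j}\rangle| \leq 1$) would yield the stated inequality.

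The main obstacle is closing the gap between Pinsker and the stated linear bound: Pinsker alone gives only the weaker $\sqrt{I}$-type bound, and the $2I$ form requires the Boolean symmetry (equal-magnitude alternating deviations) to be used rather than just the total variation distance. The two-step skeleton (translate via Lasserre consistency, then bound by mutual information) is routine, but the sharpened second step, exploiting that only one parameter $|\epsilon|$ governs both sides, is where the real work lies.
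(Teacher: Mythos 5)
This lemma is quoted from \cite{RT12} and the paper gives no proof of it, so there is nothing internal to compare against; I will evaluate your plan on its own terms. Your first two steps are correct and are the standard argument: the Lasserre consistency constraints give $\langle \v{w_i},\v{w_j}\rangle = \langle \v{v_i},\v{v_j}\rangle - \mu_i\mu_j = \mathrm{Cov}(Y_i,Y_j)$ with $Y_k = 2X_k-1$, and since the marginals of the local joint distribution agree with those of the product, the four deviations $p_{ab}-p_{i,a}p_{j,b}$ indeed have a common magnitude $|\epsilon|$ with alternating signs, so $|\mathrm{Cov}(Y_i,Y_j)| = 4|\epsilon| = \|p_{X_iX_j}-p_{X_i}\otimes p_{X_j}\|_1 \le \sqrt{2\,I(X_i;X_j)}$ by Pinsker.

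The third step, however, is a dead end: the linear-in-$I$ bound you are trying to reach is false, and the Boolean symmetry cannot rescue it. Take uniform marginals with $\Pr[X_i = X_j] = (1+\rho)/2$. Then $\langle \v{w_i},\v{w_j}\rangle = \rho$, while $I(X_i;X_j) = \tfrac{1+\rho}{2}\ln(1+\rho)+\tfrac{1-\rho}{2}\ln(1-\rho) = \rho^2/2 + O(\rho^4)$, so the claimed inequality becomes $|\rho|\le \rho^2 + O(\rho^4)$, which fails for every small nonzero $\rho$. This same computation shows Pinsker is tight to leading order in this regime, so there is no sharpening to be had: the relation between covariance and mutual information is inherently quadratic, and the correct statement (the one \cite{RT12} proves, and the one your steps already establish) is $|\langle \v{w_i},\v{w_j}\rangle| \le \sqrt{2\,I(X_i;X_j)}$. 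The lemma as printed appears to be a misquotation. This is harmless for the rest of the paper --- the bound is only consumed inside $\polyltone(\cdot)$ in Lemma~\ref{lemma:variance_final} and in the averaged bound following Lemma~\ref{lemma:conditional_sol_prop}, where the square-root form suffices (replacing $\delta$ by $O(\sqrt{\delta})$ via Jensen) --- but your proposal should stop after Pinsker and prove the square-root version rather than attempt the impossible final step.
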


Above lemma along with Lemma~\ref{lemma:conditional_sol_prop} implies that if we sample edge $(i_1,i_2), (j_1,j_2)\sim  \activedist_{S^\star}(\ell)$ then we have on average,
$$   {|\langle \v{w_{i_1}} , \v{w_{j_1}} \rangle| +  |\langle \v{w_{i_1}} , \v{w_{j_2}} \rangle| +  |\langle \v{w_{i_2}} , \v{w_{j_1}} \rangle| +  |\langle \v{w_{i_2}} , \v{w_{j_2}} \rangle| \leq \delta. }$$
The rounding procedure is assigning values $\pm1$ to variables $y_i$ where $y_i$ is the variable for vertex $i\in V$ and its value decides on which side of cut the vertex $i$ is present in the final solution. Thus $y_i$ is a random variable taking values in $\{+1,-1\}$. We now wish to prove similar guarantee as the following lemma from \cite{RT12}, which relates the mutual information between the pair of rounded variables with the inner product of the corresponding vectors $w$.
\begin{lemma}[\cite{RT12}]
\label{lemma:RT_finalMI}
For $f_R$ such that $f_R(x) = x$, if $|\langle \v{w_i}, \v{w_j}\rangle| \leq \delta$ then $I(y_i; y_j) \leq \delta^{1/3}$.
\end{lemma}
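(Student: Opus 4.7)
The plan is to perform a case analysis on the magnitudes of the biases $\mu_i, \mu_j$. The key starting point is that $\xi_i = \langle \v{g}, \overline{\v w}_i\rangle$ and $\xi_j = \langle \v{g}, \overline{\v w}_j\rangle$ are jointly standard Gaussians with correlation
$$\rho \;=\; \langle \overline{\v w}_i, \overline{\v w}_j\rangle \;=\; \frac{\langle \v{w_i}, \v{w_j}\rangle}{\sqrt{(1-\mu_i^2)(1-\mu_j^2)}},$$
so the hypothesis $|\langle \v{w_i},\v{w_j}\rangle|\leq \delta$ gives $|\rho|\leq \delta/\sqrt{(1-\mu_i^2)(1-\mu_j^2)}$. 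Since $f_R(x)=x$, the threshold for $y_i$ is $\Phi^{-1}(\mu_i/2+1/2)$, which makes $\Pr[y_i=+1]=\mu_i/2+1/2$ exactly, so the marginals of the rounded bits are determined by the biases. Fix a parameter $\eta>0$ to be chosen later, and split into two cases.

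In the first case, suppose $\min(1-\mu_i^2,\,1-\mu_j^2)\leq \eta$. Then one of the rounded variables, say $y_i$, has its $\pm 1$-probabilities within $O(\eta)$ of $\{0,1\}$, so its entropy is bounded by the binary entropy $h_2(O(\eta)) = O(\eta\log(1/\eta))$. Using $I(y_i;y_j)\leq \min(H(y_i),H(y_j))$ immediately yields $I(y_i;y_j) = O(\eta \log(1/\eta))$.

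In the second case, $(1-\mu_i^2)(1-\mu_j^2) \geq \eta^2$, hence $|\rho|\leq \delta/\eta$, and moreover both marginals of $y_i$ and $y_j$ are bounded away from $0$ and $1$ by $\Omega(\eta)$. Here I would apply the classical Price/Sheppard identity for the bivariate Gaussian CDF,
$$\frac{\partial}{\partial \rho}\Pr[\xi_i \le a,\,\xi_j \le b] \;=\; \phi_2(a,b;\rho),$$
integrated from $0$ to $\rho$, to expand $\Pr[y_i=s,y_j=t] - \Pr[y_i=s]\Pr[y_j=t] = O(|\rho|) = O(\delta/\eta)$. Since the four marginal products are each $\Omega(\eta)$, a standard $\chi^2$-to-KL bound gives $I(y_i;y_j) = O((\delta/\eta)^2/\eta) = O(\delta^2/\eta^3)$.

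Balancing the two bounds by choosing $\eta$ of order $\delta^{1/3}$ yields $I(y_i;y_j) = O(\delta^{1/3})$ (with at most a logarithmic factor that can be absorbed by taking a slightly smaller exponent, matching the $\delta^{1/3}$ bound claimed). The main obstacle is the Taylor/Price-formula step in the second case: one needs the deviation of the bivariate Gaussian orthant probability from the product of its marginals to be linear in $\rho$, together with a clean KL-vs-$\chi^2$ bound when the marginals are bounded away from the boundary; both are standard but require some care to keep the constants polynomial in $1/\eta$.
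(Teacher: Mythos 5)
The paper cites this lemma from Raghavendra--Tan without proving it, so there is no ``paper's own proof'' to line up against directly; the closest in-paper analogue is the appendix argument for the four-vector version, namely Lemma~\ref{lemma:closeness} followed by Lemma~\ref{lemma:mi_small_suff_cond}, which are then invoked through Claim~\ref{claim:orthogonal_vec_mi} and Lemma~\ref{lemma:low_mi_rounded}. Your proof uses the same top-level idea as the paper's appendix (split on whether the biases are close to $\pm1$; handle the high-bias side by a trivial entropy bound, and the low-bias side by showing the joint law of the rounded bits is close to the product of marginals), but the machinery for the low-bias step is genuinely different. The paper rotates the (normalized) $\overline{\v w}$-vectors to be near the coordinate axes and conditions on a ``no large cross-coordinate'' event to get a pointwise $O(\delta^{1/4})$ closeness to product (Lemma~\ref{lemma:closeness}), whereas you invoke the Price/Plackett identity $\partial_\rho \Pr[\xi_i\le a,\xi_j\le b] = \phi_2(a,b;\rho)$ and then convert closeness to mutual information via $\chi^2$. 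For two variables, the Price-formula route is cleaner and gives explicit control of the constant in $\rho$; the paper's rotation-and-conditioning argument is messier but scales more mechanically to four variables, which is what it actually needs for Lemma~\ref{lemma:low_mi_rounded}.

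Two small quantitative points worth tightening. First, in Case~2 the denominator in the $\chi^2$-to-KL step should be the \emph{product} of the two marginal deviations from $\{0,1\}$, which is $\Omega(\eta^2)$, not $\Omega(\eta)$; the correct bound is $I(y_i;y_j) = O(\delta^2/\eta^4)$ rather than $O(\delta^2/\eta^3)$. (One can observe that the $2\times 2$ matrix $P-Q$ has vanishing row and column sums, hence all four entries have the same magnitude $c = O(\delta/\eta)$, and $\chi^2 = c^2/\bigl(p(1-p)q(1-q)\bigr)$ with $p(1-p), q(1-q) = \Omega(\eta)$.) Second, when you balance, the corrected bound gives $\eta \sim \delta^{2/5}$ and $I = O(\delta^{2/5}\operatorname{polylog}(1/\delta))$, which dominates $\delta^{1/3}$ only once $\delta$ is below a fixed threshold; for $\delta$ near $1$ the stated inequality $I\le \delta^{1/3}$ is just the trivial $I\le 1$ bound, and you should say so rather than leave the impression that the optimization yields exactly exponent $1/3$. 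Also note for Case~2 you implicitly need $|\rho|\le \delta/\eta$ to be bounded away from $1$ so that $\phi_2(\cdot;\rho)$ stays $O(1)$ along the integration path; with $\eta\sim\delta^{2/5}$ this is $\delta^{3/5}\ll 1$, so it holds, but it deserves one sentence. None of these is a gap in the approach; they are just arithmetic/bookkeeping details.

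Finally, a small remark on the statement itself: your proof gives the exponent as a function of $\delta$ only and is uniform over the biases $\mu_i,\mu_j$, which is exactly what is needed downstream in Lemma~\ref{lemma:low_mi_rounded} (the bound must not degrade as biases approach $\pm1$); good that you checked the marginals $\mu_i/2+1/2$ explicitly rather than leaving this implicit.
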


In our case, we need that the mutual information between the events that a pair of edges are cut is small on average. Thus, our notion of local independence will be useful in proving this guarantee about mutual information.

\begin{lemma}
\label{lemma:low_mi_rounded}
Fix $f_R$ to be the rounding function given by Lemma~\ref{lemma:approx_guarantee}. For a pair of edges $(i_1, i_2)$ and $(j_1, j_2)$, suppose the vectors $w$ corresponding to their endpoints satisfy the following condition,
\begin{align*}
|\langle \v{w_{i_1}} , &\v{w_{j_1}} \rangle| +  |\langle \v{w_{i_1}} , \v{w_{j_2}} \rangle| + \\
&|\langle \v{w_{i_2}} , \v{w_{j_1}} \rangle| +  |\langle \v{w_{i_2}} , \v{w_{j_2}} \rangle| \leq \delta
\end{align*}
then $I(y_{i_1}y_{i_2}; y_{j_1}y_{j_2}) \leq \polyltone(\delta)$.
\end{lemma}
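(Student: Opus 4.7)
The plan is to first reduce the statement to a four-variable mutual-information bound and then exploit the near block-independence of the driving Gaussian. Since $y_{i_1}y_{i_2}$ is a deterministic function of $(y_{i_1}, y_{i_2})$ (and similarly on the $j$-side), the data processing inequality (Theorem~\ref{theorem:DPI}) gives
$$I\bigl(y_{i_1}y_{i_2};\,y_{j_1}y_{j_2}\bigr) \;\leq\; I\bigl((y_{i_1},y_{i_2});\,(y_{j_1},y_{j_2})\bigr),$$
so it suffices to bound the right hand side by $\polyltone(\delta)$. Recall that $y_v = \mathrm{sgn}\bigl(\Phi^{-1}(f_R(\mu_v)/2 + 1/2) - \xi_v\bigr)$, where $(\xi_{i_1},\xi_{i_2},\xi_{j_1},\xi_{j_2})$ is centered Gaussian with cross-block entries $\langle \overline{\v w}_{i_\alpha}, \overline{\v w}_{j_\beta}\rangle = \langle \v w_{i_\alpha}, \v w_{j_\beta}\rangle/(\|\v w_{i_\alpha}\|\,\|\v w_{j_\beta}\|)$. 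Because $\|\v w_v\|^2 = 1-\mu_v^2$, the unnormalized inner-product bound of $\delta$ translates to a normalized cross-correlation bound that can blow up only when some bias $|\mu_v|$ approaches $1$.

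I would therefore split into cases using a threshold $\tau = \delta^{\Theta(1)}$: call a vertex $v$ \emph{heavy} if $|\mu_v| > 1-\tau$. In the generic case where no endpoint is heavy, each $\|\v w_v\| \gtrsim \sqrt{\tau}$ so the normalized cross covariances are at most $O(\delta/\tau)$, and each threshold $\Phi^{-1}(f_R(\mu_v)/2+1/2)$ lies in a bounded interval, so all $2^4$ joint orthant probabilities are bounded below by $\Omega_\tau(1)$. A Plackett-formula / Hermite-expansion calculation (analogous to the proof of Lemma~\ref{lemma:RT_finalMI} in~\cite{RT12}, but in four Gaussian variables instead of two) shows the joint orthant probabilities differ from products of the two-dimensional pair-marginals by $O(\delta/\tau)$. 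Converting this total variation bound to a KL divergence bound via $D_{KL}(P\|Q) \leq \sum_x (P(x)-Q(x))^2/Q(x)$ and the $\Omega_\tau(1)$ lower bound on probabilities yields $I((y_{i_1},y_{i_2});(y_{j_1},y_{j_2})) \leq (\delta/\tau)^{O(1)}$.

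In the heavy case, the specific choice of $f_R$ in Lemma~\ref{lemma:approx_guarantee} satisfies $f_R(\pm 1) = \pm 1$ and is $C^1$ near these points, so for a heavy vertex $v$ we have $\Pr[y_v \neq \mathrm{sgn}(\mu_v)] = (1-|f_R(\mu_v)|)/2 = O(\tau)$ and thus $H(y_v) = O(\tau \log(1/\tau))$. Applying the chain rule $I((y_v, y_{v'});Z) \leq H(y_v) + I(y_{v'};Z\mid y_v)$ and observing that conditioning on the near-sure event $y_v = \mathrm{sgn}(\mu_v)$ perturbs the Gaussian $(\xi_{v'},\xi_{j_1},\xi_{j_2})$ in total variation by $O(\tau)$ (conditioning a Gaussian on a half-space of mass $1 - O(\tau)$), I would recurse on a problem with one fewer variable and pay only $O(\tau \log(1/\tau))$ overhead per heavy vertex. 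Optimizing $\tau$ balances the generic-case bound $(\delta/\tau)^{O(1)}$ against the heavy overhead $O(\tau \log(1/\tau))$ and yields an overall bound of $\polyltone(\delta)$, as required.

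The hard part is precisely this tension between the unnormalized inner-product bound in the hypothesis and the normalization needed to set up the Gaussian orthant calculation; extreme biases break the normalization and simultaneously make orthant probabilities tiny, so the TV-to-KL conversion degrades. Handling it relies crucially on the structural properties of the specific $f_R$ (its fixed points at $\pm 1$ with matching derivative behavior, so that near-extreme biases produce near-deterministic rounded bits). The remaining Gaussian-analytic steps — the multidimensional Plackett expansion and the explicit TV-to-KL conversion — should be essentially mechanical.
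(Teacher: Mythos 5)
Your proposal is correct and follows essentially the same route as the paper's proof. Both begin with the data processing inequality to pass from $y_{i_1}y_{i_2}$ to the pair $(y_{i_1},y_{i_2})$; both observe that the unnormalized bound $\sum|\langle \v w_{i_\alpha},\v w_{j_\beta}\rangle|\le\delta$ only controls the normalized cross-correlations when the biases are bounded away from $\pm1$; both split on a threshold of the form $\delta^{\Theta(1)}$ for ``heavy'' bias; both exploit the fact that the specific $f_R$ satisfies $f_R(\pm1)=\pm1$ smoothly, so a heavy vertex yields a near-deterministic $y_v$ with $H(y_v)=\polyltone(\delta)$; and both prove, in the no-heavy-vertex case, that four threshold-Gaussian bits with small cross-block correlation have near-product joint law, and convert that to a mutual-information bound using a lower bound on the orthant probabilities (which is exactly the paper's Claim~\ref{claim:orthogonal_vec_mi} via Lemma~\ref{lemma:closeness} and Lemma~\ref{lemma:mi_small_suff_cond}). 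Two minor deviations are worth noting. First, for the generic case the paper establishes closeness of the 4-variable joint to the product distribution by a direct rotational change of variables plus Gaussian tail bounds (Claims~\ref{claim:gauss_large}, \ref{claim:gauss_small}), rather than a Plackett/Hermite expansion and an explicit $\chi^2$-to-KL conversion; the mathematical content is the same and your route is legitimate. Second, in the heavy case you propose peeling off $H(y_v)$ and then analyzing the Gaussian conditioned on the near-sure event $y_v=\mathrm{sgn}(\mu_v)$; the paper's Cases~2--4 instead use the unconditional chain
$$I((y_v,y_{v'});Z)\le H(y_v)+H(y_{v'})-H(y_{v'}\mid Z)=H(y_v)+I(y_{v'};Z),$$
which drops the conditioning on $y_v$ entirely and thereby sidesteps the need to control a conditioned Gaussian. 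Your conditioning argument can be made to work, but the paper's inequality is strictly simpler and you may want to adopt it. Otherwise the structure, the case split, and the reliance on the two structural properties of the rounding (extreme bias $\Rightarrow$ near-deterministic output, near-orthogonal $\v w$ vectors $\Rightarrow$ near-independent outputs) line up with the paper.
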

\begin{proof}
Since $\overline{\v w}_i$ is a normalized vector of $\v{w_i}$ and $\| \v{w_i}\| = \sqrt{1-\mu_i^2}$, we have
\begin{align}
\label{eq:local_corr}
\left.
\begin{array}{c}
\sqrt{1-\mu_{i_1}^2} \cdot \sqrt{1-\mu_{j_1}^2}\cdot|\langle \overline{\v w}_{i_1} , \overline{\v w}_{j_1} \rangle|\\
+ \sqrt{1-\mu_{i_1}^2} \cdot \sqrt{1-\mu_{j_2}^2}\cdot |\langle \overline{\v w}_{i_1} , \overline{\v w}_{j_2} \rangle|\\
+ \sqrt{1-\mu_{i_2}^2} \cdot \sqrt{1-\mu_{j_1}^2}\cdot|\langle \overline{\v w}_{i_2} , \overline{\v w}_{j_1} \rangle|\\
+ \sqrt{1-\mu_{i_2}^2} \cdot \sqrt{1-\mu_{j_2}^2}\cdot|\langle \overline{\v w}_{i_2} , \overline{\v w}_{j_2} \rangle|\\
\end{array} \right\} \leq \delta.
\end{align}

Since the total sum is bounded and each quantity is non-negative, at least one of the three quantities in each summand is at most $\delta^{1/3}$. We use two crucial properties of the rounding procedure:

\begin{itemize}
\item For the heavily biased variable according to the local distribution, the rounding procedure also keeps the rounded value heavily biased and
\item If two vectors $\v{w_i}$ and $\v{w_j}$ are nearly orthogonal, the corresponding rounded values $y_i$ and $y_j$ are nearly independent.
\end{itemize}

We need following claim which we prove in Section~\ref{section:proofclaim}.
\begin{claim}
\label{claim:orthogonal_vec_mi}
If all these quantities $|\langle \overline{\v w}_{i_1} , \overline{\v w}_{j_1} \rangle|, |\langle \overline{\v w}_{i_1}, \overline{\v w}_{j_2} \rangle|, |\langle \overline{\v w}_{i_2} , \overline{\v w}_{j_1} \rangle|, |\langle\overline{\v w}_{i_2} , \overline{\v w}_{j_2} \rangle|$ are upper bounded  by $\delta^{1/3}$, then we can upper bound $ I(  y_{i_1}, y_{i_2}) ; ( y_{j_1}, y_{j_2}) )\leq \polyltone(\delta)$.
\end{claim}

We now formally prove the upper bound on $I(y_{i_1}y_{i_2}; y_{j_1}y_{j_2})$ by case analysis.  We use the following upper bound which follows from data processing inequality.

$$I(y_{i_1}y_{i_2}; y_{j_1}y_{j_2}) \leq I(  y_{i_1}, y_{i_2}) ; ( y_{j_1}, y_{j_2}) ).$$
We now bound the right hand side based on following case analysis.

\begin{itemize}

\item Case 1: If all these quantities $|\langle \overline{\v w}_{i_1} , \overline{\v w}_{j_1} \rangle|, |\langle \overline{\v w}_{i_1}, \overline{\v w}_{j_2} \rangle|, |\langle \overline{\v w}_{i_2} , \overline{\v w}_{j_1} \rangle|, |\langle\overline{\v w}_{i_2} , \overline{\v w}_{j_2} \rangle|$ are upper bounded  by $\delta^{1/3}$ then using Claim~\ref{claim:orthogonal_vec_mi}, we can upper bound $ I(  y_{i_1}, y_{i_2}) ; ( y_{j_1}, y_{j_2}) )\leq \polyltone(\delta)$

\item Case 2: Consider the case when both the endpoints of an edge (w.l.o.g. of $(i_1, i_2)$) have large bias i.e. $\sqrt{1-\mu_{i_1}^2} \leq \delta^{1/3}, \sqrt{1-\mu_{i_2}^2} \leq \delta^{1/3}$.  It implies,

$$ \min( |1-\mu_{i_1}|, |1+\mu_{i_1}|) \leq \delta^{1/3}$$
$$\quad \min( |1-\mu_{i_2}|, |1+\mu_{i_2}|) \leq \delta^{1/3}$$

Assume both $\mu_{i_1}, \mu_{i_2}>0$ (there cases can be handled in a similar way). Then we have, $1-\mu_{i_1} \leq \delta^{1/3}$ and $1-\mu_{i_2} \leq \delta^{1/3}$.
Since the rounding procedure maintains the bias of a variable for a heavily biased variables, up to some constant polynomial factor,  we have,
\begin{align*}
I((  &y_{i_1}, y_{i_2}) ; ( y_{j_1}, y_{j_2}) )\\
&\leq H(y_{i_1}, y_{i_2}) \\
& \leq H(y_{i_1}) + H(y_{i_2}) \\
& = O(-(1 -\polyltone( \mu_{i_1})) \log(1 - \polyltone(\mu_{i_1})))  + \\
&\quad  \quad O(-(1 - \polyltone(\mu_{i_2})) \log(1 - \polyltone(\mu_{i_2})))\\
& \leq \polyltone(\delta).
\end{align*}

\item Case 3: Consider the case when exactly two non-endpoints of an edge (w.l.o.g. of $(i_1, j_i)$) have large bias. This implies that $\langle  \overline{\v w}_{i_2}, \overline{\v w}_{j_2}\rangle \leq \delta^{1/3}$.  Using the analysis of the previous case we have $H(y_{i_1}), H( y_{j_1})\leq \polyltone(\delta) $. Mutual information can be bounded as follows:
\begin{align}
I((&y_{i_1}, y_{i_2}); (y_{j_1}, y_{j_2}))\nonumber \\
& \leq H((y_{i_1}, y_{i_2})) - H((y_{i_1}, y_{i_2}) | (y_{j_1}, y_{j_2}))\nonumber \\
& \leq H(y_{i_1}) + H(y_{i_2}) - H(y_{i_2} | (y_{j_1}, y_{j_2})) \nonumber\\
& = H(y_{i_1})  + I((y_{j_1}, y_{j_2}) ; y_{i_2}) \label{eq:mi_ub}\\
& = \polyltone(\delta) + I((y_{j_1}, y_{j_2}) ; y_{i_2}).\label{eq:mi_onebiased}
\end{align}
Now,
\begin{align*}
I((&y_{j_1}, y_{j_2}) , y_{i_2}) \\
&= H((y_{j_1}, y_{j_2})) - H((y_{j_1}, y_{j_2}) | y_{i_2})\\
&\leq H(y_{j_1}) + H(y_{j_2}) - H( y_{j_2} | y_{i_2})\\
& =  H(y_{j_1}) + I(y_{j_2} ; y_{i_2}) \\
&= \polyltone(\delta) + I(y_{j_2} ; y_{i_2}).
\end{align*}
Therefore, we have
$$ I(y_{i_1}y_{i_2}; y_{j_1}y_{j_2}) \leq \polyltone(\delta) + I(y_{j_2} ; y_{i_2}).$$
From Claim~\ref{claim:orthogonal_vec_mi}, $I(y_{j_2} ; y_{i_2})$ is bounded above by $\polyltone(\delta)$ as $\langle  \overline{\v w}_{i_2}, \overline{\v w}_{j_2}\rangle \leq \delta^{1/3}$.

\item Case 4: Consider the only remaining case in which exactly one variable, say $X_{i_1}$, has a large bias i.e.  $\sqrt{1-\mu_{i_1}^2} \leq \delta^{1/3}$.
From (\ref{eq:local_corr}), it implies that pairwise inner products of $\overline{\v w}_{i_2},\overline{\v w}_{j_1}$ and $\overline{\v w}_{j_2}$  are at most $\delta^{1/3}$. Hence by Claim~\ref{claim:orthogonal_vec_mi}, we have $I(y_{i_2} ; (y_{j_1}, y_{j_2}) )\leq \polyltone(\delta)$. As before  from (\ref{eq:mi_ub}),
\begin{align*}
I(  y_{i_1}, y_{i_2}) ; ( y_{j_1}, y_{j_2}) )&\leq  H(y_{i_1})  + I((y_{j_1}, y_{j_2}) ; y_{i_2})  \\
&\leq \polyltone(\delta).
\end{align*}

\end{itemize}

\end{proof}
We can now upper bound the variance of a cut  produced by the randomized rounding in graph $\ell\in \calL$. Define $Y_\ell$ to be a random variable which is equal to the total weight of active edges cut by the rounding procedure.
$$ Y_\ell = \sum_{C \in \Act(S^\star)} \calE_\ell(C) e(g).$$

\begin{lemma}
\label{lemma:variance_final}
Fix a rounding function $f_R$ given in Lemma~\ref{lemma:approx_guarantee}  and let the $\sdp$ solution is $\delta$ independent then
$$\Var(Y_\ell) \leq \frac{\polyltone(\delta)}{\epsilon^2}\E[Y_\ell]^2.$$
\end{lemma}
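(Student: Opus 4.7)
}
The plan is to expand the variance as a double sum over pairs of active edges, bound the per-pair covariance by a Pinsker-style function of the mutual information between the two ``edge-cut'' indicators, and then average this bound using the $\delta$-independence of the SDP solution. Concretely, writing $W_\ell \defeq \sum_{C\in\Act(S^\star)}\calE_\ell(C)$ so that $\activedist_{S^\star}(\ell)$ places mass $\calE_\ell(C)/W_\ell$ on $C$, I would start from
\[
\Var(Y_\ell)\;=\;\sum_{C,C'\in\Act(S^\star)}\calE_\ell(C)\calE_\ell(C')\,\Cov\bigl(e(C,g),e(C',g)\bigr)\;=\;W_\ell^2\,\E_{C,C'\sim\activedist_{S^\star}(\ell)}\bigl[\Cov(e(C,g),e(C',g))\bigr].
\]
Since the cut indicators $e(C,g),e(C',g)\in\{0,1\}$ are bounded, Pinsker's inequality together with the standard bound $|\Cov(U,V)|\le d_{TV}(p_{UV},p_U p_V)\cdot\|U\|_\infty\|V\|_\infty$ gives $|\Cov(e(C,g),e(C',g))|\le O(\sqrt{I(e(C,g);e(C',g))})$, and by the data processing inequality this is at most $O(\sqrt{I(y_{C_1}y_{C_2};y_{C'_1}y_{C'_2})})$.

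Next I would bound the expected mutual information between the two edge-cut products. By $\delta$-independence of the post-conditioning SDP solution and Lemma~\ref{lemma:RT_vec_to_mi}, for $a,b\sim\activedist_{S^\star}(\ell)$ the averaged inner-product sum satisfies
\[
\E_{a,b}\Bigl[\sum_{i,j\in\{1,2\}}|\langle \v w_{a_i},\v w_{b_j}\rangle|\Bigr]\;\le\;2\E_{a,b}\Bigl[\sum_{i,j}I(X_{a_i};X_{b_j})\Bigr]\;\le\;2\delta.
\]
Apply Lemma~\ref{lemma:low_mi_rounded} pointwise: whenever the four inner products sum to some $Z_{a,b}$, we have $I(y_{a_1}y_{a_2};y_{b_1}y_{b_2})\le\polyltone(Z_{a,b})$. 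Because $\polyltone$ is a sum of concave powers $x^\alpha$ with $\alpha\in(0,1)$, it is concave, so Jensen's inequality yields
\[
\E_{C,C'\sim\activedist_{S^\star}(\ell)}\bigl[I(y_{C_1}y_{C_2};y_{C'_1}y_{C'_2})\bigr]\;\le\;\polyltone(\E[Z_{C,C'}])\;\le\;\polyltone(2\delta)\;=\;\polyltone(\delta).
\]
Pulling the square root through Jensen one more time,
\[
\E_{C,C'}\bigl[\sqrt{I(y_{C_1}y_{C_2};y_{C'_1}y_{C'_2})}\bigr]\;\le\;\sqrt{\polyltone(\delta)}\;=\;\polyltone(\delta),
\]
so the two observations combine to give $\Var(Y_\ell)\le W_\ell^2\cdot\polyltone(\delta)$.

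For the lower bound on $\E[Y_\ell]$, I would use the SDP constraint~(\ref{eq:preserve_active}), which forces the fractional active-edge cut value to be at least $\tfrac{\epsilon}{3}\cdot\activedeg_{S^\star}(\ell)$, together with Lemma~\ref{lemma:approx_guarantee}, which guarantees that the rounding preserves a constant fraction ($0.878001(1-3\epsilon)$) of the SDP value in expectation on every low-variance instance. Since $\activedeg_{S^\star}(\ell)\ge W_\ell$ (each active edge contributes its weight to at least one endpoint outside $S^\star$), this yields $\E[Y_\ell]\ge \Omega(\epsilon)\cdot W_\ell$. Dividing,
\[
\frac{\Var(Y_\ell)}{\E[Y_\ell]^2}\;\le\;\frac{W_\ell^2\,\polyltone(\delta)}{\Omega(\epsilon^2)\,W_\ell^2}\;=\;\frac{\polyltone(\delta)}{\epsilon^2},
\]
which is the desired bound.

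The main obstacle I expect is the careful juggling of the Jensen/Pinsker chain: the $\delta$-independence bound is on expected sums of mutual informations of the \emph{underlying} SDP variables $X_v$, whereas the covariance we need is on the \emph{rounded} cut indicators. The bridge is exactly Lemma~\ref{lemma:low_mi_rounded}, but one has to verify concavity of $\polyltone$ so Jensen goes the right way, and keep track that the factor $W_\ell^2$ appearing from rewriting $\Var$ as an $\activedist$-expectation is precisely cancelled by the $\Omega(\epsilon)\,W_\ell$ lower bound on $\E[Y_\ell]$ obtained from constraint~(\ref{eq:preserve_active}) plus Lemma~\ref{lemma:approx_guarantee}; the $1/\epsilon^2$ in the conclusion comes entirely from this lower bound on the expectation.
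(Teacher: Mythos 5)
Your proposal is correct and follows essentially the same route as the paper: expand $\Var(Y_\ell)$ as a double sum over pairs of active edges, bound each covariance by a Pinsker-style $O(\sqrt{I(\cdot;\cdot)})$ term, pass from the rounded variables back to the SDP variables via Lemma~\ref{lemma:low_mi_rounded} and Lemma~\ref{lemma:RT_vec_to_mi}, apply Jensen using concavity of $\polyltone$, invoke $\delta$-independence, and finally divide by the lower bound on $\E[Y_\ell]$ coming from constraint~(\ref{eq:preserve_active}) together with the per-edge rounding guarantee. Your write-up is actually a bit more careful than the paper's about attributing each inequality to the right ingredient (Pinsker, data processing, Lemma~\ref{lemma:low_mi_rounded} vs.\ Lemma~\ref{lemma:RT_vec_to_mi}) and about how the normalization $W_\ell^2$ from $\activedist_{S^\star}(\ell)$ cancels against the $\Omega(\epsilon)\cdot W_\ell$ lower bound on $\E[Y_\ell]$; the paper works with $\activedeg_{S^\star}(\ell)^2\ge W_\ell^2$ in place of $W_\ell^2$ but the cancellation is the same.
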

\begin{proof}
Let $\alpha := 0.8780$. Note that by Lemma~\ref{lemma:approx_guarantee}, we have for an active edge $e(i,j)$,
\begin{equation}
\label{eq:edge_cut_guarantee}
\Pr[e(i,j) \mbox{ is cut }] \geq \alpha \cdot \frac{1-\langle \v{v_i}, \v{v_j}\rangle}{2}.
\end{equation}

We now lower bound the expected value of $Y_\ell$.
\begin{align*}
 {\E[Y_\ell]} & {=\sum_{e \in \Act(S^\star)} \calE_\ell(e) \cdot  \Pr[e(i,j) \mbox{ is cut}]} \\
(\mbox{ from } (\ref{eq:edge_cut_guarantee})) \hspace{-25pt}\\
&  {\geq \alpha \sum_{e \in \Act(S^\star)} \calE_\ell(e) \cdot   \frac{1-\langle \v{v_i}, \v{v_j}\rangle}{2}}\\
&  {=  \alpha \cdot\sum_{e \in \Act(S^\star)} \calE_\ell(e) (\|\v{v_{\{(i,j), (0,1)\}}}\|_2^2 + \|\v{v_{\{(i,j), (1,0)\}}}\|_2^2) }\\
(\mbox{ from } (\ref{eq:activedegreecut})) \hspace{-25pt}\\
&  {\geq \alpha \cdot \nfrac{\epsilon}{3}\cdot\activedeg_{S^\star}(\ell)}
\end{align*}
We can now bound the variance as follows:
\begin{align*}
 {\Var} {(Y_\ell)} &= {\sum_{i,j\in \Act(S^\star)} \calE_\ell(i)\calE_\ell(j)\Cov \left[\frac{1 - y_{i_1}y_{i_2}}{2}, \frac{1-y_{j_1}y_{j_2}}{2}\right]}\\
& =  {\sum_{i,j\in \Act(S^\star)} \calE_\ell(i)\calE_\ell(j)\left(\frac{1}{4}\cdot\Cov [y_{i_1}y_{i_2}, y_{j_1}y_{j_2}]\right)}\\
& \leq  {\sum_{i,j\in \Act(S^\star)} \calE_\ell(i)\calE_\ell(j) [O(\sqrt{I(y_{i_1}y_{i_2}; y_{j_1}y_{j_2})})]} \tag*{(from Lemma~\ref{lemma:low_mi_rounded})}) \\
& \leq  { \sum_{i,j\in \Act(S^\star)}\calE_\ell(i)\calE_\ell(j)\polyltone\left(\sum_{\substack{a\in \{{i_1},{i_2}\},\\ b\in \{{j_1},{j_2}\}}}|\langle \v{w_a} , \v{w_b} \rangle|\right) }\tag*{(from Lemma~\ref{lemma:RT_vec_to_mi})}\\
& \leq  {\sum_{i,j\in \Act(S^\star)} \calE_\ell(i)\calE_\ell(j)\polyltone\left(\E_{\substack{a\sim \{{i_1},{i_2}\},\\ b\sim \{{j_1},{j_2}\}}}[I(X_a;X_b)]\right)}\\
& \leq \begin{aligned}[t]\activedeg_{S^\star}&(\ell)^2{\times \E_{i,j\sim \activedist_{S^\star}(\ell)} \polyltone\E_{\substack{a\sim \{{i_1},{i_2}\}, \\b\sim \{{j_1},{j_2}\}}}\left[I(X_a;X_b)\right] }\end{aligned} \tag*{(from concavity of $\polyltone$)}\\
&\leq \begin{aligned}[t]\activedeg_{S^\star}&(\ell)^2  {\times\ \polyltone\left( \E_{\substack{({i_1},{i_2}),\\ ({j_1},{j_2}) }\sim \activedist_{S^\star}(\ell)} \E_{\substack{a\sim \{{i_1},{i_2}\}, \\b\sim \{{j_1},{j_2}\}}}\left[I(X_a;X_b)\right] \right)}\end{aligned}\\
&\leq \polyltone\left(\delta\right)\cdot \activedeg_{S^\star}(\ell)^2,
\end{align*}
Thus, we have
$$ \Var(Y_\ell) \leq \frac{\polyltone(\delta)}{\epsilon^2}\E[Y_\ell]^2.$$
\end{proof}

\begin{corollary}
\label{corollary:maxcut:lowvariance}
If we set $r:= \poly(k,\nfrac{1}{\eps})$ then for every low variance instance $\ell\in [k]$,with probability at least $1-1/10k$ we have $\val(h^\star \cup g)\geq (0.878001 -4\epsilon)c_\ell$.
\end{corollary}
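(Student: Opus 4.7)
\medskip
\noindent\textbf{Proof plan.}
The strategy is to combine three ingredients: (i) the SDP feasibility and approximation guarantees give a large expected rounded cut, (ii) $\delta$-independence (from Lemma~\ref{lemma:conditional_sol_prop}) together with Lemma~\ref{lemma:variance_final} gives a variance bound, and (iii) Chebyshev's inequality converts these into the required concentration. Concretely, pick $\delta = \delta(\eps,k)$ small enough that $\polyltone(\delta)/(\eps^2\eps_0^2) \le 1/(10k)$, where $\eps_0 = \eps/2$; since $\polyltone$ vanishes at $0$, this is achievable by an inverse polynomial $\delta$ in $\eps$ and $k$. Then set $r = \Theta(k/\delta) = \poly(k,1/\eps)$ so that Lemma~\ref{lemma:conditional_sol_prop} produces a $\delta/2$-independent conditioned SDP solution that still satisfies every constraint of $\sdp^\star(h^\star)$.

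First, I would bound the expectation of the rounded cut. By Lemma~\ref{lemma:approx_guarantee}, for any fixed low-variance instance $\ell \in \calL$ and for each edge $e=\{i,j\} \in \calE_\ell$, the rounding cuts $e$ with probability at least $0.878001 \cdot (1-\langle \v{v_i},\v{v_j}\rangle)/2$. Summing against the SDP contributions and invoking constraint~(\ref{eq:lasserreobjective}) of the conditioned Lasserre solution,
\[
\E\bigl[\val(h^\star \cup g, \calE_\ell)\bigr] \ge 0.878001 \cdot (1-3\eps) \cdot c_\ell.
\]
Non-active edges contribute the deterministic quantity $\val(h^\star, \calE_\ell)$, so the only randomness in $\val(h^\star \cup g, \calE_\ell)$ comes from $Y_\ell$, and hence $\Var(\val(h^\star \cup g, \calE_\ell)) = \Var(Y_\ell)$.

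Next, I would apply Lemma~\ref{lemma:variance_final} to the $\delta/2$-independent solution to obtain $\Var(Y_\ell) \le \polyltone(\delta)/\eps^2 \cdot \E[Y_\ell]^2$. Chebyshev's inequality then gives
\[
\Pr\bigl[Y_\ell < (1-\eps_0)\E[Y_\ell]\bigr] \le \frac{\Var(Y_\ell)}{\eps_0^2 \E[Y_\ell]^2} \le \frac{\polyltone(\delta)}{\eps^2 \eps_0^2} \le \frac{1}{10k},
\]
by our choice of $\delta$. On the complementary event,
\[
\val(h^\star \cup g, \calE_\ell) \;=\; \val(h^\star, \calE_\ell) + Y_\ell \;\ge\; (1-\eps_0)\,\E[\val(h^\star \cup g, \calE_\ell)],
\]
since $\val(h^\star, \calE_\ell)$ is nonnegative. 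Substituting the expectation bound and $\eps_0 = \eps/2$ yields
\[
\val(h^\star \cup g, \calE_\ell) \ge (1-\eps/2)(1-3\eps)\cdot 0.878001 \cdot c_\ell \ge (0.878001 - 4\eps)\,c_\ell,
\]
where the last inequality uses $0.878001 \cdot 3.5 \le 4$.

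The main technical obstacle is ensuring the parameters fit together: one needs $\delta$ to be small enough as a function of $\eps$ and $k$ that $\polyltone(\delta)/(\eps^2\eps_0^2) \le 1/(10k)$, while simultaneously keeping the Lasserre level $r \ge 32k/\delta$ polynomial in $k$ and $1/\eps$ (so that the SDP can be solved in polynomial time) and keeping the approximation loss introduced by $(1-\eps_0)(1-3\eps)$ within the advertised additive $4\eps$ slack. The quantitative behaviour of $\polyltone$ coming from Lemma~\ref{lemma:low_mi_rounded} is the sole determinant of how $\delta$ depends on $\eps,k$, and hence of the final polynomial bound on $r$; since $\polyltone$ tends to $0$ as its argument tends to $0$, a valid (inverse polynomial) choice always exists, so the whole chain closes.
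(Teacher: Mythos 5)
Your proof is correct and follows essentially the same route as the paper: apply Chebyshev's inequality to the variance bound of Lemma~\ref{lemma:variance_final} for a suitably small $\delta$, and chain the concentration with the expectation lower bound from Lemma~\ref{lemma:approx_guarantee} and constraint~(\ref{eq:lasserreobjective}). The only difference is cosmetic (you take the Chebyshev deviation $\eps_0 = \eps/2$ where the paper uses $\eps$) plus a slightly more explicit discussion of how the $\polyltone$ behaviour governs the choice of $\delta$ and hence the Lasserre level $r$, which the paper leaves implicit.
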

\begin{proof}
Choosing $r$ a large constant (and thus $\delta$ very small), by Lemma~\ref{lemma:variance_final} and application of Chebyshev's Inequality, we can deduce that with probability
  at least $1-1/10k,$ we have $ Y_\ell \ge (1- \epsilon) \E[Y_\ell].$
  Thus, with probability at least $1-1/10k,$ we have,
\begin{align*}
  \val(h^\star\cup g,\calE_\ell) & = \val(h^\star,\calE_\ell) + Y_\ell \\
  & \ge \val(h^\star,\calE_\ell)  + (1- \epsilon) \E[Y_\ell]\\
  & \ge (1-\epsilon) \cdot \E[\val(h^\star,\calE_\ell) + Y_\ell] \\
  & = (1-\epsilon)\cdot \E[\val(h^\star\cup g,W_\ell)] \\
  & \ge (1-\epsilon)\cdot  0.878001\cdot (1-3\epsilon)\cdot c_\ell \\
  & \ge \left(0.878001 - 4\eps\right) \cdot c_\ell ,
\end{align*}
where we have used Lemma~\ref{lemma:approx_guarantee} for the lower bound $\E[\val(h^\star\cup g,W_\ell)] \geq
0.878001\cdot(1-3\epsilon)c_\ell,$
\end{proof}

\subsubsection{Post-Processing}

\begin{lemma}
\label{lemma:earlyvariables}
For all high variance instances $\ell\in [k],$ we have
\begin{enumerate}
\item $\activedeg_{S^\star}(\ell) \leq 2(1-\gamma)^t.$
\item For each of the first $\nicefrac{t}{2}$ variables that were brought
  inside $S^\star$ because of instance $\ell,$ the total weight of edges from $\calE_{\ell}$
  incident on each of that variable and totally contained inside $S^\star$ is at
  least $20\cdot\activedeg_{S^\star}(\ell).$
\end{enumerate}
\end{lemma}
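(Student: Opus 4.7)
The plan is to track how the quantity $\activedeg_S(\ell)$ evolves as vertices are added to $S$ during the preprocessing loop, and then to deduce both parts from this single evolution. The starting observation is that for any $u \notin S$, every edge $e \ni u$ is automatically in $\Act(S)$, so $\activedeg_S(u,\ell) = \deg_\ell(u) := \sum_{e \ni u}\calE_\ell(e)$ is independent of $S$, and hence $\activedeg_S(\ell) = \sum_{u \notin S}\deg_\ell(u)$. Consequently, whenever any vertex $v$ is added to $S$ we get
$$\activedeg_{S \cup \{v\}}(\ell) = \activedeg_S(\ell) - \deg_\ell(v).$$
When $v$ is chosen in Step~\ref{item:alg:max-cut:high-deg} because of instance $\ell$, the algorithm guarantees $\deg_\ell(v) \geq \gamma \cdot \activedeg_S(\ell)$, so the addition shrinks $\activedeg(\ell)$ by at least a factor of $(1-\gamma)$. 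Additions attributed to other instances can only shrink $\activedeg(\ell)$ further.

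For Part 1, since $\activedeg_\emptyset(\ell) = \sum_{u \in V}\deg_\ell(u) = 2\sum_{e}\calE_\ell(e) = 2$ and a high-variance $\ell$ has exactly $t$ vertices attributed to it during the loop, iterating the multiplicative contraction gives $\activedeg_{S^\star}(\ell) \leq 2(1-\gamma)^t$.

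For Part 2, fix $i \leq t/2$ and let $v_i$ be the $i$-th vertex brought into $S^\star$ because of instance $\ell$, with $S_i$ the state of $S$ immediately before that addition. The algorithm's choice gives $\deg_\ell(v_i) \geq \gamma \cdot \activedeg_{S_i}(\ell)$. Because $\ell$ is high-variance, there remain at least $t - i + 1 \geq t/2$ additions attributed to instance $\ell$ (counting step $i$ itself), each contracting $\activedeg(\ell)$ by factor $(1-\gamma)$, so
$$\activedeg_{S^\star}(\ell) \leq (1-\gamma)^{t/2}\cdot \activedeg_{S_i}(\ell).$$
Plugging in $t = (2k/\gamma)\log(21/\gamma)$ and using $1-\gamma \leq e^{-\gamma}$ yields $(1-\gamma)^{t/2} \leq e^{-\gamma t/2} = (\gamma/21)^k \leq \gamma/21$ (the last step using $k \geq 1$ and $\gamma < 21$). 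Therefore $\deg_\ell(v_i) \geq \gamma\cdot\activedeg_{S_i}(\ell) \geq 21\cdot\activedeg_{S^\star}(\ell)$. Finally, each edge from $v_i$ to a vertex $u \notin S^\star$ is active at termination and is counted exactly once in $\activedeg_{S^\star}(\ell)$ through $u$, so the total weight of such ``leaking'' edges is at most $\activedeg_{S^\star}(\ell)$; subtracting from $\deg_\ell(v_i)$ leaves at least $20\cdot\activedeg_{S^\star}(\ell)$ of weight on edges from $v_i$ whose other endpoint also lies in $S^\star$.

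The main obstacle is the arithmetic in Part 2: the constant $21$ inside the logarithm in the definition of $t$ is tuned precisely so that $(1-\gamma)^{t/2}$ beats $\gamma/21$, which after accounting for the one unit of leakage to $V \setminus S^\star$ yields the clean bound $20\cdot\activedeg_{S^\star}(\ell)$. Everything else is bookkeeping: making sure that only additions attributed to $\ell$ are credited with the $(1-\gamma)$-contraction while observing that other additions at worst leave $\activedeg(\ell)$ unchanged.
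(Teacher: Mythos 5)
Your proof is correct and follows essentially the same route as the paper: the multiplicative $(1-\gamma)$ contraction of $\activedeg_S(\ell)$ per addition attributed to instance $\ell$, iterated $t$ times for Part~1 and the remaining $\geq t/2$ times for Part~2, plus the observation that edges from $v_i$ leaking into $V\setminus S^\star$ have total weight at most $\activedeg_{S^\star}(\ell)$. Your minor refinement — noting that $\activedeg_S(u,\ell)=\deg_\ell(u)$ is independent of $S$ for $u\notin S$, making the recurrence an exact equality — is a clean simplification of bookkeeping but does not change the substance.
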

\begin{proof}
  Consider any {\em high variance} instance $\ell \in[k]$. Initially,
  when $S=\emptyset,$ we have $\activedeg_\emptyset(\calE_\ell) \leq
  2$ since the weight of every edge is counted at most twice, once for
  each of the 2 active vertices of the edge,
  and $\sum_{e\in \calE} \calE_\ell(e)= 1$.
For every $v$, note that $\activedeg_{S_2}(v,\calE_\ell) \leq
  \activedeg_{S_1}(v,\calE_\ell)$ whenever $S_1 \subseteq S_2$.

  Let $u$ be one of the vertices that ends up in $S^\star$ because of
  instance $\ell.$ Let $S_u$ denote the set $S \subseteq S^\star$ just
  before $u$ was brought into $S^\star$.  When $u$ is added to $S_u$,
  we know that $\activedeg_{S_u}(u, \calE_\ell) \geq \gamma \cdot
  \activedeg_{S_u}(\ell).$ Hence, $\activedeg_{S_u \cup
    \{u\}}(\ell) \le \activedeg_{S_u}(\ell) -\activedeg_{S_u}(u,
  \calE_\ell) \le (1-\gamma) \cdot \activedeg_{S_u}(\ell).$ Since
  $t$ vertices were brought into $S^\star$ because of instance
  $\ell,$ and initially $\activedeg_{\emptyset}(\ell) \le 2,$ we
  get $\activedeg_{S^\star}(\ell) \le 2(1-\gamma)^t.$

  Now, let $u$ be one of the first $\nfrac{t}{2}$ vertices that ends
  up in $S^\star$ because of instance $\ell.$ Since at least
  $\nfrac{t}{2}$ vertices are brought into $S^\star$ because of
  instance $\ell,$ after $u,$ as above, we get
  $\activedeg_{S^\star}(\ell) \le (1-\gamma)^{\nfrac{t}{2}}
  \cdot \activedeg_{S_u}(\ell).$ Combining with
  $\activedeg_{S_u}(u, \calE_\ell) \geq \gamma \cdot
  \activedeg_{S_u}(\ell),$ we get $\activedeg_{S_u}(u,
  \calE_\ell) \geq \gamma (1-\gamma)^{-\nfrac{t}{2}}
  \activedeg_{S^\star}(\ell),$
  which is at least $21\cdot\activedeg_{S^\star}(\ell),$ by the
  choice of parameters. Since any edge incident on a vertex in
  $V \setminus S^\star$ contributes its weight to
  $\activedeg_{S^\star}(\ell),$ the total weight of edges
  incident on $u$ and totally contained inside $S^\star$ is at least
  $20\cdot\activedeg_{S^\star}(\ell)$ as required.
\end{proof}

\label{sec:postprocess}
We now describe a procedure {\sc Perturb} (see
Figure~\ref{fig:maxcut-perturb}) which takes $h^\star: S^\star \to \B$ and
$g : V \setminus S^\star \to \B$, and produces
a new $h : S^\star \to \B$ such that
for all (low variance as well as  high variance) instances $\ell \in [k]$,
$\val(h \cup g, \calE_\ell)$ is not much smaller than $\val(h^\star \cup g, \calE_\ell)$, and
furthermore, for all high variance instances $\ell \in [k]$, $\val(h
\cup g, \calE_\ell)$ is large. The procedure works by picking a
special vertex in $S^\star$ for every { high variance} instance and
perturbing the assignment of $h^\star$ to these special vertices. The
partial assignment $h$ is what we will be using to argue that
Step~\ref{item:alg:max-cut:post-process} of the algorithm produces a good
Pareto approximation.  More formally, we have the following Lemma.

\begin{figure*}
\begin{tabularx}{\textwidth}{|X|}
\hline
\vspace{0mm}
{\bf Input}: $h^\star: S^\star \to \B$ and
$g : V \setminus S^\star \to \B$\\
{\bf Output}: A perturbed assignment $h :  S^\star \to \{0,1\}.$
\begin{enumerate}[itemsep=0mm, label=\arabic*.]
\item Initialize $h \leftarrow h^\star.$
\item For $\ell = 1, \ldots, k$, if instance $\ell$ is a high variance
  instance case (i.e., $\cnt_\ell = t$), we pick a special variable
  $v_\ell \in S^\star$ associated to this instance as follows:
\begin{enumerate}
\item Let $B = \{ v \in V \mid \exists \ell \in [k] \mbox{ with }
  \sum_{e \in \calE, e \owns v} \calE_{\ell}(e)\cdot e(h \cup g) \geq
  \frac{\epsilon}{2k} \cdot\val(h\cup g, \calE_\ell) \}$.  Since the weight of
  each edge is counted at most twice, we know that $|B| \leq \frac{4k^2}{\epsilon}$.
\item Let $U$ be the set consisting of the first $t/2$ vertices
brought into $S^\star$ because  of instance $\ell$.
\item Since $\nfrac{t}{2} > |B|+ k$, there exists some $u \in U$
  such that $u \not\in B \cup \{v_1, \ldots, v_{\ell-1}\}$.  We define
  $v_\ell$ to be $u$.
\item By Lemma~\ref{lemma:earlyvariables}, the total $\calE_\ell$ weight of edges that are incident on $v_\ell$
  and only containing vertices from $S^\star$ is at least
  $20\cdot\activedeg_{S^\star}(\ell)$. We update $h$ by
setting $h(v_\ell)$ to be that value from $\{0,1\}$
such that at least half of the $\calE_\ell$ weight of these edges is
  satisfied.
\end{enumerate}
\item Return the assignment $h.$
\end{enumerate}
\\
\hline
\end{tabularx}
\caption{Procedure {\sc Perturb} for perturbing the optimal
  assignment}
  \label{fig:maxcut-perturb}
\end{figure*}

\begin{lemma}
\label{lemma:maxcut_perturbationeffect}
For the assignment $h$ obtained from Procedure {\sc Perturb} (see
Figure~\ref{fig:maxcut-perturb}), for each $\ell\in [k]$, $\val(h\cup
g,\calE_\ell) \geq (1-\nicefrac{\epsilon}{2}) \cdot \val(h^\star \cup
g, \calE_\ell)$. Furthermore, for each {\em high variance} instance
$\calE_\ell$, $\val(h\cup g,\calE_\ell) \geq
8 \cdot \activedeg_{S^\star}(\ell).$
\end{lemma}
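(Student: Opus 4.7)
The plan is to argue that flipping the value of a single vertex outside the ``bad'' set $B$ can only reduce the cut value of any instance by a small multiplicative factor, then chain this bound across the at most $k$ perturbations made by {\sc Perturb}. For the high variance instances, the additional cut value coming from the special edges of $v_\ell$ fully contained in $S^\star$ then supplies the promised $\activedeg_{S^\star}(\ell)$-based lower bound.

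For the first part, the key estimate is $|B| \le 4k^2/\eps$: for each fixed instance $m$, $\sum_{v} \sum_{e \owns v} \calE_m(e)\, e(h \cup g) = 2\val(h \cup g, \calE_m)$, so at most $4k/\eps$ vertices can contribute at least $\tfrac{\eps}{2k}\val(h \cup g, \calE_m)$, and union-bounding over $m \in [k]$ yields the bound. Combined with $|U| = t/2 > |B|+k$ (by the choice of $t$), the required vertex $v_\ell$ always exists. Now fix any iteration of {\sc Perturb}, let $h^{(i-1)}$ denote the assignment just before it, and let $v_\ell$ be the flipped vertex. Since $v_\ell \notin B$, where $B$ is computed from $h^{(i-1)} \cup g$, flipping $v_\ell$ can decrease the cut value of any instance $m$ by at most the weight of the $\calE_m$-edges incident on $v_\ell$ that were cut by $h^{(i-1)} \cup g$, which is less than $\tfrac{\eps}{2k}\val(h^{(i-1)} \cup g, \calE_m)$. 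Hence $\val(h^{(i)} \cup g, \calE_m) \ge (1-\tfrac{\eps}{2k})\val(h^{(i-1)} \cup g, \calE_m)$, and iterating over the at most $k$ perturbations gives
$$\val(h \cup g, \calE_m) \ge \left(1-\tfrac{\eps}{2k}\right)^k \val(h^\star \cup g, \calE_m) \ge \left(1-\tfrac{\eps}{2}\right)\val(h^\star \cup g, \calE_m),$$
which is the first claim (the last step is Bernoulli's inequality).

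For the second claim, consider a high variance instance $\ell$ and let $i_\ell$ be the iteration at which it is processed. By Lemma~\ref{lemma:earlyvariables} the total $\calE_\ell$-weight of edges incident on $v_\ell$ and fully contained in $S^\star$ is at least $20 \cdot \activedeg_{S^\star}(\ell)$, and {\sc Perturb} sets $h(v_\ell)$ to a value satisfying at least half of this weight. These special edges are determined entirely by $h$ (they live inside $S^\star$ and so do not depend on $g$), so immediately after iteration $i_\ell$ we have $\val(h^{(i_\ell)} \cup g, \calE_\ell) \ge 10 \cdot \activedeg_{S^\star}(\ell)$. Applying the per-iteration loss bound to each of the at most $k - i_\ell$ remaining perturbations (each of which flips a vertex different from $v_\ell$, by the disjointness built into the choice of $v_{\ell'}$) yields
$$\val(h \cup g, \calE_\ell) \ge \left(1-\tfrac{\eps}{2}\right) \cdot 10 \cdot \activedeg_{S^\star}(\ell) \ge 8 \cdot \activedeg_{S^\star}(\ell),$$
using $\eps \le 1/5$. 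The only real subtlety in the argument is that $B$ is recomputed from the current $h$ at every iteration; this is handled cleanly by applying the ``loss at a flip is small'' estimate to the current assignment at that step, mirroring how the algorithm selects $v_\ell$, so I do not expect any serious technical obstacle beyond this bookkeeping.
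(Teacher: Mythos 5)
Your proof matches the paper's argument: both establish the per-flip multiplicative loss of at most $\nicefrac{\eps}{2k}$ (via $v_\ell \notin B$), chain it over at most $k$ perturbations using $(1-\nicefrac{\eps}{2k})^k \ge 1 - \nicefrac{\eps}{2}$, and then for each high variance instance invoke Lemma~\ref{lemma:earlyvariables} to get $\ge 10\cdot\activedeg_{S^\star}(\ell)$ from the $S^\star$-internal edges at $v_\ell$, losing only the same $(1-\nicefrac{\eps}{2})$ factor to later flips. Your explicit tracking of the iteration-indexed assignments $h^{(i)}$ and the re-derivation of $|B| \le 4k^2/\eps$ are just more careful bookkeeping of the same steps; the approach is the same.
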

\begin{proof}
  Consider the special vertex $v_\ell$ that we choose for {\em high
    variance} instance $\ell \in [k]$. Since $v_\ell \notin B,$ the
  edges incident on $v_\ell$ only contribute at most a
  $\nfrac{\eps}{2k}$ fraction of the objective value in each
  instance.
  Thus, changing the assignment $v_\ell$ can reduce the value of any
  instance by at most a $\frac{\epsilon}{2k}$ fraction of their
  current objective value. Also, we pick different special variables
  for each {\em high variance} instance. Hence, the total effect of
  these perturbations on any instance is that it reduces the objective
  value (given by $h^\star \cup g$) by at most $1 - (1-
  \frac{\epsilon}{2k})^k \le \frac{\epsilon}{2}$ fraction.  Hence for
  all instances $\ell \in [k]$, $\val(h\cup g,\calE_\ell) \geq
  (1-\nicefrac{\epsilon}{2})\cdot\val(h^\star\cup g,\calE_\ell)$.

  For a {\em high variance instance} $\ell \in [k]$, since $v_\ell \in
  U,$ the vertex $v_\ell$ must be one of the first $\nfrac{t}{2}$
  variables brought into $S^\star$ because of $\ell.$ Hence, by
  Lemma~\ref{lemma:earlyvariables} the total weight of edges
  that are incident on $v_\ell$ and entirely contained inside
  $S^\star$ is at least
  $20\cdot\activedeg_{S^\star}(\ell)$. Hence, there is an
  assignment to $v_\ell$ that satisfies at least at least half the
  weight of these \maxcut constraints in $\ell.$ At the end of the iteration when
  we pick an assignment to $v_\ell,$ we have $\val(h\cup g,
  \calE_\ell) \ge 10\cdot\activedeg_{S^\star}(\ell).$ Since the
  later perturbations do not affect value of this instance by more
  than $\nfrac{\epsilon}{2}$ fraction, we get that for the final
  assignment $h$, $\val(h\cup g,\calE_\ell) \ge
  (1-\nfrac{\epsilon}{2})\cdot 10 \cdot
  \activedeg_{S^\star}(\ell) \geq 8\cdot
  \activedeg_{S^\star}(\ell).$
\end{proof}

\begin{theorem}
\label{thm:max-cut}
Suppose we're given $\eps \in (0,\nfrac{1}{5}],$ $k$ simultaneous
\maxcut instances $\calE_1, \ldots, \calE_k$ on $n$ variables, and
target objective value $c_1, \ldots,c_k$ with the guarantee that there
exists an assignment $f^\star$ such that for each $\ell \in [k],$ we
have $\val( f^\star, \calE_\ell) \ge c_\ell.$ Then, the algorithm {\algowmaxcut} runs in time $\exp(\nfrac{k^3}{\eps^2}\log
(\nfrac{k}{\eps^2}))\cdot n^{\poly(k)},$ and with probability at least
$0.9,$ outputs an assignment $f$ such that for each $\ell \in [k],$ we
have, $\val(f, \calE_\ell) \ge \left( 0.878001 -5\epsilon\right)
\cdot c_\ell.$
\end{theorem}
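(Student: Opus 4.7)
The plan is to combine the key lemmas proved earlier: feasibility of the Lasserre SDP, independence through conditioning, the rounding guarantee for low variance instances, and the perturbation analysis for high variance instances. I would first argue that since the algorithm enumerates every partial fixing $h: S^\star \to \{0,1\}$, at least one iteration corresponds to $h^\star$, the restriction of $f^\star$ to $S^\star$. For this particular iteration, Lemma~\ref{lemma:lasserre_feasible} guarantees the Lasserre SDP is feasible, Lemma~\ref{lemma:conditional_sol_prop} turns the SDP solution into a $\delta/2$-independent solution while preserving all constraints, and the rounding procedure of Figure~\ref{fig:rounding} produces the assignment $g : V \setminus S^\star \to \{0,1\}$. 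Applying Corollary~\ref{corollary:maxcut:lowvariance} to each low variance instance $\ell \in \calL$ and taking a union bound over at most $k$ low variance instances, with probability at least $1 - k \cdot \tfrac{1}{10k} = \tfrac{9}{10}$ every low variance instance satisfies $\val(h^\star \cup g, \calE_\ell) \ge (0.878001 - 4\eps)\cdot c_\ell$.

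Next I would analyze the post-processing step~\ref{item:alg:max-cut:post-process}, which picks, among all $h' : S^\star \to \{0,1\}$, the assignment maximizing $\min_\ell \val(h' \cup g, \calE_\ell)/c_\ell$. Since this enumeration includes the output $h$ of the Perturb procedure from Figure~\ref{fig:maxcut-perturb}, it suffices to show that $h \cup g$ itself is a $(0.878001 - 5\eps)$-Pareto approximation. Lemma~\ref{lemma:maxcut_perturbationeffect} gives two guarantees: for every $\ell$, $\val(h\cup g, \calE_\ell) \ge (1-\eps/2)\val(h^\star \cup g, \calE_\ell)$, and for every high variance $\ell$, additionally $\val(h\cup g, \calE_\ell) \ge 8 \cdot \activedeg_{S^\star}(\ell)$. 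For low variance $\ell$, chaining the first bound with Corollary~\ref{corollary:maxcut:lowvariance} yields $(1-\eps/2)(0.878001-4\eps)c_\ell \ge (0.878001-5\eps)c_\ell$ for $\eps \le 1/5$.

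For the high variance case the main step is to relate both bounds to $c_\ell$. Since $f^\star$ cuts weight at least $c_\ell$ and the total active-edge weight is at most $\activedeg_{S^\star}(\ell)$, the non-active edges already contribute $\val(h^\star, \calE_\ell) \ge c_\ell - \activedeg_{S^\star}(\ell)$, so $\val(h^\star\cup g, \calE_\ell) \ge c_\ell - \activedeg_{S^\star}(\ell)$ as well. Setting $x = \activedeg_{S^\star}(\ell)/c_\ell$, the two bounds for high variance instances become $(1-\eps/2)(1-x)c_\ell$ and $8x\,c_\ell$; a short case analysis (splitting at $x \approx 1/9$) shows that the maximum of these two is always at least $(0.878001 - 5\eps)c_\ell$. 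Combining the low and high variance cases gives the claimed Pareto approximation ratio with success probability at least $0.9$.

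Finally I would verify the running time. The preprocessing loop terminates after at most $tk$ iterations, so $|S^\star| \le tk = O\!\big(\tfrac{k^2}{\gamma}\log\tfrac{1}{\gamma}\big)$ with $\gamma = \Theta(\tau^2\eps_0^2\delta_0) = \Theta(\eps^2/k)$; enumerating all $h: S^\star \to \{0,1\}$ therefore costs $\exp\!\big(\tfrac{k^3}{\eps^2}\log\tfrac{k}{\eps^2}\big)$. For each such fixing, solving the $r$-round Lasserre SDP (Lemma~\ref{lemma:lasserre_feasible}) and the conditioning procedure of Figure~\ref{fig:indep} (Lemma~\ref{lemma:conditional_sol_prop}) both take $n^{O(r)} = n^{\poly(k)}$ time, and the rounding and post-processing are only polynomial in $n$ and $2^{|S^\star|}$. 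The only real delicate step I expect in writing this up cleanly is the high variance case: getting the constants so that the two lower bounds on $\val(h\cup g, \calE_\ell)$ in fact cover the full range of $\activedeg_{S^\star}(\ell)/c_\ell \in [0,1]$ with slack $O(\eps)$ is what forces the specific choice of $\eps$-dependent parameters in the pre-processing step.
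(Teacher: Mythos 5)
Your proposal is correct and follows essentially the same structure as the paper's own proof: fix the iteration with $h = h^\star$, apply Corollary~\ref{corollary:maxcut:lowvariance} plus a union bound over low variance instances, invoke Lemma~\ref{lemma:maxcut_perturbationeffect} via the Perturb procedure, and for high variance instances compare the two lower bounds $c_\ell - \activedeg_{S^\star}(\ell)$ and $8\,\activedeg_{S^\star}(\ell)$ (whose max is at least $\tfrac{8}{9}c_\ell > 0.878001\,c_\ell$), with the post-processing step returning an assignment at least as good as $h\cup g$. Your added running-time discussion is a reasonable supplement, though note the small arithmetic slip: with $\tau = \eps$, $\eps_0 = \eps/2$, $\delta_0 = 1/(10k)$ one gets $\gamma = \Theta(\eps^4/k)$ rather than $\Theta(\eps^2/k)$, which doesn't affect the overall $\poly(n)$ conclusion for constant $k,\eps$.
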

\begin{proof}
Let $\alpha:=0.878001 $.
  By Corollary~\ref{corollary:maxcut:lowvariance} and a union bound, with
  probability at least $0.9$, over the choice of $g$,
  we have that for {\em every} low variance instance $\ell \in [k]$,
  $\val(h^\star \cup g, \calE_\ell) \geq (\alpha -
  4\epsilon) \cdot c_{\ell}$. Henceforth we assume that the
  assignment $g$ sampled in Step~\ref{item:alg:max-cut:rounding} of the
  algorithm is such that this event occurs. Let $h$ be the output of
  the procedure {\sc Perturb} given in Figure~\ref{fig:maxcut-perturb}
  for the input $h^\star$ and $g.$
  By Lemma~\ref{lemma:maxcut_perturbationeffect}, $h$ satisfies
\begin{enumerate}
\item\label{item:maxcut:proof:guarantees:1} For every instance $\ell \in [k]$, $\val(h\cup g,\calE_\ell) \geq (1-\nfrac{\eps}{2})\cdot \val(h^\star\cup g,\calE_\ell).$
\item \label{item:maxcut:proof:guarantees:2}For every high variance
  instance $\ell \in [k]$, $\val(h\cup g,\calE_\ell) \geq
 8 \cdot  \activedeg_{S^\star}(\ell).$
\end{enumerate}
We now show that the desired Pareto approximation behavior is achieved
when $h$ is considered as the partial assignment in
Step~\ref{item:alg:max-cut:post-process} of the algorithm. We analyze the
guarantee for low and high variance instances separately.

For any {\em low variance} instance $\ell \in [k],$ from
property~\ref{item:maxcut:proof:guarantees:1} above, we have
$\val(h\cup g,\calE_\ell) \geq (1-\nfrac{\eps}{2})\cdot
\val(h^\star\cup g,\calE_\ell)$. Since we know that $\val(h^\star \cup
g, \calE_\ell) \geq (\alpha-4\eps)\cdot c_\ell$, we
have $\val(h\cup g,\calE_\ell) \geq (\alpha-5\eps)\cdot c_\ell$.

For every high variance instance $\ell \in [k],$ since $h^\star =
f^\star|_{S^\star},$ for any $g$ we must have,
\begin{align*}
\val(h^\star \cup g, \calE_\ell) &\ge \val(f^\star, \calE_\ell) -
\activedeg_{S^\star}(\ell) \\
&\ge c_\ell - \activedeg_{S^\star}(\ell)
\end{align*}
Combining this with properties~\ref{item:maxcut:proof:guarantees:1}
and \ref{item:maxcut:proof:guarantees:2} above, we get,
\begin{align*}
\val(&h \cup g, \calE_\ell) \\
&\ge \left( 1 - \nfrac{\epsilon}{2}
\right) \cdot \max\{c_\ell - \activedeg_{S^\star}(\ell), 8 \cdot
\activedeg_{S^\star}(\ell)\} \\
&\ge \left( \alpha - \epsilon \right) \cdot c_\ell.
\end{align*}

Thus, for all instances $\ell \in [k]$, we get $\val(h \cup g) \ge
\left(\alpha-5\eps\right) \cdot c_{\ell}.$ Since we are taking
the best assignment $h\cup g$ at the end of the algorithm {\algowmaxcut}, the theorem follows.

\end{proof}
Plugging the appropriate value of $\epsilon$ in Theorem~\ref{thm:max-cut} completes the proof of $0.8780$-factor Pareto approximation (and hence min approximation) for simultaneous \maxcut for arbitrary constant $k$.

\section{Open Questions}

The main open question we would like to highlight is the question of determining optimal approximability and inapproximability results for simultaneous approximation of constraint satisfaction problems (CSPs).  In particular, it would be very interesting to develop techniques for showing nontrivial hardness of approximation in this context.

\section*{Acknowledgement}
We would like to thank the authors of \cite{ABG12} for making the prover code available for us. Our implementation of prover involves minor modifications of their code to suit our rounding algorithm. We also want to thank anonymous referees for helpful comments.
\bibliographystyle{alpha}
\bibliography{sim-maxcut}

\appendix

\section{Deferred Proofs}
\label{section:proofclaim}

\subsection{Proof of Claim~\ref{claim:orthogonal_vec_mi}}
We need following bounds on the gaussian random variables.
\begin{claim}
\label{claim:gauss_large}
For all $x>0$, $\Pr_{g\sim \mathcal{N}(0,1)} [ |g|>x] \leq e^{\nfrac{-x^2}{2}}.$
\end{claim}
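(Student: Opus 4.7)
The plan is to prove this standard Gaussian tail bound by showing that the function $G(x)\,e^{x^{2}/2}$ is non-increasing on $[0,\infty)$, where $G(x) := \Pr_{g \sim \mathcal{N}(0,1)}[|g|>x]$. Since $G(0)\cdot e^{0} = 1$, non-increasingness will immediately give $G(x) \le e^{-x^{2}/2}$ for every $x \ge 0$, and in particular for every $x>0$ as stated.

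First I would express $G(x) = 2\int_{x}^{\infty}\phi(t)\,dt$ where $\phi(t) = \frac{1}{\sqrt{2\pi}}e^{-t^{2}/2}$ is the standard Gaussian density, and compute
$$
\frac{d}{dx}\bigl[G(x)\,e^{x^{2}/2}\bigr] \;=\; e^{x^{2}/2}\bigl[\,x\,G(x) \;-\; 2\phi(x)\bigr].
$$
So the monotonicity claim reduces to the pointwise inequality $x\,G(x) \le 2\phi(x)$ for all $x \ge 0$.

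The key step — which is the main (very mild) obstacle — is this pointwise inequality. It follows from the fact that $t\mapsto t\phi(t)$ is $-\phi'(t)$, so
$$
x\,G(x) \;=\; 2x\!\int_{x}^{\infty}\!\phi(t)\,dt \;\le\; 2\!\int_{x}^{\infty}\! t\,\phi(t)\,dt \;=\; -2\bigl[\phi(t)\bigr]_{x}^{\infty} \;=\; 2\phi(x),
$$
where the inequality uses $t \ge x$ on the domain of integration. Combining this with the derivative calculation establishes that $G(x)\,e^{x^{2}/2}$ is non-increasing, and since the value at $x=0$ is exactly $1$, we conclude $G(x) \le e^{-x^{2}/2}$ as desired. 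An alternative, essentially equivalent route would be to define $f(x) = e^{-x^{2}/2} - G(x)$, note $f(0)=0$ and $\lim_{x\to\infty} f(x)=0$, and use $f'(x) = e^{-x^{2}/2}(\sqrt{2/\pi}-x)$ to show that $f$ is first non-negative-increasing then decreasing back to $0$, but the monotonicity-of-$G(x)e^{x^{2}/2}$ proof above is the most compact.
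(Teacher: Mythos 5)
Your proof is correct. The paper states this as a standard Gaussian tail bound without giving a proof, so there is nothing to compare against. Your argument — showing $G(x)e^{x^2/2}$ is non-increasing via the pointwise inequality $xG(x) \le 2\phi(x)$, and then using $G(0)=1$ — is a clean and valid way to establish the two-sided tail bound without the usual extraneous factor of $2$.
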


\begin{claim}
\label{claim:gauss_small}
For all $1>x>0$, $\Pr_{g\sim \mathcal{N}(0,1)} [ |g|<x] \leq x.$
\end{claim}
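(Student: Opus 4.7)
The plan is to bound the standard Gaussian density pointwise by its maximum value and integrate. Recall that the density is $\phi(t) = \frac{1}{\sqrt{2\pi}} e^{-t^2/2}$, which attains its maximum at $t=0$, where $\phi(0) = \frac{1}{\sqrt{2\pi}}$. So for every $t \in \mathbb{R}$, we have $\phi(t) \leq \frac{1}{\sqrt{2\pi}}$.

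First I would write the probability as an integral of the density:
$$\Pr_{g \sim \mathcal{N}(0,1)}[|g| < x] = \int_{-x}^{x} \phi(t)\, dt.$$
Using the pointwise upper bound $\phi(t) \leq \frac{1}{\sqrt{2\pi}}$ over the interval $[-x, x]$ of length $2x$, the integral is at most $\frac{2x}{\sqrt{2\pi}} = x\sqrt{2/\pi}$.

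Since $\sqrt{2/\pi} \approx 0.7979 < 1$, this immediately yields $\Pr[|g| < x] \leq x$, as claimed. (Indeed, the bound holds without the restriction $x < 1$; this hypothesis in the claim is cosmetic.) There is no real obstacle here: the estimate is one line once the pointwise density bound is observed. An equally short alternative would be to use symmetry, writing $\Pr[|g| < x] = 2\Pr[0 \leq g < x] \leq 2x \cdot \phi(0)$, arriving at the same bound.
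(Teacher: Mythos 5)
Your proof is correct, and it is the standard one-line argument: bound the density by its maximum $\phi(0)=1/\sqrt{2\pi}$ and integrate over $[-x,x]$, giving $\Pr[|g|<x]\leq 2x/\sqrt{2\pi}<x$. The paper states this claim without proof (as a standard Gaussian fact), so there is nothing to compare against; your observation that the hypothesis $x<1$ is unnecessary is also accurate.
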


\paragraph{Random process $\calP$:} Let $\v w_1, \v w_2, \v w_3, \v w_4\in \R^4$ be unit vectors  and  $\mu_1, \mu_2, \mu_3, \mu_4$ be any  real numbers. Consider the following random variables $(y_1, y_2, y_3, y_4)$ where $y_i\in \{-1,+1\}$ which are sampled as follows: Pick a random vector $\v{g}:=(g_1, g_2, g_3, g_4)\in \R^4$ with each entry distributed as $\mathcal{N}(0,1)$.  Set
\begin{align*}
y_i &= -1 \hspace{20pt}\mbox{if $\langle \v g, \v w_i\rangle \leq \mu_i$,} \\
&= +1  \hspace{20pt}\mbox{otherwise.}
\end{align*}
The following lemmas gives sufficient conditions when $I(y_1, y_2 ; y_3, y_4)$ is {\em small}.

\begin{lemma}
\label{lemma:closeness}
Suppose $|\langle\v  w_i,\v w_j\rangle | \leq \delta$ for all $i, j\in [4]$, $i\neq j$ and $y_i s$ are sampled according to the random process $\calP$, then for all $\v b\in \{-1,+1\}^4$, we have
$$\left|\Pr[(y_1, y_2, y_3, y_4) = \v b] - \prod_{1\leq i\leq 4} \Pr[y_i = b_i] \right|=  O(\delta^{\nfrac{1}{4}}),$$
In fact, the joint distribution on any subset of variables is close to its product distribution {\em pointwise} with an additive error of at most $O(\delta^{\nfrac{1}{4}})$.
\end{lemma}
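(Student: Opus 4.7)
The plan is to approximate the four nearly-orthogonal unit vectors $\v{w_1},\ldots,\v{w_4}$ by a genuine orthonormal system and exploit rotation invariance of the standard Gaussian to reduce to the case of independent samples. Applying Gram--Schmidt to $\v{w_1},\ldots,\v{w_4}$ inside $\R^4$ produces orthonormal vectors $\hat{\v{u}}_1,\ldots,\hat{\v{u}}_4$ satisfying $\|\v{w_i}-\hat{\v{u}}_i\|=O(\delta)$, which is a routine inductive calculation using only the pairwise bound $|\langle \v{w_i},\v{w_j}\rangle|\le\delta$. This upgrades the covariance-level perturbation into pointwise closeness of the corresponding linear functionals in $\R^4$.

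Next, I would introduce the coupled random variables $\xi_i=\langle \v g,\v{w_i}\rangle$ and $\tilde\xi_i=\langle \v g,\hat{\v{u}}_i\rangle$ from the same sample $\v g$. Each $\xi_i$ and each $\tilde\xi_i$ is marginally $\mathcal{N}(0,1)$, the $\tilde\xi_i$ are mutually independent (by orthonormality), and the difference $\xi_i-\tilde\xi_i=\langle \v g,\v{w_i}-\hat{\v{u}}_i\rangle$ is a centred Gaussian with standard deviation $O(\delta)$. Setting $\tilde y_i=\mathrm{sgn}(\tilde\xi_i-\mu_i)$, the joint law of $(\tilde y_1,\ldots,\tilde y_4)$ is exactly the product of its marginals, and those marginals coincide with those of $(y_1,\ldots,y_4)$. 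Hence, pointwise in $\v b$,
\[
\Bigl|\Pr[(y_1,\ldots,y_4)=\v b] - \prod_i \Pr[y_i=b_i]\Bigr| \le \Pr[\exists\, i:\ y_i\neq \tilde y_i].
\]

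For each coordinate $i$, the event $y_i\neq \tilde y_i$ forces $\mu_i$ to lie between $\xi_i$ and $\tilde\xi_i$, so with a threshold $\eta>0$ I would decompose
\[
\Pr[y_i\neq \tilde y_i] \le \Pr\bigl[|\xi_i-\tilde\xi_i|\ge \eta\bigr] + \Pr\bigl[|\xi_i-\mu_i|\le \eta\bigr].
\]
The first term is bounded by $\exp(-\Omega(\eta^2/\delta^2))$ via Claim~\ref{claim:gauss_large} applied to the standardized Gaussian $(\xi_i-\tilde\xi_i)/\|\v{w_i}-\hat{\v{u}}_i\|$, and the second by $O(\eta)$ via Claim~\ref{claim:gauss_small} applied to the standard Gaussian $\xi_i-\mu_i$. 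Picking $\eta=\delta^{1/4}$ makes the exponential tail super-polynomially small in $\delta$, leaving the $\eta$ contribution to dominate; a union bound over the four coordinates yields the claimed $O(\delta^{1/4})$ pointwise bound. The ``in fact'' strengthening on arbitrary sub-marginals follows by summing the pointwise estimate over the unrestricted coordinates (constantly many terms, constant multiplicative loss), or equivalently by rerunning the same coupling with Gram--Schmidt applied to the relevant sub-collection.

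The one spot I expect to require care is the Gram--Schmidt error propagation: I must verify that four successive orthonormalizations of unit vectors with pairwise inner products bounded by $\delta$ accumulate errors only by an absolute constant factor, so that $\|\v{w_i}-\hat{\v{u}}_i\|=O(\delta)$ holds rather than some weaker power of $\delta$. This reduces to tracking that each coefficient in the orthogonalization is itself $O(\delta)$ and that normalization introduces only a further $1+O(\delta^2)$ factor; the calculation is elementary since the dimension is a fixed constant, but it is the only nontrivial bookkeeping hiding behind the argument. Everything else is a clean combination of Gaussian tail bounds with anticoncentration against a coupling.
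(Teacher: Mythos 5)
Your proof is correct, and it takes a genuinely different route from the paper's. The paper rotates coordinates so that each $\v w_i$ is within $O(\delta)$ of the standard basis vector $\v e_i$, writes $\v w_i = \sqrt{1-\delta_i}\,\v e_i + \sqrt{\delta_i}\,\v\eta_i$, conditions on the low-probability event that some $|\langle\v g,\v\eta_i\rangle|$ exceeds $\delta^{-1/4}$, and then carries out an explicit three-way case analysis on the magnitude of each $\mu_i$ to compare $\prod_i \Pr[g_i \le \mu_i + O(\delta^{1/4})]$ against $\prod_i \Pr[g_i \le \mu_i]$. Your argument packages the same perturbation into a coupling: Gram--Schmidt gives an orthonormal system $(\hat{\v u}_i)$ with $\|\v w_i - \hat{\v u}_i\| = O(\delta)$, the companion signs $\tilde y_i = \operatorname{sgn}(\langle\v g,\hat{\v u}_i\rangle - \mu_i)$ are exactly independent with the same marginals as $y_i$, and the total-variation inequality reduces everything to bounding $\sum_i \Pr[y_i \ne \tilde y_i]$ by a Gaussian tail term plus an anticoncentration term. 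This entirely sidesteps the case analysis on $\mu_i$: the anticoncentration bound $\Pr[|\xi_i - \mu_i| \le \eta] \le O(\eta)$ holds uniformly in $\mu_i$ because the shifted Gaussian density is still bounded by $1/\sqrt{2\pi}$, which is the analytic fact the paper's case analysis is implicitly rediscovering. The one small mismatch with the paper's toolkit is that Claim~\ref{claim:gauss_small} is stated for a centered Gaussian, so you need the (trivial) shifted version; and you should note that for the coupling to be well-defined you need the $\v w_i$ to span a four-dimensional space, which holds for $\delta < 1/3$ by Gershgorin and is harmless since the claimed bound is vacuous for $\delta$ bounded away from $0$. With those footnotes, your proof is complete and, to my eye, cleaner and more modular than the paper's.
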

\begin{proof}
Assume that $0 < \delta < \nfrac{1}{100}$ (otherwise, the lemma is trivial). Let $\v e_i$ is a unit vector with $1$ in the $i^\text{th}$ coordinate. By rotational symmetry, we can assume that $\langle \v{w}_i, \v{e}_i\rangle \geq 1-20\delta$ for all $i$. We can write vector $\v w_i = \sqrt{1-\delta_i}\v e_i + \sqrt{\delta_i}\v \eta_i$ where $\v \eta_i$ is a unit vector orthogonal to $\v e_i$. The conditions on inner products therefore imply  each $\delta_i < 40\delta$. We will prove the lemma for $\v b  = (-1,-1,-1,-1)$ (all other cases are similar). We have,
\begin{align*}
\Pr[ y_i =-1, \forall i\in [4]] &= \Pr[ \forall i,  \langle \v g, \v w_i \rangle \leq \mu_i]\\
& = \Pr[ \forall i, \sqrt{1-\delta_i}g_i + \sqrt{\delta_i}\langle\v g , \v \eta_i\rangle\leq \mu_i]
\end{align*}
Let $B$ be the following event,

\noindent $B :$ There exists $1\leq i\leq 4$, such that $|\langle \v g, \v \eta_i\rangle| \geq \nfrac{1}{\delta^{\nfrac{1}{4}}}$.

By union bound,
\begin{align*}
\Pr[B] &= \sum_i \Pr[|\langle \v g, \v \eta_i\rangle| \geq \nfrac{1}{\delta^{\nfrac{1}{4}}}] \\
&\leq 4\cdot  \Pr[|\langle \v g, \v \eta_1\rangle| \geq \nfrac{1}{\delta^{\nfrac{1}{4}}}] \\
&= 4\cdot \Pr_{g\sim \mathcal{N}(0,1)}[|g| \geq \nfrac{1}{\delta^{\nfrac{1}{4}}}]\\
&\leq 4e^{-\frac{1}{2\sqrt{\delta}}},
\end{align*}
where last inequality uses Claim~\ref{claim:gauss_large}. Now,
\begin{align}
\Pr[ y_i =-1, \forall 1\leq i\in [4]] & = \Pr[B]\cdot \Pr[y_i =-1, \forall i\in [4] | B]+ \Pr[\overline{B}]\cdot \Pr[y_i =-1, \forall i\in [4] | \overline{B}] \nonumber\\
& \leq 4e^{-\frac{1}{2\sqrt{\delta}}} \cdot 1 + \Pr[y_i =-1, \forall i\in [4] | \overline{B}], \label{eq:close1}
\end{align}
 We now estimate the probability conditioned on event $\overline{B}$.
\begin{align}
 \Pr[y_i =-1, \forall i\in [4] | \overline{B}]  & =  \Pr[ \forall i, \sqrt{1-\delta_i} g_i + \sqrt{\delta_i}\langle\v g , \v \eta_i\rangle\leq \mu_i | \overline{B}]\nonumber\\
&\leq  \Pr[ \forall i, \sqrt{1-\delta_i}g_i \leq \mu_i + \sqrt{\delta_i} \cdot \frac{1}{\delta^{\nfrac{1}{4}}}]\tag*{($g_i$ independent)}\\
& =\prod_i \Pr[\sqrt{1-\delta_i} g_i \leq \mu_i + \sqrt{\delta_i} \cdot \frac{1}{\delta^{\nfrac{1}{4}}}]\tag*{(using $\delta_i \leq 40\delta$)}\\
&\leq \prod_i \Pr[\sqrt{1-\delta_i} g_i \leq \mu_i + \sqrt{40}\delta^{\nfrac{1}{4}} \tag*{(using $\delta_i \leq \nfrac{1}{2}$)}\\
&\leq \prod_i \Pr[g_i \leq (1+\delta_i)(\mu_i + \sqrt{40}\delta^{\nfrac{1}{4}} ) ]\tag*{(using $\delta_i \leq \nfrac{1}{2}$)}\\
&\leq \prod_i \Pr[g_i \leq \mu_i + \delta_i \mu_i + \nfrac{3}{2} \cdot  \sqrt{40}\delta^{\nfrac{1}{4}} ) ]\nonumber\\
&\leq \prod_i \Pr[g_i \leq (\mu_i +\delta_i \mu_i +  15\delta^{\nfrac{1}{4}} ) ].\nonumber
\end{align}
We now analyse the above probability in cases, and show the following:
\begin{align}
\Pr[g_i \leq \mu_i + \delta_i \mu_i + 15 \delta^{\nfrac{1}{4}}& )] \leq
\prod_i \Pr[g_i \leq \mu_i] + O(\delta^{\nfrac{1}{4}}) \label{eq:close2}
\end{align}
Notice that
\begin{align}
\prod_i \Pr[g_i \leq \mu_i + c\delta^{\nfrac{1}{4}} ] &\leq  \prod_i \Pr[ g_i \leq \mu_i]  +\Pr [ |g_i| \leq  c\delta^{\nfrac{1}{4}}]\nonumber\\
(\mbox{from Claim ~\ref{claim:gauss_small}})\hspace{5pt}& \leq \left( \prod_{1\leq i\leq 4} \Pr[y_i = b_i] + c\delta^{\nfrac{1}{4}}\right)\nonumber\\
& \leq  \prod_{1\leq i\leq 4} \Pr[y_i = b_i]  + O(\delta^{\nfrac{1}{4}}) \label{eq:close3}
\end{align}

\begin{itemize}
\item Case 1: $\mu_i < 0$.\\
In this case, we can directly say the following.
\begin{align*}
\prod_i \Pr[ g_i \leq \mu_i +& \delta_i \mu_i + 15\delta^{\nfrac{1}{4}})] \leq \prod_i \Pr[g_i \leq \mu_i + 15\delta^{\nfrac{1}{4}}].
\end{align*}
\item Case 2: $0\leq \mu_i \leq \frac{10}{\delta^{\nfrac{3}{4}}}$
We can say the following because $\delta_i < 40\delta$.
\begin{align*}
\prod_i \Pr[ g_i \leq \mu_i + \delta_i \mu_i &+ 15\delta^{\nfrac{1}{4}}] \leq \prod_i \Pr[g_i \leq \mu_i + O(\delta^{\nfrac{1}{4}}) ]
\end{align*}
\item Case 3: $\mu_i > \frac{10}{\delta^{3/4}}$
In this case, since $\mu_i$ is large, we have the following from Claim~\ref{claim:gauss_large}.

\[\prod_i \Pr[g_i \leq \mu_i] \geq 1 - o(\delta^{\nfrac{1}{4}})\]
Therefore,
\begin{align*}
\prod_i \Pr[g_i \leq \mu_i+ \delta_i \mu_i + 15\delta^{\nfrac{1}{4}}] \leq 1\leq \prod_i \Pr[g_i \leq \mu_i] + o(\delta^{\nfrac{1}{4}})
\end{align*}
\end{itemize}
Form (\ref{eq:close1}), (\ref{eq:close2}) and (\ref{eq:close3}) we get
$$\Pr[(y_1, y_2, y_3, y_4) = \v b] - \prod_{1\leq i\leq 4} \Pr[y_i = b_i]  \leq  O(\delta^{\nfrac{1}{4}}).$$
The other direction can be shown in an analogous way.
\end{proof}

 We can now bound the Mutual information between $(y_1, y_2)$ and $(y_3, y_4)$  if the vectors $\v w_i$ satisfy the condition from Lemma~\ref{lemma:closeness}

\begin{lemma}
\label{lemma:mi_small_suff_cond}
Suppose $|\langle\v  w_i,\v w_j\rangle | \leq \delta$ for all $i, j\in [4]$ and $i\neq j$, then $I( (y_1, y_2) ; (y_3, y_4) ) \leq \polyltone(\delta)$, where $y_i$ are sampled according to the random process $\calP$.
\end{lemma}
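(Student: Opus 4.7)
The plan is to leverage Lemma~\ref{lemma:closeness} to reduce the mutual information bound to a short term-by-term calculation. Set $X = (y_1, y_2)$ and $Y = (y_3, y_4)$; applying Lemma~\ref{lemma:closeness} to the four vectors $\v w_1, \v w_2, \v w_3, \v w_4$ (whose pairwise inner products are all at most $\delta$ by hypothesis) yields the pointwise closeness bound $|p_{XY}(x,y) - p_X(x)\,p_Y(y)| \leq O(\delta^{1/4})$ for every $(x,y) \in \{-1,+1\}^2 \times \{-1,+1\}^2$. Since the joint support has at most $16$ atoms, it suffices to bound each term of
\[
I(X;Y) \;=\; \sum_{x,y} p_{XY}(x,y)\,\log\!\frac{p_{XY}(x,y)}{p_X(x)\,p_Y(y)}
\]
by $\polyltone(\delta)$.

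I would split each term based on whether $p_X(x)\,p_Y(y)$ is above or below the threshold $\tau := \delta^{1/8}$. In the \emph{large marginal} regime $p_X(x)\,p_Y(y) \geq \tau$, the closeness bound gives $\bigl|p_{XY}(x,y)/(p_X(x)p_Y(y)) - 1\bigr| \leq O(\delta^{1/4})/\delta^{1/8} = O(\delta^{1/8})$, so the logarithm is $O(\delta^{1/8})$ and the term contributes $O(\delta^{1/8})$. In the \emph{small marginal} regime $p_X(x)\,p_Y(y) < \tau$, first observe that closeness forces $p_{XY}(x,y) \leq \tau + O(\delta^{1/4}) = O(\delta^{1/8})$. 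Next, apply the elementary inequality $p_X(x)\,p_Y(y) \geq p_{XY}(x,y)^2$, which follows directly from $p_X(x), p_Y(y) \geq p_{XY}(x,y)$; this gives $\log(p_{XY}(x,y)/(p_X(x)p_Y(y))) \leq \log(1/p_{XY}(x,y))$, so the term is bounded by $p_{XY}(x,y)\,\log(1/p_{XY}(x,y)) = O(\delta^{1/8}\log(1/\delta))$. Summing the $16$ contributions yields $I(X;Y) = O(\delta^{1/8}\log(1/\delta))$, which is dominated by (say) $\delta^{1/9}$ for sufficiently small $\delta$ and is therefore $\polyltone(\delta)$.

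The only delicate point in this plan is the small-marginal case, because pointwise additive closeness does not directly control the ratio inside the logarithm when the product of marginals is near zero. The resolution is the observation $p_{XY}(x,y) \leq \min(p_X(x),p_Y(y))$, which forces $p_{XY}(x,y)$ to be small whenever a marginal is small and allows the unbounded log-ratio to be replaced by $\log(1/p_{XY}(x,y))$; after this substitution the contribution is absorbed by the classical fact $x\log(1/x)\to 0$ as $x\to 0$. No new geometric or Gaussian-anticoncentration estimates are needed beyond Lemma~\ref{lemma:closeness}; the rest is purely an information-theoretic computation on distributions over a $16$-point sample space.
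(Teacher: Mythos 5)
Your proof is correct, and it takes a genuinely different route from the paper. The paper first carves out the case where every $y_i$ has bias bounded away from $\pm 1$ (specifically $\Pr[y_i=-1]\in[\delta^{1/100},1-\delta^{1/100}]$), so that the product of marginals is at least $\delta^{1/25}$ and the additive $O(\delta^{1/4})$ closeness from Lemma~\ref{lemma:closeness} translates into a multiplicative bound on the ratio inside the log; it then handles the heavily biased case by a recursive peeling argument, using $H(y_i)\leq\polyltone(\delta)$ for a heavily biased $y_i$ together with a chain-rule decomposition as in the inline derivation~(\ref{eq:mi_onebiased}). You instead keep all four variables together and split the sixteen terms of the mutual-information sum by whether $p_X(x)\,p_Y(y)$ exceeds a threshold $\delta^{1/8}$: above threshold the additive closeness controls the log-ratio directly, while below threshold you use the elementary pointwise bound $p_{XY}(x,y)\leq\min(p_X(x),p_Y(y))$ to get $p_{XY}\leq\sqrt{p_X p_Y}$ and hence replace the potentially unbounded log-ratio with $\log(1/p_{XY})$, which is then absorbed via $p\log(1/p)\to 0$. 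Your route avoids the peeling altogether and handles the high-bias atoms and the low-bias atoms uniformly, with the only input being the closeness lemma plus a one-line inequality; the paper's route is slightly longer and leaves the peeling step informal, but it makes the role of the individual biases $\mu_i$ visible, which matches the structure used elsewhere in the paper (Lemma~\ref{lemma:low_mi_rounded}). The exponents you quote ($O(\delta^{1/8}\log(1/\delta))$, absorbed into $\delta^{1/9}$) differ from the paper's $O(\delta^{1/10})$, but both are $\polyltone(\delta)$, which is all that is needed.
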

\begin{proof}
The lemma follows from Lemma~\ref{lemma:closeness} as the distribution is {\em close} to the product distribution.

To formally prove the lemma, first we assume that each of the random variables $y_i$ is not heavily biased i.e.   $\Pr[y_i = -1] \in [\delta^{1/100}, 1-\delta^{1/100}]$.
Using the definition of mutual information,

\begin{align}
I(& (y_1, y_2) ; (y_3, y_4) ) =\hspace{-10pt}\sum_{\substack{b_1, b_2, b_3, b_4 \\ \{-1+1\}}} \Bigg[[\Pr[\v y = \v b]\cdot \log \frac{ \Pr[\v y = \v b]}{ \Pr[ (y_1, y_2) = (b_1, b_2)] \cdot \Pr[ (y_3, y_4) = (b_3, b_4)]}\Bigg] \label{eq:mi_expansion}
\end{align}
Form Lemma~\ref{lemma:closeness}, we have
\begin{align*}
\Pr[ (y_1, y_2) = (b_1, b_2)]  &\geq \Pr[y_1=b_1]\Pr[y_2=b_2] -O(\delta^{\nfrac{1}{4}})\\
\Pr[ (y_3, y_4) = (b_3, b_4)]  &\geq \Pr[y_3=b_3]\Pr[y_4=b_4] -O(\delta^{\nfrac{1}{4}})
\end{align*}
Plugging any simplifying in (\ref{eq:mi_expansion}), we get
\begin{align*}
I( (y_1, y_2) ; (y_3, y_4) ) \leq  {\sum_{b_1, b_2, b_3, b_4 \{-1+1\}} \Pr[\v y = \v b]\cdot \log \frac{ \prod_{1\leq i\leq 4} \Pr[y_i = b_i] +O(\delta^{\nfrac{1}{4}})}{ \prod_{1\leq i\leq 4} \Pr[y_i = b_i] -O(\delta^{\nfrac{1}{4}})}}
\end{align*}
As each variable is not heavily biased, we have $\prod_{1\leq i\leq 4} \Pr[y_i = b_i] \geq \delta^{1/25}$ and hence the log in the above expression can be upper bounded by $\log \frac{ \delta^{1/25} + O(\delta^{\nfrac{1}{4}})}{ \delta^{1/25}- O(\delta^{\nfrac{1}{4}})}$ which is at most $\log (1+O(\delta^{1/10})) \leq O(\delta^{1/10})$. Hence we have
$$ I( (y_1, y_2) ; (y_3, y_4) ) \leq  O(\delta^{1/10}).$$
If a variable is heavily biased, suppose say $y_1$ has large bias, then we can claim $I( (y_1, y_2) ; (y_3, y_4) ) \leq \polyltone(\delta) +I(  y_2 ; (y_3, y_4) ) $ using derivation similar to (~\ref{eq:mi_onebiased}) and then proceed by upper bounding $I(  y_2 ; (y_3, y_4) )$ in a similar fashion as above.
\end{proof}

\noindent {\bf Proof of Claim~\ref{claim:orthogonal_vec_mi}:} The proof follows from Lemma~\ref{lemma:mi_small_suff_cond} noting the fact that the upper bound is independent of $\mu_i$.
\subsection{Proof of Lemma~\ref{lemma:max-cut:uvar_lmean_relation}}
\begin{proof}
Item 1 of the lemma follows from Chebyshev's inequality. We now focus on the proof of Item 2.
We have
\begin{align*}
\varest_\ell &\ge \delta_0\epsilon_0^2 \cdot \meanest_\ell^2\\
\Rightarrow \sum_{e \sim_{S} e'} \calE_\ell(e) \calE_\ell(e') &\ge  \delta_0\epsilon_0^2 \cdot \meanest_\ell^2
\end{align*}
Let $e_0$ be an edge in $\Act(S)$ that maximizes $\sum_{e \sim_{S} e_0} \calE_\ell(e)$. We can now upper bound the expression on the left as follows
$$\sum_{e \sim_{S} e'} \calE_\ell(e) \calE_\ell(e') \leq \sum_{e \sim_{S} e_0} \calE_\ell(e) \cdot \sum_{e\in\Act(S)} \calE_\ell(e).$$
Therefore, we have
\begin{align*}
\sum_{e \sim_{S} e_0} \calE_\ell(e) \cdot \sum_{e\in\Act(S)} \calE_\ell(e) &\ge  \delta_0\epsilon_0^2 \cdot \meanest_\ell^2 \\
&\ge  \delta_0\epsilon_0^2 \cdot \tau^2 \cdot \left(\sum_{e \in \Act(S)} \calE_\ell(e)\right)^2\\
\Rightarrow \sum_{e \sim_{S} e_0} \calE_\ell(e) &\ge  \delta_0\epsilon_0^2 \cdot \tau^2 \cdot \sum_{e \in \Act(S)} \calE_\ell(e)
\end{align*}
Let $v$ be the end vertex of $e_0$ that has greater weight of active edges adjacent to it, $v \in V\setminus S$.
We can say the following
$$\activedeg_S(v, \ell) \ge \frac{1}{2} \cdot \delta_0\epsilon_0^2 \cdot \tau^2 \cdot \sum_{e \in \Act(S)} \calE_\ell(e).$$
From the definition of $\activedeg_S(\ell)$, we can say the following
$$\activedeg_S(\ell) \le 2 \cdot \sum_{e \in \Act(S)} \calE_\ell(e),$$
as each edge could contribute at most twice to the sum, once for each end vertex. This gives us the following required result.
$$\activedeg_S(v, \ell) \ge \frac{1}{4} \cdot \delta_0\epsilon_0^2 \cdot \tau^2 \cdot \activedeg_S(\ell).$$
\end{proof}

\end{document}